\documentclass[11pt]{article}
\usepackage{geometry}
\geometry{a4paper}
\usepackage[utf8]{inputenc}
\usepackage{graphics}
\usepackage{amsmath}
\usepackage{amssymb}
\newtheorem{theorem}{Theorem}

\newtheorem{definition}{Definition}
\newtheorem{remark}{Remark}
\newtheorem{example}{Example}
\newtheorem{corollary}{Corollary}
\newtheorem{assumption}{Assumption}
\newtheorem{proposition}{Proposition}
\newenvironment{proof}{{\noindent\it Proof}\quad}{\hfill $\square$\par}
\usepackage{mathrsfs}
\usepackage{natbib}
\usepackage{graphicx}
\usepackage{float} 
\usepackage{bbding}
\usepackage{color}
\usepackage{algorithm}
\usepackage{algpseudocode}
\usepackage{makecell}
\usepackage{subfig}
\usepackage[colorlinks, linkcolor=red, anchorcolor=blue,citecolor=green]{hyperref}

\title{Joint Pricing in SPX and VIX Derivative Markets with Composite Change of Time Models}

\author{Liexin Cheng\textsuperscript{a}\textsuperscript{b} and Xue Cheng\textsuperscript{a}\textsuperscript{b} and Xianhua Peng\textsuperscript{c}}

\begin{document}

\maketitle

\vspace{-5pt}
\begin{center}
    \textsuperscript{a} School of Mathematical Science, Peking University, China\\
    \textsuperscript{b} Center for Statistical Science \& LMEQF, Peking University, China\\
    \textsuperscript{c} HSBC Business School, Peking University, China 
\end{center}
\begin{abstract}
The Chicago Board Options Exchange Volatility Index (VIX) is calculated from SPX options and derivatives of VIX are also traded in market, which leads to the so-called ``consistent modeling" problem. This paper proposes a time-changed Lévy model for log price with a composite change of time structure to capture both features of the implied SPX volatility and the implied volatility of volatility. Consistent modeling is achieved naturally via flexible choices of jumps and leverage effects, as well as the composition of time changes. Many celebrated models are covered as special cases. From this model, we derive an explicit form of the characteristic function for the asset price (SPX) and the pricing formula for European options as well as VIX options. The empirical results indicate great competence of the proposed model in the problem of joint calibration of the SPX/VIX Markets.
    
    ~\\\textbf{Keywords:} Time change; L\'{e}vy process; Option pricing; Consistent Modeling
\end{abstract}

\tableofcontents
\newpage
\section{Introduction}
\subsection{Consistent Modeling Problem}
By the definition from CBOE, Volatility Index (VIX), as an indicator of implied volatility of $X_t: = \ln \frac{e^{-rt}S_t}{S_0}$ in the following $30$ days, is given as 
$$
\label{VIX}
     \text{VIX}_t^2  = - \frac{2}{\bar \tau} E^\mathbb{P}\left[ \ln \frac{e^{-r \bar \tau}S_{t+\bar \tau}}{S_t} \,\middle\vert\,\mathcal{F}_t^X \right] \times 100^2,
$$
where $\bar \tau = 30/365$, $\mathbb P$ is the risk neutral measure of the equity market, and $X$ is adapted to $\{\mathcal{F}_t^X\}_{t\ge 0}$. Likewise, we have the formula of VVIX, the volatility of volatility computed from VIX market:
$$
\begin{aligned}
     \text{VVIX}_t^2  &=  - \frac{2}{\bar \tau} E^\mathbb{P}\left[\ln \frac{e^{-r \bar \tau}\text{VIX}_{t+\bar \tau}}{\text{VIX}_t} \,\middle\vert\, \mathcal{F}_t^X\right] \times 100 ^2.
\end{aligned}
$$
In the following, the multiplier $100^2$ is neglected for brevity. It is important to note that both definitions hold if we believe that measure $\mathbb P$ is risk-neutral in both SPX option market and the VIX market. That is, the two markets can be consistently modeled. The problem of joint calibration to SPX market and VIX market is closely related to the calibration of the dynamics of $\text{VIX}$ and $\text{VVIX}$, which can be considered as volatility and volatility of volatility respectively. 

To understand the essence of the joint calibration problem, we assume that the log return process $X_t$ is a continuous semimartingale such that $\mathrm{d}\langle X\rangle_t= v_t \mathrm{d}t$ and that $v$ is positive and Markovian.
Thus, we have
\begin{equation}
    \label{arg1}
    \operatorname{VIX}_t = \sqrt{\frac{1}{\bar \tau}E^{\mathbb{P}}\left[\langle X\rangle_{t+\bar \tau}-\langle X\rangle_t\,\middle\vert\, \mathcal{F}_t^X\right]}
    = :f_1(v_t),\end{equation}
where $f_1(\cdot)$ is a function determined by the model. The assumption incorporates classic SVMs including Heston and 3/2 volatility model. Furthermore, VVIX takes the form
$$\operatorname{VVIX}_t = \sqrt{-\frac{2}{\bar \tau}E^{\mathbb{P}}\left[\ln \frac{e^{-r\bar\tau} f_1(v_{t+\bar \tau})}{f_1(v_t)}\,\middle\vert\,\mathcal{F}_t^X\right]} =: g_1(v_t),$$
for some function $g_1(\cdot)$. This implies that VVIX and VIX are coupled through the volatility factor $v$ and such a restrictive expression is expected to underperform in joint calibration, since the market does not conform to a simple relationship, as shown in Figure \ref{VIX VVIX} where the dynamic relationship of VIX and VVIX varies frequently.

\begin{figure}
 \begin{minipage}{0.49\linewidth}
 \centering
 \includegraphics[width=\linewidth]{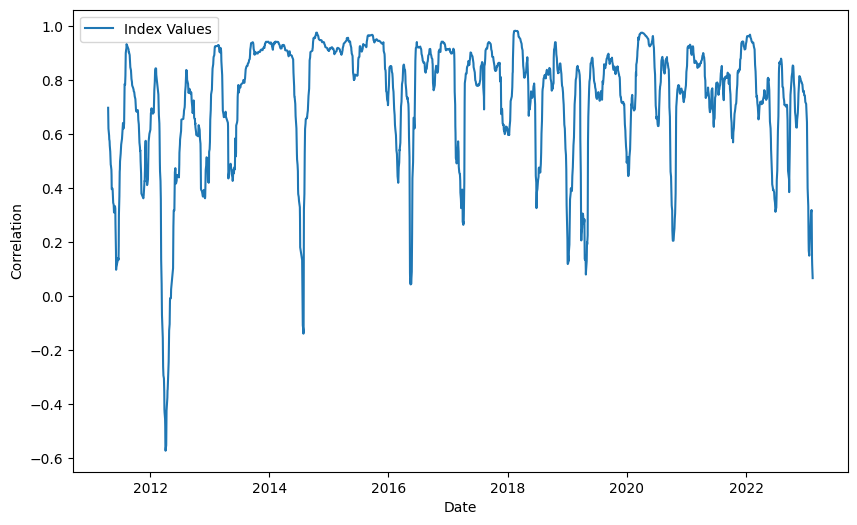}
 \end{minipage}
  \begin{minipage}{0.49\linewidth}
 \centering
 \includegraphics[width=\linewidth]{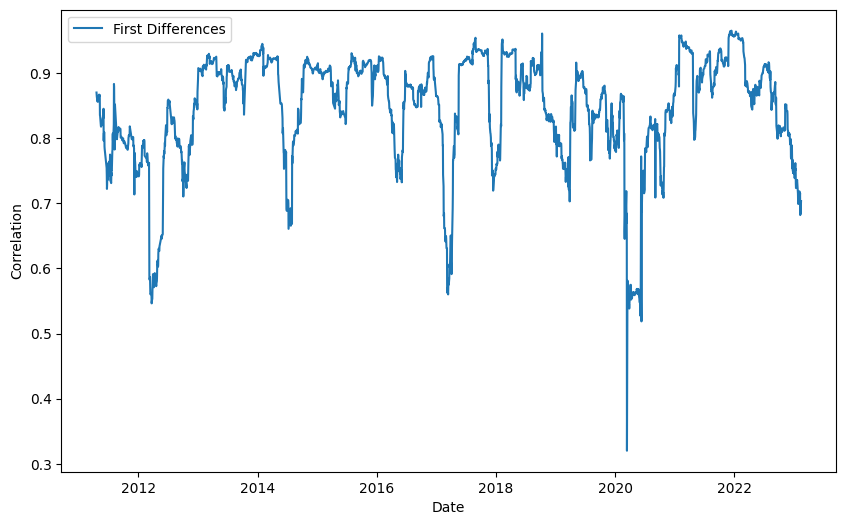}
\end{minipage}
\caption{Correlation of the VVIX and VIX from January 2011 to February 2023\label{VIX VVIX}}{The Pearson correlation coefficients of index values (left) and their first differences (right) are computed in 60-day rolling windows with means 0.7343 (left) and 0.8442 (right) and standard deviation 0.23 (left) and 0.09 (right).}
\end{figure}

There has been a lot of research on the joint calibration problem, where various explorations of decoupling volatility and vol-of-vol are proposed. One line of work is aimed at extending the widely used stochastic volatility models (SVMs) to have more realistic and flexible vol-of-vol functionals. One such example is the 3/2 model applied in \cite{drimus2012options} and in \cite{baldeaux2014consistent}. The former shows that a pure-diffusion 3/2 model reproduces upward-sloping implied volatilities of options on realized variance, and the latter produces the sameupwward-sloping characteristics on VIX options. \cite{fouque2018heston} proposed a Heston vol-of-vol model, where the vol-of-vol consists of exogenous stochastic factors. The model is able to produce skew characteristics in both SPX and VIX option markets. \cite{jeon2021consistent} developed a multiscale stochastic volatility model, where the double mean-reverting volatility factors greatly reduce the joint calibration error. Furthermore, \cite{jacquier2018VIX}, \cite{gatheral2020quadratic}, \cite{abi2025joint} and \cite{bondi2024rough} proposed rough volatility models for VIX futures and the joint calibration of SPX and VIX options. In \cite{gatheral2020quadratic} specifically, the proposed quadratic rough Heston model is shown to produce the price-feedback (Zumbach) effect and fits well to the whole term structures of SPX and VIX options. \cite{abi2025joint} developed Gaussian polynomial volatility models and achieved strong performance in a one-factor Markovian setting. And \cite{bondi2024rough} demonstrated the effectiveness of rough volatility with self-exciting jumps in the joint calibration. Moreover, the role of jumps in stock price is also highlighted in the consistent modeling of the two markets, including \cite{lin2010consistent}, \cite{cont2013consistent}, and \cite{kokholm2015joint}. \cite{lin2010consistent} considered an affine volatility specification where the jumps in stock price and in stock price volatility are correlated. In \cite{cont2013consistent}, instead of modeling the variance process, the variance swap rate is consisdered and modeled as an affine jump-diffusion having co-jumps with the price process. \cite{kokholm2015joint} examined the joint calibration performance under affine volatility model with both jumps in return and in volatility. The paper finds that the relaxation of Feller condition improves the performance considerably, but is still not satisfactory. \cite{zhou2024joint} evaluated the performance of joint calibration for 14 well-known non-affine local stochastic volatility models and found that the non-affine and local volatility structure improve the error but are still far from satisfactory. \cite{papanicolaou2022consistent} established a new consistent modeling framework by directly modeling VIX dynamics, and the volatility function of the log return is inferred from consistent modeling conditions.

To give an example, consider the SVM in \cite{fouque2018heston}:
  $$\begin{array}{l}
\mathrm{d} S_t / S_t=r \mathrm{~d} t+\sqrt{v_t}  \mathrm{~d} W_t^S, \\
\mathrm{~d} v_t=\kappa\left(\theta-v_t\right) \mathrm{d} t+\eta_t \sqrt{v_t} \mathrm{~d} W_t^V, \\

\end{array}$$
where $\eta_t=\eta\left(Y_t^{\varepsilon}, Z_t^\delta\right)$ is a stochastic factor with expression

$$
\begin{array}{l}
\eta_t=\eta\left(Y_t^{\varepsilon}, Z_t^\delta\right) \\
\mathrm{d} Y_t^{\varepsilon}=\frac{V_t}{\varepsilon} \alpha\left(Y_t^{\varepsilon}\right) \mathrm{d} t+\sqrt{\frac{V_t}{\varepsilon}} \beta\left(Y_t^{\varepsilon}\right) \mathrm{d} W_t^Y, \\
\mathrm{~d} Z_t^\delta=V_t \delta c\left(Z_t^\delta\right) \mathrm{d} t+\sqrt{\delta V_t} g\left(Z_t^\delta\right) \mathrm{d} W_t^Z,
\end{array}
$$
with $\left(W_t^S, W_t^V, W_t^Y, W_t^Z\right)$ a correlated Brownian motion with
$$
\mathrm{d} W_t^i \mathrm{~d} W_t^j=\rho_{i j} \mathrm{~d} t,\quad  i, j=S, V, Y, Z.
$$
In this case, we have $\operatorname{VIX}_t = f_2(v_t, Y_t^\epsilon, Z_t^\delta)$ and $\operatorname{VVIX}_t = g_2(v_t, Y_t^\epsilon, Z_t^\delta)$. The three factors jointly calibrate two markets and are shown to significantly improve accuracy. 

However, along with other above-mentioned works, how the factors decouple VIX and VVIX are implicit, which cannot clearly explain why the proposed models work in joint calibration. And the computations in the pricing of European and VIX options are typically more involved, particularly for non-affine volatility models, where simulation techniques are often necessary, or rough volatility models, which are computationally demanding and require solving Riccati-Volterra equations. Moreover, the mentioned works all conducted calibrations on a few selected dates without out-of-sample robustness test except for \cite{lin2010consistent} and \cite{zhou2024joint}.

Another category of models suggest a multi-factor specification of volatility. The original attempts were made by \cite{gatheral2008consistent}
 and \cite{bayer2013fast}. The former proposed a 2-factor double CEV model that improved the short-term fit performance, while the model admits no closed form solution. The modeling approach was followed by the latter, who further considered a 3-factor double mean reverting model with an improved model simulation scheme. Multi-factor affine specifications were also considered. \cite{chen2013consistent} jointly fitted with two different multi-factor affine models and analyze the term structure of VIX futures based on calibrated model parameters. \cite{pacati2018smiling} introduced a deterministic displacement to multi-factor affine models, where the displacement is interpreted as a lower-bound of the forward VIX term structure. \cite{papanicolaou2014regime} and \cite{goutte2017regime} both considered regime-switching Heston models, where sharp volatility regime shifts captures both volatility skews. The former used asymptotic and Fourier methods to efficiently price, and the latter proposed an extension of the EM algorithm to estimate the model. Besides, \cite{guyon2023volatility} proposed a 4-factor path-dependent volatility model that jointly fits very well. The model has continuous stock price path and rough-like volatility path, but relies on simulation methods to solve. \cite{yuan2020new} adopted multi-factor stochastic jump intensity models, where the introduction of a new stochastic factor controls the high-order moments of the risk-neutral distribution. And \cite{bardgett2019inferring} estimated an affine model with a stochastic mean-reverting process. Both reported satisfactory calibration results to the whole joint implied volatility term structure, but the volatility process can become negative because of the negative jumps, which is mathematically flawed as a model assumption.

A multi-factor extension sometimes retains the efficiency of the one-factor version in the pricing of European and VIX derivatives. But the interplay of factors between VIX and VVIX are still implicit. Moreover, only \cite{yuan2020new} and \cite{bardgett2019inferring} did long-term in-sample calibration over consecutive years, and only \cite{yuan2020new} did out-of-sample tests. Due to the inherent complexity of multi-factor models, the high number of parameters often complicates the optimization process. In practice, the best-performing implementations typically require between 10 and 25 parameters. By comparison, our model achieves strong performance competitive with \cite{yuan2020new} while staying relatively parsimonious using only 13 parameters.
 
 Finally, there is some recent research that characterizes the volatility relationship using a non-parametric framework. \cite{guo2022joint} introduced a time-continuous optimal transport formulation of the joint calibration problem, where a semimartingale price process is assumed, the optimization takes the current market states and implied volatility data as inputs and a calibrated diffusion process as an output. Additionally, \cite{guyon2020joint} built a non-parametric discrete-time model and transformed the joint calibration into a dispersion-constrained martingale transport problem which is solved using the Sinkhorn algorithm. The approach claimed to achieved exact joint calibration results. Meanwhile, \cite{guyon2022neural} attacked the joint calibration problem by assuming a local volatility model and modeling volatilities and drifts as neural networks. Furthermore, \cite{dong2025joint} introduced a nonparametric discrete-time method to the joint calibration problem called the joint implied willow tree, where they extract the risk-neutral process directly from market data by solving the transition probabilities.  Finally, \cite{cuchiero2025joint} considers a stochastic volatility model where the dynamics of the volatility are described by a signature model and obtains highly accurate calibration results for both markets.

In this line of approaches, the relationship between VIX and VVIX are even more implicit than parametric models, and calibrations are often time-consuming. In addition, all of the mentioned works did calibrations restricted on a few dates, and none did out-of-sample tests. And the number of parameters in the optimization problem are much larger than parametric models. In the method of \cite{guo2022joint} and \cite{guyon2020joint}, the number is comparable to the aggregate number of options in both markets; in \cite{dong2025joint}, the number scales with $T$ (number of maturities) $\times$ $m$ (number of discrete values, around $50$); and in \cite{cuchiero2025joint}, the signature model is implemented by calibrating a coefficient of dimension $85$.

In response to the limitations of the existing works, we propose composite time change models.
\subsection{Decoupling VIX and VVIX via Composite Time Change}

To study the volatility relationship in composite time change (CTC) models, we first introduce some background knowledge on time change approaches. Time changes can be interpreted as the intensity of business activities that drives the variation in volatility of an asset. The original clock ${t, t \ge 0}$ is referred to as calendar time, and the time change process $T$ is called business time. The log-price process of an asset is originally thought to be stationary and of independent increment, i.e., a Lévy process $L$. Every time a market event occurs and drives the variation in volatility, the change is reflected in the time change, either by accelerating or by slowing the business clock. The real market price of the asset is then updated under the business clock, namely $L_T$, where $L$ is often called a base process.

Originally, \cite{clark1973subordinated} and \cite{geman2001time} proposed a subordinated Brownian motion model for the log price. The time change introduces jumps in volatility. \cite{carr2003stochastic} and \cite{carr2004time} introduced time-changed Lévy models, where the time change is absolutely continuous, through which stochastic volatility is introduced. \cite{luciano2006multivariate} and \cite{eberlein2009correlating} modeled the dependence of multi-assets via correlation of subordinated Brownian motions. \cite{mendoza2010time}
 extended the time-changed L\'{e}vy model by considering the combination of time
 changes and a certain type of composite time change. But the time changes considered in the composite structure are both L\'{e}vy subordinators, and are independent of the L\'{e}vy process. A multi-asset TCLM is developed in \cite{ballotta2016multivariate}. And \cite{cui2019general} extended the theory of time-changed L\'{e}vy process to a general time-changed Markov process. Recently, \cite{ballotta2022smiles} established a unified TCLM structure that allows leverage via diffusion as well as jumps. Applications of time change models include electricity price modeling (cf. \cite{borovkova2017electricity}), commodity derivative pricing (cf. \cite{li2014time}), swap derivatives (cf. \cite{itkin2010pricing}), among many others.

Regarding the literature on composite time change models, as far as we know, it is only mentioned in \cite{mendoza2010time}, where $U_t$ and $V_t$ are L\'{e}vy subordinators independent of each other as well as independent of the base process. It can be seen that the composite structure is theoretically redundant in the sense that $T_t = U_{V_t}$ is again a L\'{e}vy subordinator. The composite structure there was used to exhibit flexible moments of the distribution in practice. In this paper, however, we will consider a different class of composite time changes, where $U_t$ and $V_t$ are absolutely continuous and can correlate with the base L\'{e}vy process.

A time change is a non-decreasing and right-continuous process $\{T_t: t\geq 0\}$ that satisfy $T_0 = 0$, $T_t \uparrow \infty$ as $t\uparrow \infty$ and $T_t$ is a stopping time for every $t\ge 0.$ We define a composite time change (with respect to $\{\mathcal{F}_t\}_{t\ge 0}$) as $T_t = U_{V_t},$
    where $U_t$ is an $\mathcal{F}$-time change and $V_t$ is an $\mathcal{F}_U$-time change. Composite time change models considered in this paper belong to the class of time-changed L\'{e}vy models introduced in \cite{carr2004time}. However, the introduction of a general composite structure is novel in literature. 

In a time change asset price model, $X_t = L_{T_t}$, where L\'{e}vy process $L$ satisfies that $e^L$ is a  martingale with mean $1$ and time change $T$ has $E[T_t] < \infty$ for every $t\ge 0$. Next, we show how a composite change of time helps decouple VIX and VVIX. Under the assumptions specified in Appendix \ref{re_VIX}, the argument in Eq. \eqref{arg1} does not apply to composite time change models. Instead, it is shown by Proposition \ref{prop:VIX} that
$$\operatorname{VIX}_t^2 =-2E[L_1] u_{V_t}v_t + O(\bar{\tau})$$
and by Proposition \ref{prop:VVIX} that
\begin{equation}
    \label{eq:contribution}
    \operatorname{VVIX}_t^2 =2r- \mathcal{A}^v \ln (v_t) - \mathcal{A}^u \ln (u_{V_t}) v_t + o(1)
\end{equation}
for small $\bar \tau$. These expressions of VIX and VVIX hold for the CTC models in Eq. \eqref{CTC} for sufficiently regular $u$ and $v$. Thus, we observe that $\operatorname{VIX}^2$ is a multiplication of two volatility factors and that $\operatorname{VVIX}^2$ is a summation of functions of the two factors. The two factors can thus jointly adjust their values to match both markets based on their explicit relationship to VIX and VVIX. We expect that the calibration performance of composite time change models will improve significantly from their ordinary one-factor version.

As a special case, in Heston model, we have $u_t \equiv 1$ and 
$$\operatorname{VIX}_t^2 = v_t + O(\bar\tau), \quad \operatorname{VVIX}_t^2 =2r + \kappa_v + \frac{\eta_v^2 - 2\kappa_v\theta_v}{2 v_t} + o(1),$$
which admits an unfavorable inverse relationship between VIX and VVIX. 
Moreover, for Composite Heston model introduced in Eq. \eqref{Composite Heston},
$$\operatorname{VIX}_t^2 = u_{V_t}v_t + O(\bar{\tau}), \quad \operatorname{VVIX}_t^2 =2r + \kappa_v + \kappa_u v_t + \frac{\eta_v^2 - 2\kappa_v\theta_v}{2 v_t} +\frac{\eta_u^2 - 2\kappa_u\theta_u}{2 u_{V_t}}v_t + o(1).$$
In this case, factor $v$ does not necessarily contribute negatively to VVIX and factor $u$ can account for the non-linear relationship. This provides flexibility in joint calibration while preserving the mathematical tractability of CIR process. The exact expressions for VIX in affine specifications of $U$ and $V$ are also given in Eq. \eqref{VIX for CTCM}.

\subsection{Contribution of the Paper}
This article explored the theoretical development and practical applications of composite time-changed Lévy models, with a particular emphasis on their use in derivative pricing in the joint market of SPX options and VIX options.

Firstly, we summarized the theory behind time-changed Lévy models, building upon the pioneering work of \cite{carr2004time}. This section emphasized the technique of leverage-neutral measure change that facilitates the computation of characteristic functions of the model's log return.

Then we proposed CTC models, a novel approach to modeling financial variables by considering composite time change processes characterized as \(X_t = L_{U_{V_t}}\). In this framework, \(U\) and \(V\) are increasing continuous processes with activity rates \(u\) and \(v\). The models are defined under the risk-neutral measure \(\mathbb{P}\) and filtration \(\mathcal{F}^X_t := (\mathcal{F}_{U})_{V_t}\), where filtration $\mathcal{F}_t$ satisfies that $U$ is an $\mathcal{F}$-time change and $V$ is an $\mathcal{F}_U$-time change.

A novelty of this model is the flexibility to incorporate different types of stochastic processes for time changes, allowing specific adaptations to a variety of financial phenomena observed in markets. For example, the model degenerates to ordinary time-changed L\'{e}vy models for \(U_t \equiv t\), which incorporates most of the specifications in \cite{carr2004time} and \cite{ballotta2022smiles}. The proposed Composite Heston model assumes \(V\) to be an integrated Cox-Ingersoll-Ross (CIR) process, which makes \(V\) naturally increasing and has many computational conveniences, allowing for Fourier-based numerical methods as well as exact simulation techniques as shown in the derivative pricing section. Additionally, models like Composite JH incorporate jumps via infinite-activity Lévy processes and generate the leverage effects via co-jumps. The randomness of the second time change \(V\) further adds flexibility to the vol-of-vol structure and significantly improves the performance in the joint calibration problem compared with the JH model proposed in \cite{ballotta2022smiles}.

Moreover, we also derived the characteristic function of the underlying's log return for these composite time-changed processes, which is pivotal for theoretical and practical applications. Firstly, we defined a leverage neutral measure on the original filtration. We then obtained the characteristic function of the log return based on a composite time change to the Radon-Nicodym derivative. Likewise, the dynamics of activity rates \(u\) and \(v\) are also derived by considering appropriate time changes to the Radon-Nicodym derivative. The result allows for better understanding and integration of time change models and broadens the traditional concept of affine volatility models through the analytical expression of the characteristic function under the affine specification of the composite time change. By introducing and integrating time changes, the CTC models provide more adaptive and efficient tools for financial modeling.

In the context of derivative pricing under CTC models, we developed an efficient cosine expansion method for European options that maintains a computational framework similar to that in ordinary time-changed Lévy models. This application allows for the separation of strikes and the underlying asset, ensuring that the number of strike prices does not add to the computational complexity when pricing the entire volatility surface.

The complexity analysis for the developed cosine expansion method reveals that the overall computation cost is \(O(ND)\) under affine specifications, where \(N\) represents the number of terms in the cosine series, and \(D\) denotes the degree of discretization of numerical integration, and is of the same order of computational expense as their ordinary time change counterparts, such as the Heston model and Jump Heston models. Such an efficiency remains as long as the Laplace transforms of the time changes are explicit.

We also developed an efficient exact simulation scheme for VIX pricing. In contrast, the pricing becomes more complex for VIX derivatives. Traditional Fourier-based numerical methods used for ordinary time-changed Lévy models do not extend to CTC models due to the nonlinear relationship between the spot VIX and the activity rates. To address this challenge, a novel exact simulation algorithm was proposed. By assuming time change \(V\) to be an integrated CIR process, the conditional distribution \(\left(V_{\bar{T}} \mid v_0, v_{\bar{T}}\right)\) can be efficiently simulated according to known algorithms. This fast simulation technique is applied to the simulation of the activity rates in CTC models, where the algorithm directly simulates the terminal distributions and avoids simulating the whole path.

In the empirical analysis, we carried out a joint calibration utilizing real-market data from SPX and VIX options over two periods: an in-sample period from April 4, 2007, to April 1, 2015, and an out-of-sample period from April 8, 2015, to December 27, 2017. We found that the CTC models successfully captured the joint volatility smiles in both SPX and VIX markets. We observed an improvement of 32.62\% in-sample error for Heston and 23.00\% for JH when adding a composite time change. The pairwise \(t\) hypothesis testing also validated the superiority of CTC models over their ordinary counterparts. We compared the CTC models with those proposed in \cite{yuan2020new} and saw a better performance of Composite JH (RMSRE: 0.0482, 13 parameters) over the best performed model (RMSRE: 0.0581, 21 parameters) in the referenced work. 

In addition, the out-of-sample results also demonstrated the superiority of CTC models. We perform the out-of-sample test in three periods: 2015/3/31 - 2016/3/13, 2016/3/14 - 2017/2/5, 2017/2/6 - 2017/12/31. In each period, we estimate the structural parameters from the estimates of the last eight weeks before the period. Then, daily calibrations are performed by fixing the structural parameters. We found that the composite time change significantly improves the out-of-sample performance, with the RMSRE of Composite JH: 0.0837, 0.0894, 0.1000, respectively, compared to those of JH: 0.1113, 0.1451, 0.1723.

In addition, we displayed the performance of the joint calibration within a series of moneyness and maturity ranges. We showed that the composite time change improves not only the models' aggregate performance, but also the performance in most ranges of options, especially for VIX options, where the improvements are more substantial. 

We also evaluated the calibrated CTC models in the following aspects: implied volatility characteristics, short-term near-the-money performance, contribution of time change to vol-of-vol. Firstly, we showed that composite structures improve the fit of ATM IV and volatility skew, which is defined as the slope of the two implied volatilities with moneyness closest to 0.7 (0.8) and 1.2 (1.5) for SPX (VIX) option markets. Especially in VIX option markets, the daily RMSE of short-term ATM IV is reduced from 0.2668 to 0.1589 for Composite Heston and from 0.0889 to 0.0338 for Composite JH, and the daily RMSE of short-term volatility skew is reduced from 0.6888 to 0.5054 for Composite Heston and from 0.2855 to 0.1568 for Composite JH. In addition, the calibration results of Composite JH were found to be the most balanced in the joint markets, with the RMSE of short-term ATM IV to be 0.0203 (SPX) versus 0.0338 (VIX) and short-term skew 0.1098 (SPX) versus 0.1568 (VIX), in sharp contrast to those of JH (0.0234 vs. 0.0899 for short-term ATM IV,
 0.1335 vs. 0.2855 for short-term volatility skew).

Then we compared the in-sample fit of short-term near-the-money implied volatilities in the joint markets. SPX (VIX) options with maturities smaller than 90 (45) days and moneyness spanning $[0.9, 1.1]$ ($[0.8, 1.2]$) are selected. The experiment is dedicated to comparing the fit of decay rates in short-term ATM skew. Considering numerical stability, the near-the-money RMSRE was computed instead of the ATM skew. We found an improvement in performance in the SPX markets of 21. 61\% for Composite Heston and 18. 90\% for Composite JH. More predominantly in VIX markets, the in-sample error improves by 50.42\% for Composite Heston and by 56.90\% for Composite JH. Moreover, we plotted the daily near-the-money fitting error in the SPX and VIX option market, respectively. In both markets, the composite model was found to prevail on most calibrated dates. 

Finally, we found contribution of time change \(V\) to VVIX in CTC models to be steady and sustained. We evaluated the contribution from the linear decomposition of vol-of-vol using calibrated parameters. In Composite JH (Composite Heston), the average contribution is 19.2\% (22.7\%) and a correlation of 0.371 (0.142) is found between the contributions and VVIX values.

The article is organized as follows. In section 2, we summarize the theory of time-changed L\'{e}vy models, and we show how the technique of leverage neutral measure change helps form the characteristic function; Section \ref{CTC model} develops the theory of composite time change models, where a general form is considered and characteristic fuctions derived. We also discuss some useful specifications of CTC models. In Section 4, we introduce the application of the model in derivative pricing, including European options and VIX options. We show that the European option pricing in CTC models can be conducted quite efficiently. In section 5, we perform real-market joint calibration and discuss the results. The last section concludes.

\section{Preliminaries: time-changed L\'{e}vy processes}
This section establishes the theoretical framework for time-changed Lévy processes. We first review Lévy processes and their characteristic representations, including the special case of subordinators. Subsequently, we examine time-changed Lévy processes under three progressively stronger regularity conditions and introduce the leverage-neutral measure technique to handle dependent time changes. The characteristic function of the time-changed process is derived, and key properties under measure changes are discussed.
\subsection{L\'{e}vy processes}
\begin{definition}
    Given a filtered probability space $\left(\Omega, \mathcal{F},\left\{\mathcal{F}_t\right\}_{t \geq 0}, \mathbb{P}\right)$, a L\'{e}vy process, $L = \{L_t, \, t\ge 0\}$, on $\mathbb{R}^d$ is a continuous-time process with independent and stationary increments. The characteristic function of $L_t$ is denoted as $\phi_L(m ; t):= E^\mathbb{P}[e^{\mathrm{i}m\cdot L_t}]=e^{t \Psi_L(m)}, m \in \mathbb{R}^d$, where $\Psi_L(m)$ is the  characteristic exponent of $L$:
$$
\Psi_L(m)=\mathrm{i} \alpha \cdot m-\frac{1}{2}m \cdot \Sigma m +\int_{\mathbb{R}^d}\left(e^{\mathrm{i} m\cdot x}-1-\mathrm{i} m\cdot x 1_{\{|x| \leq 1\}}\right) \nu(\mathrm{d} x),
$$
where $\alpha \in \mathbb{R}^d$, $\Sigma$ is a symmetric positive semi-definite $d \times d$ matrix, and $\nu$ is a positive measure on $\mathbb{R}^d$ such that $\nu(\{0\})=0$, $\int_{\mathbb{R}^d}\left(|x|^2 \wedge 1\right) \nu(\mathrm{d} x)<\infty$.
\end{definition}
The triplet $\left(\alpha, \sigma^2, \nu(\mathrm{d} x)\right)$ determines the L\'{e}vy process and is referred to as differential characteristics. In particular, a subordinator is a L\'{e}vy process such that $t \mapsto L_t$ is non-decreasing.

\begin{definition}
    Given a L\'{e}vy process $L$, we denote by $L \in \mathcal{H}^2(\mathbb{R}^d)$ if $L$ is integrable and $L_t - E[L_1]t$ is a square-integrable martingale, i.e., $$\sup_{t\in \mathbb{R}_+} E(L_t - E[L_1]t)^2 < \infty.$$
    We denote by $L \in \mathcal{H}^2_{\text{loc}}(\mathbb{R}^d)$ if $L$ is integrable and $L_t - E[L_1] t$ is a locally square integrable martingale, i.e., there exists a sequence of stopping times $\{T_n\}$ increasing to $\infty$ such that, for every $n$, $L_{\cdot \wedge T_n} \in \mathcal{H}^2(\mathbb{R}^d).$
\end{definition}

\subsection{Time-changed L\'{e}vy Processes}
Leverage neutral measure, first introduced in \cite{carr2004time}, is a complex-valued measure change technique that enables the explicit computation of the characteristic function of time-changed L\'{e}vy process $L_T$, especially when $L$ and $T$ are not independent.

\begin{definition}
Given L\'{e}vy process $L$ on $\mathbb{R}$, there are three versions of conditions for time change $T$ in the order of restrictiveness.
\begin{enumerate}
    \item (Type 1) $T$ is integrable: $E[T_t] < \infty$ for all $t\ge 0$.

\item (Type 2) The condition of type 1 holds and $T$ is in synchronization with $L$, i.e. $L$ is constant a.s. on the interval $[T_{t-}, T_t]$ for all $t > 0$ (cf. \cite{jacod1979calcul}.)

\item (Type 3) The condition of type 1 holds and the time change $T_t = \int_0^t v_s \mathrm{d}s$ is absolutely continuous with $v_t > 0$ a.s. for all $t \ge 0$.
\end{enumerate}
\end{definition}

From the definition, it can be directly verified that type 3 implies type 2. 

For now, we assume time change $T$ is of type 1. If $T$ is independent of $L$, then the characteristic function of $X_t = L_{T_t}$ is
    $$\phi_{X}(m;t) := \mathrm{E}^\mathbb{P}\left[\exp{(\mathrm{i} m X_t)}\right] = \phi_{T}(-\mathrm{i}\Psi_L(m); t), \quad m\in \mathbb{R}.$$
Otherwise, we define a class of complex-valued measures $\mathbb Q(m)$, which we denote by $\mathbb{Q}$ for brevity if there is no confusion, is absolutely continuous with respect to $\mathbb P$ and is defined by
\begin{equation}
\label{Q}
\left.\frac{\mathrm{d} \mathbb Q(m)}{\mathrm{d}\mathbb P}\right|_{\mathcal{F}_{t}} = M_t^m,
\end{equation}
with
$$
M_t^m := \exp \left(\mathrm{i} m L_t-t \Psi_L(m)\right).
$$
Then it follows that, under $\left.\mathbb{Q}(m)\right.|_{\mathcal{F}_{T_t}}$,
    \begin{equation}
        \label{Re 1}
\phi_{X}(m; t) = E^\mathbb{Q}[\exp{(\mathrm{i}mX_t)}(M^m(T_t))^{-1}] = \phi_{T}^\mathbb{Q}\left(-\mathrm{i}\Psi_L(m); t\right),\quad m\in \mathbb{R},
\end{equation}
where $\phi_T^\mathbb{Q}(u; t):= E^\mathbb{Q}[\exp{(\mathrm{i}uT_t)}]$ is the (generalized) characteristic function of $T_t$ under measure $\mathbb{Q}(m)$. 

\begin{remark}
Readers can refer to \cite{carr2004time}, \cite{huang2004specification} and \cite{ballotta2022smiles} for more details on the definition of leverage-neutral measures. In the works above, $\mathbb{Q}(m)$ is defined on filtration $\{\mathcal{F}_{T_t}\}_{t\ge 0}$, which coincides with our defined leverage neutral measure on $\mathcal{F}_{T}$.
\end{remark}

Stronger conditions for $T$ are needed to explicitly characterize the time-changed process $X$. Specifically, if $T$ is of type 2 with $\mathrm{d}T_{t-} = v_t \mathrm{d}t$, then semimartingale $X$ has local characteristics \begin{equation}
\label{TC characteristics}(\alpha v_{t}, \sigma^2 v_{t}, v_t \nu(\mathrm{d}x)),\end{equation}
given that $(\alpha, \sigma^2, \nu(\mathrm{d}x))$ is the differential characteristics of $L$ (cf. \cite{kuchler2006exponential}).

\section{Composite Time Change Models}
\label{CTC model}
In this section, we assume $U$ and $V$ are independent increasing continuous processes with activity rates $u_t$ and $v_t$, respectively, i.e., $\mathrm{d}U_t = u_t \mathrm{d}t$, $\mathrm{d}V_t= v_t \mathrm{d}t$ with $u_t >0$, $v_t > 0$ a.s. We further require $E[U_t] < \infty$, $E[V_t] < \infty$, and $E[U_{V_t}] < \infty$ for every $t\ge 0$. Then we have:
\begin{proposition}
\label{prop:adapt}
    If there exists a filtration $\{\mathcal{F}_t\}_{t\ge 0}$ satisfying usual conditions (i.e., right-continuous and complete) such that $U$ is a $\mathcal{F}$-time change and $V$ is a $\mathcal{F}_U$-time change, then $u_{V_t}$ and $v_t$ are adapted to $\{(\mathcal{F}_{U})_{V_t}\}_{t\ge 0}$, i.e., the filtration of $\mathcal{F}_{U}$ time-changed by $V$, and $U_V$ is a type-3 $\mathcal{F}$-time change.
\end{proposition}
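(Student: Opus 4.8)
The plan is to prove the three assertions separately: (i) $U_V$ is an $\mathcal{F}$-time change; (ii) $U_V$ is absolutely continuous with a strictly positive density, which upgrades (i) to type 3; and (iii) the activity rates $v_t$ and $u_{V_t}$ are adapted to $\mathcal{G}_{V_\cdot}$, where I abbreviate $\mathcal{G}_s := \mathcal{F}_{U_s}$ so that $(\mathcal{F}_U)_{V_t} = \mathcal{G}_{V_t}$. Monotonicity, continuity, $U_{V_0}=0$ and $U_{V_t}\uparrow\infty$ are immediate from the corresponding properties of the increasing continuous maps $U$ and $V$, together with $V_t<\infty$ a.s. The only substantive point in (i) is that each $U_{V_t}$ is an $\mathcal{F}$-stopping time. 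I would first approximate the $\mathcal{G}$-stopping time $V_t$ from above by the dyadic $\mathcal{G}$-stopping times $\sigma_n\downarrow V_t$; right-continuity of $U$ gives $U_{\sigma_n}\downarrow U_{V_t}$, and a decreasing limit of $\mathcal{F}$-stopping times is an $\mathcal{F}$-stopping time under the usual conditions, so it suffices to treat a $\mathcal{G}$-stopping time $\sigma=\sum_k t_k\mathbf{1}_{A_k}$ with $A_k=\{\sigma=t_k\}\in\mathcal{G}_{t_k}=\mathcal{F}_{U_{t_k}}$. For such $\sigma$ I would write $\{U_\sigma\le r\}=\bigcup_k\big(A_k\cap\{U_{t_k}\le r\}\big)$ and invoke the defining property of the stopping-time $\sigma$-field $\mathcal{F}_{U_{t_k}}$ to conclude that each term, hence the countable union, lies in $\mathcal{F}_r$. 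This is the standard composition-of-time-changes argument (cf. \cite{jacod1979calcul}).

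For (ii), I would use $U_{V_t}=\int_0^{V_t} u_r\,\mathrm{d}r$ and apply the change-of-variables formula for the continuous increasing function $V$: since $V$ is absolutely continuous with Lebesgue--Stieltjes measure $\mathrm{d}V_\rho=v_\rho\,\mathrm{d}\rho$,
\[
\int_0^{V_t} u_r\,\mathrm{d}r = \int_0^t u_{V_\rho}\,\mathrm{d}V_\rho = \int_0^t u_{V_\rho}\,v_\rho\,\mathrm{d}\rho .
\]
Hence $U_V$ is absolutely continuous with density $w_t=u_{V_t}v_t$, which is strictly positive a.s. because $u>0$ and $v>0$ a.s.; combined with the standing hypothesis $E[U_{V_t}]<\infty$ (type 1), this is exactly type 3.

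For (iii), the key observation is that the activity rate of any absolutely continuous time change is adapted to the associated time-changed filtration. Concretely, $V_s$ is $\mathcal{G}_{V_s}$-measurable for every $s$ (stopping-time measurability) and $\mathcal{G}_{V_s}\subseteq\mathcal{G}_{V_t}$ for $s\le t$, so each difference quotient $(V_t-V_{t-h})/h$ is $\mathcal{G}_{V_t}$-measurable and therefore so is $v_t=\limsup_{h\downarrow 0}(V_t-V_{t-h})/h$, a version equal to the density a.e. The same argument applied to $U$ relative to $\mathcal{F}$ shows $u_s$ is $\mathcal{F}_{U_s}=\mathcal{G}_s$-measurable, i.e. $u$ is $\mathcal{G}$-adapted. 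To evaluate $u$ at the $\mathcal{G}$-stopping time $V_t$ I would pass to a $\mathcal{G}$-progressively measurable version of $u$ and use that a progressively measurable process sampled at a stopping time is measurable with respect to the stopped $\sigma$-field, yielding $u_{V_t}\in\mathcal{G}_{V_t}$. I expect this last step --- upgrading the pointwise adaptedness of the activity rate to genuine progressive measurability so that sampling at the stopping time $V_t$ is legitimate, while tracking the null sets permitted by the usual conditions --- to be the main obstacle; the stopping-time composition of (i) and the change of variables of (ii) are comparatively routine once the filtrations are correctly identified through $(\mathcal{F}_U)_{V_t}=\mathcal{G}_{V_t}$.
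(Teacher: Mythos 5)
Your proposal is correct and reaches all three conclusions, but it differs from the paper's proof in two technical choices. For the stopping-time property of $U_{V_t}$, the paper argues directly: it writes $\{U_{V_r} < t\} = \{V_r < U^{-1}(t)\}$ (using that $U$ is continuous and strictly increasing) and decomposes this over rational $s$ as $\bigcup_{s\in\mathbb{Q}_+}\{V_r\le s\}\cap\{U_s < t\}$, each term lying in $\mathcal{F}_t$ by the defining property of the $\mathcal{F}_U$-stopping time $V_r$; right-continuity of $\mathcal{F}$ then upgrades $<$ to $\le$. Your route — discretize $V_t$ from above by dyadic $\mathcal{G}$-stopping times, handle the countably-valued case by the union $\bigcup_k(A_k\cap\{U_{t_k}\le r\})$, and pass to the decreasing limit under the usual conditions — is the classical composition-of-time-changes argument; it is equally valid and has the mild advantage of not needing strict monotonicity of $U$, at the cost of invoking the optional-time/stopping-time equivalence. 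For the adaptedness of $u_{V_t}$, the paper avoids your progressive-measurability step entirely: it observes that $U_{V_t}$ is $\mathcal{G}_{V_t}$-adapted, deduces that the difference-quotient limit $u_{V_t}v_t$ is adapted ("adaptedness is preserved under limits"), and then divides by the adapted, strictly positive $v_t$. Your approach — establish that $u$ is $\mathcal{G}$-adapted, pass to a progressively measurable (càdlàg) version, and sample at the $\mathcal{G}$-stopping time $V_t$ — is the more standard textbook mechanism and cleanly isolates the measurability issue you flag; the paper's division trick is more elementary but leans on the a.s. identification of the derivative of $U_{V_\cdot}$ with $u_{V_t}v_t$, which carries the same left-limit-versus-right-limit subtlety for càdlàg activity rates that your $\limsup$ of backward quotients does. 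The identification $U_{V_t}=\int_0^t u_{V_\rho}v_\rho\,\mathrm{d}\rho$ and the conclusion of type 3 coincide in both arguments.
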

\begin{proof}{Proof}
    Since $V$ is a $\mathcal{F}_U$-time change and $U$ is $\mathcal{F}_U$-adapted, we have that $V_t$ and $U_{V_t}$ are adapted to $(\mathcal{F}_U)_{V_t}$ for any $t\geq 0$. Since adaptedness is preserved under limits, $u_{V_t}v_t$ and $v_t$ are $(\mathcal{F}_U)_{V}$-adapted, which leads to the adaptedness of $u_{V_t}$ to $(\mathcal{F}_U)_{V_t}$.

    Next, we show that $U_{V}$ is a type-3 $\mathcal{F}$-time change. Since $V$ is a $\mathcal{F}_U$-time change, for any $r, s, t\ge 0$, the set $\{V_r \le s\}\cap \{U_s \le t\} \in \mathcal{F}_t.$ Thus,
    $$\begin{aligned}
        \{U_{V_r} < t\}
        & =  \{V_r < U^{-1}(t)\}\\
        &= \bigcup_{s \in \mathbb{Q}_+, s < U^{-1}(t)}\{V_r \le s\}\\
        &= \bigcup_{s \in \mathbb{Q}_+}\{V_r \le s\} \cap \{U_s < t\}\\
        & = \bigcup_{s \in \mathbb{Q}_+} \bigcup_{n \ge 1}\{V_r \le s\} \cap \{U_s \le  t -\frac{1}{n}\}\in \mathcal{F}_t,
    \end{aligned} $$
    where the first equality holds since $U_t$ is continuous and strictly increasing. Since $\mathcal{F}_t$ is right-continuous, $\{U_{V_r} \le t\} \in \mathcal{F}_t$. So, combined with $u_{V_t}v_t > 0$ and the assumption $E[U_{V_t}] < \infty$, $U_{V}$ is a 
    type-3 $\mathcal{F}$-time change. 
\end{proof}

If there exists a filtration $\mathcal{F}$ such that the assumption of Proposition \ref{prop:adapt} holds, then $U$ and $V$ are both type-3 $\mathcal{F}$-time changes. Let $L_t$ be an $\mathcal{F}$-adapted L\'{e}vy process such that $\exp{(L_t)}$ is a unit-mean $\mathcal{F}_t$-martingale. We denote $\mathcal{F}^X_t := (\mathcal{F}_{U})_{V_t}$ and let $X_t := L_{U_{V_t}}$ be our composite time change (CTC) model. We also denote $E_t\left[\cdot\right]$ the expectation conditional on $\mathcal{F}^X_t$. 

In Section \ref{model theory}, we derive the characteristic function of the time change model with a particular focus on affine specifications. And in Section \ref{specifications}, we propose two tractable implementations: a Composite Heston model and a Composite Jump Heston model (an extension of a Jump Heston model with co-jumps, cf. \cite{ballotta2022smiles}).

\subsection{Model Theory}
\label{model theory}
In the specifications of \cite{carr2004time}, the activity rate of time change process is assumed to be affine with possibly general jump specifications, which can be summarized (with the volatility factor of dimension $d = 1$) in the following form: given filtration $\mathcal{F}$ and $\mathcal{F}^X$ defined as above with $V_t \equiv t$, the time change $T_t=U_t$. The activity rate $u_t$ of $U_t$ is assumed to follow
\begin{equation*}
             \begin{array}{lr}
             \mathrm{d} u_t = \kappa_u(\theta_u - u_t)\mathrm{d}t + \sqrt{\sigma_1 + \sigma_2 u_t} \mathrm{d}\tilde B_t +  \mathrm{d}\tilde J^u_t, &\\ 
             \end{array}
\end{equation*}
where $\mathrm{d}U_t = u_t \mathrm{d}t$, $\tilde B$ is a $(\mathbb P, \mathcal{F}^X)$-Brownian motion, $\tilde J^u$ is a $(\mathbb P, \mathcal{F}^X)$-adapted Poisson jump component with jump intensity $I(u_t) = \eta_1 + \eta_2 u_t$ and random jump magnitude $q$ or a time-changed L\'{e}vy process with activity rate $I(u_t)$, and $B$ is independent of $J^u$, with $\theta_u, \kappa_u \in \mathbb{R}$, $\sigma_1, \sigma_2, \eta_1, \eta_2 \in \mathbb{R}_+$. In a time-changed form, the SDE of $u_t$ can be equivalently expressed as
$$\mathrm{d}u_t = \kappa_u(\theta_u - u_t)\mathrm{d}t + \mathrm{d} B(\sigma_1 t + \sigma_2 U_t) + \mathrm{d} J^u(\eta_1 t + \eta_2 U_t),$$
where $B$ is a $(\mathbb{P}, \mathcal{F})$-standard Brownian motion and $J^u$ is a $(\mathbb{P}, \mathcal{F})$-L\'{e}vy process. A similar model setup can be found in \cite{ballotta2022smiles}, where a miscellaneous activity rate form is assumed: $u_t = u_0 + Y_0(t) + Y_1(U_t)$ with L\'{e}vy processes $Y_0$, $Y_1$.

In the following, we introduce the CTC model, which is more adapted to the VIX-VVIX relationship, for the joint pricing problem. As shown in the introduction, the above-mentioned ordinary time change models result in restrictive relationship of VIX and VVIX. However, the two volatility factors in a CTC model can jointly adjust their values in an explicit way and match joint market prices.

As an extension to ordinary time changed L\'{e}vy models, we adopt a jump-diffusion stochastic volatility setup of activity rates $u_t$ and $v_t$, which is general enough in application. And combined with the setup in the beginning of this section, a risk-neutral CTC model $X_t = L_{U_{V_t}}$ is specified with
\begin{equation}
\label{CTC}
\begin{aligned}
& \mathrm{d} u_t = \alpha^u(u_{t})\mathrm{d}t + \beta^u(u_{t})\mathrm{d} Z(U_t) + \gamma^u(u_{t-}) \mathrm{d} J^u(U_t), \\
& \mathrm{d} v_t = \alpha^v(v_{t})\mathrm{d}t + \beta^v(v_{t})\mathrm{d}\tilde B(V_t) + \gamma^v(v_{t-}) \mathrm{d}\tilde{J}^v(V_t)\\
& \mathrm{d} L_t = -\Psi(-\mathrm{i}) \mathrm{d}t + \sigma \mathrm{d}W_t + \mathrm{d}J_t,
\end{aligned}
\end{equation}
with constants $u_0 > 0$, $v_0 > 0$. Here $Z$ is a $(\mathbb P, \mathcal{F})$-standard Brownian motion, $\tilde B$ is a $(\mathbb P, \mathcal{F}_U)$-standard Brownian motion, $\tilde{J}^v \in \mathcal{H}^2_{\text{loc}}(\mathbb{R})$ is a  $(\mathbb P, \mathcal{F}_U)$-subordinator, and $\{Z, \tilde B\}$ are independent of $\tilde{J}^v$. Meanwhile, functions $\alpha^i(\cdot)$, $\beta^i(\cdot)$, and $\gamma^i(\cdot) \ge 0$ ($i = u, v$) satisfy regularity conditions for the existence of strong positive solutions of $u_t$ and $v_t$, with $u_t$ adapted to $\mathcal{F}_{U_t}$ and $v_t$ adapted to $\mathcal{F}^X_t$. We define $B = \{B_t, t\ge 0\}$ with $B_t := \int_0^{\hat{U}_t} \sqrt{u_s}\mathrm{d} \tilde B(s)$ with $\hat{U}_t := \inf\{s: U_s > t\}$, which is an $\mathcal{F}$-standard Brownian motion by L\'{e}vy's characterization theorem. We assume $\langle Z, B\rangle_t = 0$ and $\tilde{J}^v$ is independent of $J^u$. Thus, $u$ and $v$ are independent. Proposition \ref{prop:TC SDE} in Appendix \ref{CTC argument3} justifies the existence of $u$ and $v$ in model \eqref{CTC} as well as the assumption in Proposition \ref{prop:adapt}. In particular, Corollary \ref{coro:TC SDE} gives explicit conditions on $\alpha^i(\cdot)$, $\beta^i(\cdot)$, and $\gamma^{(\cdot)}$ ($i = u, v$) for the existence results to hold.

Moreover, for the $\mathcal{F}$-adapted L\'{e}vy process $L$, we assume $(J, J^u) \in \mathcal{H}^2_{\text{loc}}(\mathbb{R}^2)$ is a 2-dimensional pure-jump $(\mathbb{P}, \mathcal{F})$-L\'{e}vy process with subordinator $J^u$ and is independent of $\{Z,  B, \tilde{J}^v\}$. We construct the $(\mathbb{P}, \mathcal{F})$-Brownian motion $W$ in Eq. \eqref{CTC} through the following procedure. We assume $\rho_u, \rho_v \in [-1, 1]$ with $\rho_u^2 + \rho_v^2 \le 1$. Define an $\mathcal{F}$-Brownian motion $W$ via
\begin{equation}
    \label{W}
\mathrm{d} W_t = \rho_u  \mathrm{d}Z_t + \rho_v  \mathrm{d} B_t + \sqrt{1 -\rho_u^2 -\rho_v^2}  \mathrm{d} W^\perp_t,
\end{equation}
where the $\mathcal{F}$-standard Brownian motion $W^\perp$ is independent of $\{Z, B, J, J^u, \tilde{J}^v\}$. Thus, it holds that $W$ is $\mathcal{F}$-adapted, $\langle W, Z\rangle_t = \rho_u t$, $\langle W, B\rangle_t = \rho_v t$, and $W$ is independent of $\{J, J^u, \tilde{J}^v\}$. Finally, we require $\exp{(J_t)}$ to be integrable and $\Psi(m) :=-\frac{m^2}{2} \sigma^2+\Psi_{J}\left(m\right)$ is the characteristic exponent of $\sigma W_t + J_t$, which renders $\exp{(L_t)}$ a unit-mean martingale. 

To summarize, in CTC model \eqref{CTC}, the Brownian motion components $\{W, Z, B\}$ and the pure-jump components $\{J, J^u, \tilde{J}^v\}$ are independent. $W$ and $Z$ ($B$) can be dependent through $\mathrm{d}\langle  W,  Z\rangle_t = \rho_u\mathrm{d}t$ ($\mathrm{d}\langle W, B\rangle_t = \rho_v\mathrm{d}t$), but $Z$ and $B$ are independent. $J$ and $\tilde{J}^v$, $J^u$ and $\tilde{J}^v$ are independent, but $J$ and $J^u$ can be dependent. Consequently, the leverage effect between stochastic volatility $\sqrt{u_{V_t}v_t}$ (i.e., $\sqrt{\mathrm{d}T_t / \mathrm{d}t}$) and the base process $L$ is incorporated through three aspects: the correlation between $L$ and $u$ via $\rho_u$ and the dependence in $(J, J^u)$, and the correlation between $L$ and $v$ via $\rho_v$.

Since we allow the coefficients to be a function of $u$ and $v$, the setup covers the affine specifications in \cite{carr2004time} (with $\sigma_1 = \eta_1 = 0$) and \cite{ballotta2022smiles} by taking $V_t \equiv t$, $\beta^u(\cdot)\equiv \beta$ and $\gamma^u(\cdot) \equiv \gamma$.

Under the leverage-neutral measure $\mathbb{Q}(m)$ defined by Eq. \eqref{Q}, we have the following result of the characteristic function of $X_t$.
\begin{theorem}
\label{Th1}
$\mathrm{(i)}$ Under the leverage neutral measure $\mathbb Q(m)$, $m \in \mathbb{R}$, the SDE of $u_t$ is given by
\begin{equation*}
\begin{aligned}
\mathrm{d}u_t &= \left(\alpha^u(u_{t}) + \mathrm{i}\rho_u \sigma mu_{t}\beta^u(u_{t})\right)\mathrm{d}t + \beta^u(u_{t})\mathrm{d}Z^{\mathbb{Q}}(U_t) + \gamma^u(u_{t-}) \mathrm{d}J^{u}(U_t),
\end{aligned}
\end{equation*}
where $Z^{\mathbb Q}$ is a $(\mathbb Q,\mathcal{F})$-standard Brownian motion. The characteristic exponent of $J^{u}$ under $\mathbb{Q}(m)$ is given by
\begin{equation}
\label{characteristic exponent}
\Psi_{J^u}^{\mathbb Q}(x) = \Psi_{J, J^u}(m, x) - \Psi_J(m), 
\end{equation}
where $\Psi_{J, J^u}(m, x)$ is the characteristic exponent of L\'{e}vy process $(J, J^u)$. And the SDE of $v$ is given by
$$\mathrm{d}v_t = \left(\alpha^v(v_{t}) + \mathrm{i}\rho_v\sigma m \sqrt{u_{V_t}}v_t\beta^v(v_t)\right)\mathrm{d}t + \beta^v(v_{t})\mathrm{d}\tilde B^\mathbb{Q}(V_t) + \gamma^v(v_{t-})\mathrm{d}\tilde J^{v}(V_t),$$
with $\tilde B^\mathbb{Q}$ a $(\mathbb{Q}, \mathcal{F}_U)$-standard Brownian motion, and the law of $\tilde J^v$ remains unchanged under $\mathbb{Q}$.

$\mathrm{(ii)}$ If $\rho_v = 0$, then it further holds that $U$ and $V$ are independent under $\mathbb{Q}$, and the characteristic function of the composite time-changed process $X$ is given by
$$
    \phi_X(m; t) := E\left[\exp{(\mathrm{i}mX_t)}\right] = E^{\mathbb{Q}}\left[\phi_U^{\mathbb Q}(-\mathrm{i} \Psi_L(m); V_t)\right].
$$
\end{theorem}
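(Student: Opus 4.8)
The plan is to run everything through the leverage-neutral formula \eqref{Re 1} applied to the composite time change $T=U_V$, supported by a component-by-component Girsanov analysis of the density $M_t^m$ of \eqref{Q}, while tracking carefully on which filtration and time scale each driving process lives. First I would rewrite the density. Substituting $L_t=-\Psi(-\mathrm{i})t+\sigma W_t+J_t$ and $\Psi_L(m)=-\mathrm{i}m\Psi(-\mathrm{i})+\Psi(m)$ into $M_t^m=\exp(\mathrm{i}mL_t-t\Psi_L(m))$, the $-\Psi(-\mathrm{i})$ drift cancels, leaving $\mathrm{i}m\sigma W_t+\tfrac12 m^2\sigma^2 t+\mathrm{i}mJ_t-t\Psi_J(m)$. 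Inserting the decomposition \eqref{W} of $W$ into $\rho_u Z+\rho_v B+\sqrt{1-\rho_u^2-\rho_v^2}\,W^\perp$ then exhibits $M_t^m$ as a product of mutually independent exponential (local) martingales, one each for $Z$, $B$, $W^\perp$ and $(J,J^u)$; the key check is that the Brownian compensators combine as $\rho_u^2+\rho_v^2+(1-\rho_u^2-\rho_v^2)=1$, recovering exactly $\tfrac12 m^2\sigma^2 t$ together with the pure-jump compensator $t\Psi_J(m)$, so that $E^{\mathbb P}[M_t^m]=1$.

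For part (i) I would then apply Girsanov to each factor. The $Z$-factor shifts $Z$ by the deterministic rate $\mathrm{i}m\sigma\rho_u$, so that $Z^{\mathbb Q}_t:=Z_t-\mathrm{i}m\sigma\rho_u t$ is a $(\mathbb Q,\mathcal F)$-Brownian motion; substituting $\mathrm{d}Z(U_t)=\mathrm{d}Z^{\mathbb Q}(U_t)+\mathrm{i}m\sigma\rho_u\,\mathrm{d}U_t$ with $\mathrm{d}U_t=u_t\,\mathrm{d}t$ into the $u$-SDE of \eqref{CTC} produces precisely the extra drift $\mathrm{i}\rho_u\sigma m\,u_t\beta^u(u_t)$. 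For the jump part, the independent factor $\exp(\mathrm{i}mJ_t-t\Psi_J(m))$ exponentially tilts $(J,J^u)$, and a direct computation $E^{\mathbb Q}[e^{\mathrm{i}xJ^u_t}]=e^{t(\Psi_{J,J^u}(m,x)-\Psi_J(m))}$ yields the new characteristic exponent \eqref{characteristic exponent}.

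The $v$-dynamics are the delicate step, because $\tilde B$ and $\tilde J^v$ live on $\mathcal F_U$. I would restrict $\mathbb Q$ to that filtration, where optional sampling gives density $M^m_{U_t}$, and extract the correlation of $\tilde B$ with the density through $B$. From $B_t=\int_0^{\hat U_t}\sqrt{u_s}\,\mathrm{d}\tilde B(s)$ one gets $B_{U_s}=\int_0^s\sqrt{u_r}\,\mathrm{d}\tilde B_r$, so the process $s\mapsto B_{U_s}$ has covariation rate $\sqrt{u_s}$ with $\tilde B$; since the density carries the term $\mathrm{i}m\sigma\rho_v B_{U_s}$, the Girsanov drift of $\tilde B$ at $\mathcal F_U$-time $s$ is $\mathrm{i}m\sigma\rho_v\sqrt{u_s}$. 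Evaluating at $s=V_t$ with $\mathrm{d}V_t=v_t\,\mathrm{d}t$ gives the stated drift $\mathrm{i}\rho_v\sigma m\sqrt{u_{V_t}}\,v_t\beta^v(v_t)$, where the square root is exactly the footprint of $\tilde B$ living in business time. Because $\tilde J^v$ is independent of $(J,J^u)$, the tilt leaves its law invariant.

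For part (ii), setting $\rho_v=0$ removes the $B$-factor entirely, so $M^m$ becomes measurable with respect to the $u$-side generated by $(Z,W^\perp,J,J^u)$; as this is $\mathbb P$-independent of the $v$-side $(\tilde B,\tilde J^v)$, the factorization $E^{\mathbb Q}[fg]=E^{\mathbb Q}[f]\,E^{\mathbb Q}[g]$ (with $E^{\mathbb Q}[g]=E^{\mathbb P}[g]$ for $v$-side $g$) shows $U$ and $V$ remain independent under $\mathbb Q$. Proposition~\ref{prop:adapt} guarantees $T=U_V$ is a type-1 $\mathcal F$-time change, so \eqref{Re 1} applies and gives $\phi_X(m;t)=E^{\mathbb Q}[\exp(\Psi_L(m)U_{V_t})]$; conditioning on the $v$-side and using $\mathbb Q$-independence replaces the inner expectation by $\phi_U^{\mathbb Q}(-\mathrm{i}\Psi_L(m);V_t)$, and the outer expectation closes the proof. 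I expect the main obstacle to be rigor rather than algebra: justifying the complex-valued Girsanov theorem (the density is complex, so the standard results apply only after localizing along the reducing sequence from $\mathcal H^2_{\mathrm{loc}}$ and controlling integrability under the complex measure), and the bookkeeping of the three time scales, ensuring a drift accrued in $\mathcal F$- or $\mathcal F_U$-time is pushed correctly through $U$ and $V$ into a calendar-time SDE. The $\mathbb Q$-independence in part (ii) likewise requires care, since independence here must be read through the factorization of expectations rather than a genuine probabilistic product structure.
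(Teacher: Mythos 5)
Your proposal is correct and follows essentially the same route as the paper: decompose $W$ via Eq.~\eqref{W}, shift $Z$ and $B$ by $\mathrm{i}\rho_u\sigma m$ and $\mathrm{i}\rho_v\sigma m$, exponentially tilt $(J,J^u)$ to get Eq.~\eqref{characteristic exponent}, extract the $\sqrt{u_s}$ drift of $\tilde B$ through $B_{U_s}=\int_0^s\sqrt{u_r}\,\mathrm{d}\tilde B_r$, and for (ii) use $\mathbb{Q}$-independence of $U$ and $V$ plus conditioning on $V_t$. The only difference is presentational: where you would invoke a complex-valued Girsanov theorem (and flag its justification as the main obstacle), the paper sidesteps this by directly computing generalized characteristic functions under $\mathbb{Q}$ via iterated conditioning, which verifies the same drift shifts and independence claims without appealing to an abstract Girsanov statement.
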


\begin{proof}{Proof}
(i) We recall that $\Psi_L(m) = -\mathrm{i}m\Psi(-\mathrm{i}) + \Psi_W(\sigma m) + \Psi_J(m)$ and
$$\begin{aligned}
    M_t^m & = \exp{(\mathrm{i}mL_t - t\Psi_L(m))}\\
    & = \exp{(\mathrm{i}m ( -\Psi(-\mathrm{i})t + \sigma W_t + J_t) - t(-\mathrm{i}m\Psi(-\mathrm{i}) + \Psi_W(\sigma m) + \Psi_J(m))}\\
    & = \exp{\left(\mathrm{i}m (\sigma W_t + J_t) - t(\Psi_W(\sigma m) + \Psi_J(m))\right)}.
\end{aligned}$$
By the independence of Brownian motions and jump processes, the characteristic function of $W_t$ under measure $\mathbb Q$ is 
$$
\begin{aligned}
\phi_W^{\mathbb{Q}}(x;t) &= E[\exp{(\mathrm{i}xW_t )}M^m_t]\\
&= E[\exp{\left(\mathrm{i}x W_t +\mathrm{i}m (\sigma W_t + J_t) - t(\Psi_W(\sigma m) + \Psi_J(m))\right)}]\\
&= E[\exp{\left(\mathrm{i}x W_t + \mathrm{i}\sigma m W_t -t\Psi_W(\sigma m)\right)}] \\
& = \exp{\left(t(\Psi_{W}(x + \sigma m) - \Psi_W(\sigma m))\right)}\\
&= \exp{\left(-(\frac{1}{2}x^2 + \sigma m x)t\right)}.
\end{aligned}
$$
Therefore $W^\mathbb{Q}_t := W_t - \mathrm{i}\sigma mt$ is a $(\mathbb{Q}, \mathcal{F})$-standard Brownian motion. 

Next, we consider the dynamics of $u$ under $\mathbb{Q}$. By the correlation $\langle W, Z\rangle_t = \rho_u t$, decompose $W_t = \rho_u Z_t + \sqrt{1 - \rho_u^2}Z_t^\perp$, 
$$
\begin{aligned}
\phi_Z^{\mathbb{Q}}(x;t) &= E[\exp{(\mathrm{i}x Z_t )}M^m_t]\\
&= E[\exp{\left(\mathrm{i}x Z_t + \mathrm{i}\sigma m W_t -t\Psi_W(\sigma m)\right)}] \\
&= E\left[\exp{\left(\mathrm{i}x Z_t + \mathrm{i} \sigma m(\rho_uZ_t + \sqrt{1 - \rho_u^2} Z_t^\perp) -t(\Psi_Z(\rho_u \sigma m) + \Psi_{Z^\perp} (\sqrt{1 -\rho_u^2} \sigma m))\right)}\right] \\
& = \exp{\left(t(\Psi_{Z}(x + \rho_u \sigma m) - \Psi_Z(\rho_u \sigma m))\right)}\cdot \\
&= \exp{\left(-(\frac{1}{2}x^2 + \rho_u \sigma m x)t\right)},
\end{aligned}
$$
and $Z^{\mathbb Q}_t := Z_t - \mathrm{i}\rho_u \sigma m t$ is a $(\mathbb{Q}, \mathcal{F})$-standard Brownian motion.


We then compute the characteristic function of $J^u$ under $\mathbb{Q}(m)$. Under the leverage neutral measure, the characteristic function of $J^u$ becomes
$$
\begin{aligned}
\phi^\mathbb{Q}_{J^u}(x;t) & = E[\exp{\left(\mathrm{i}xJ^u_t\right)}M_{t}^m]\\
&= E[\exp{\left(\mathrm{i}x J^u_t +\mathrm{i}m (\sigma W_t + J_t) - t(\Psi_W(\sigma m) + \Psi_J( m))\right)}]\\
&= E[\exp{\left(\mathrm{i}x J^u_t + \mathrm{i}mJ_t -t\Psi_J(m)\right)}] \\
& = \exp{\left(t(\Psi_{J, J^u}(m,x) - \Psi_J(m))\right) },
\end{aligned}$$
where $\Psi_{J, J^u}(m,x)$ is the characteristic exponent of $(J, J^u)$. It follows that 
\begin{equation}
\label{che JU}\Psi_{J^u}^{\mathbb Q}(x) = \Psi_{J, J^u}(m, x) - \Psi_J(m).
\end{equation}
Thus, the $\mathbb{Q}$-dynamic of $u$ is obtained by Eq. \eqref{che JU} and by substituting the expressions for $Z^\mathbb{Q}$ into the original SDE of $u$.

Next, we consider the dynamics of $v$ and its correlation with $u$ under $\mathbb{Q}$. First, we shall compute the characteristic function of $(Z, B)$ based on relationship Eq. \eqref{W}:

\begin{equation}
\label{Z, B}
\begin{aligned}
& E^{\mathbb{Q}}[\exp{\left(\mathrm{i}xZ_t + \mathrm{i}yB_t\right)}] \\
= & E[\exp{\left(\mathrm{i}xZ_t + \mathrm{i}yB_t\right)}M_{t}^m]\\
= &E[\exp{\left(\mathrm{i}xZ_t + \mathrm{i}yB_t +\mathrm{i}m (\sigma W_t + J_t) - t(\Psi_W(\sigma m) + \Psi_J(m))\right)}]\\
= & E\left[\exp{(\mathrm{i}xZ_t + \mathrm{i}yB_t + \mathrm{i}\sigma m(\rho_u Z_t + \rho_v B_t + \sqrt{1- \rho_u^2 - \rho_v^2} W^\perp_t) - t\Psi_W(\sigma m) )}\right]\\
= &E[\exp{(\mathrm{i}(x +\rho_u \sigma m) Z_t - t\Psi_Z(\rho_u \sigma m) + \mathrm{i}(y + \rho_v\sigma m)B_t -t\Psi_B(\rho_v \sigma m) )}]\\
= & E[\exp{\left(\mathrm{i}xZ_t - t\Psi_Z(\rho_u \sigma m))\right]}E[\exp{(\mathrm{i}yB_t - t\Psi_B(\rho_v \sigma m))}]\\
= & E^\mathbb{Q}[\exp{\left(\mathrm{i}xZ_t\right)]}\exp{\left(-(\frac{1}{2}y^2 + \rho_v \sigma m y)t\right)},
\end{aligned}
\end{equation}
where the fourth equality holds because the $W^\perp$ terms cancels out by the independence of $W^\perp$ and $(Z, B)$. From the expression, we have that $B^\mathbb{Q}_t := B_t - \mathrm{i}\rho_v \sigma mt$ is a $(\mathbb{Q}, \mathcal{F})$-Brownian motion and that $Z$ and $B$ are independent under $\mathbb{Q}$. Furthermore, 
$$\begin{aligned}
\tilde B_t &= \int_0^t u_s^{-\frac{1}{2}} \mathrm{d}B(U_s)\\
&= \int_0^t u_s^{-\frac{1}{2}} \mathrm{d}(B^\mathbb{Q}(U_s) + \mathrm{i}\rho_v \sigma m U_t)\\
& =: \tilde B^\mathbb{Q}_t + \mathrm{i}\rho_v\sigma m \int_0^t \sqrt{u_s} \mathrm{d}s.
\end{aligned}$$
We claim that $\tilde B^\mathbb{Q}_t = \int_0^t u_s^{-\frac{1}{2}}\mathrm{d}B(U_s) - \mathrm{i}\rho_v \sigma m\int_0^t \sqrt{u_s}\mathrm{d}s$ is a $(\mathbb{Q}, \mathcal{F}_U)$-Brownian motion independent of $U$. Decompose $W_t =\rho_v B_t + \sqrt{1 -\rho_v^2}B_t^\perp$. Under $\left.\mathbb{Q}(m)\right.|_{\mathcal{F}_{U_t}}$,
there is
$$\begin{aligned}
    &E^{\mathbb{Q}}[\exp{(\mathrm{i}x\tilde B^{\mathbb{Q}}_t + \mathrm{i}yU_t})]\\
    =&  E[\exp{(\mathrm{i}x\tilde B^{\mathbb{Q}}_t + \mathrm{i}yU_t})M^m(U_t)]\\
    = &E[\exp{(\mathrm{i}x\tilde B^{\mathbb{Q}}_t + \mathrm{i}yU_t} +\mathrm{i}m (\sigma W(U_t) + J(U_t)) - U_t(\Psi_W(\sigma m) + \Psi_J(m))]\\
    = & E\left[\exp{(\mathrm{i}\int_0^t (x u_s^{-\frac{1}{2}} + \rho_v\sigma m)\mathrm{d}B(U_s) +x\rho_v \sigma m\int_0^t \sqrt{u_s}\mathrm{d}s )} \right.\\
    & \quad \quad \left.\cdot \exp{\left(\mathrm{i}yU_t +\mathrm{i}m (\sigma \sqrt{1 - \rho_v^2}B^\perp(U_t) + J(U_t)) - U_t(\Psi_W(\sigma m) + \Psi_J(m))\right)}\right].
\end{aligned}$$
We then proceed with an iterated conditioning argument. By conditioning on $\{B^\perp, U, J\}$ and using the independence of $B$ and $\{B^\perp, U, J\}$,
$$
\begin{aligned}
& E\left[\exp{\left(\mathrm{i}\int_0^t (xu_s^{-\frac{1}{2}} + \rho_v\sigma m)\mathrm{d}B(U_s)\right)}\mid \{B^\perp, U, J\}\right] \\
=& E\left[\exp{\left(-\frac{1}{2}\int_0^t(xu_s^{-\frac{1}{2}} + \rho_v \sigma m)^2 u_s \mathrm{d}s\right)}\right]\\
=& E\left[\exp{\left(-\frac{1}{2}\int_0^t(x^2 + 2x\rho_v \sigma m\sqrt{u_s} + \rho_v^2\sigma^2m^2u_s)\mathrm{d}s\right)}\right]\\
= & \exp{(-\frac{1}{2}x^2t)}E\left[\exp{\left(-x\rho_v\sigma m \int_0^t\sqrt{u_s}\mathrm{d}s+U_t\Psi_B(\rho_v\sigma m) \right)}\right].
\end{aligned}$$
Thus, the characteristic function is reduced to
$$\begin{aligned}
&E^{\mathbb{Q}}[\exp{(\mathrm{i}x\tilde B^{\mathbb{Q}}_t + \mathrm{i}yU_t})]\\
= & E\left[E[\exp{(\mathrm{i}\int_0^t (x u_s^{-\frac{1}{2}} + \rho_v\sigma m)\mathrm{d}B(U_s))}] \mid \{B^\perp, U, J\}\right.\\
& \quad \quad \left.\cdot \exp{\left(x\rho_v \sigma m\int_0^t \sqrt{u_s}\mathrm{d}s + \mathrm{i}yU_t +\mathrm{i}m (\sigma \sqrt{1 - \rho_v^2}B^\perp(U_t) + J(U_t)) - U_t(\Psi_W(\sigma m) + \Psi_J(m))\right)}\right]\\
= & \exp{(-\frac{1}{2}x^2t)}E\left[\exp{(\mathrm{i}yU_t +\mathrm{i}m (\sigma \sqrt{1-\rho_v^2}B^\perp(U_t) + J(U_t)) - U_t(\Psi_W(\sqrt{1-\rho_v^2}\sigma m) + \Psi_J(m))}\right]\\
=& \exp{(-\frac{1}{2}x^2t)}E[E\left[ \exp{(\mathrm{i}\rho_v\sigma m B(U_t) - U_t\Psi_B(\rho_v\sigma m))}\mid \{B^\perp, U, J\}\right]\\
& \quad\quad \cdot\exp{(\mathrm{i}yU_t +\mathrm{i}m (\sigma \sqrt{1-\rho_v^2}B^\perp(U_t) + J(U_t)) - U_t(\Psi_W(\sqrt{1-\rho_v^2}\sigma m) + \Psi_J(m))}]\\
= & \exp{(-\frac{1}{2}x^2t)}E\left[\exp{(\mathrm{i}yU_t + +\mathrm{i}m (\sigma W(U_t) + J(U_t)) - U_t(\Psi_W(\sigma m) + \Psi_J(m)))}\right]\\
= & \exp{(-\frac{1}{2}x^2t)}E\left[\exp{(\mathrm{i}yU_t)}M^m(U_t)\right]\\
= & \exp{(-\frac{1}{2}x^2t)}E^{\mathbb{Q}}[\exp{(\mathrm{i}y U_t)}],
\end{aligned}$$
where the second equation comes from the independence of $B$ and $\{B^\perp, U, J\}$ under $\mathbb{P}$. Thus, $\tilde B^\mathbb{Q}$ is a $(\mathbb{Q}, \mathcal{F}_U)$-Brownian motion independent of $U$. It also follows that $\tilde B$ and $U$ are independent under $\mathbb{Q}$ if and only if $\rho_v = 0$.

Moreover, by considering the characteristic function of $(J^u(U_t), \tilde{J}^v_t)$ under $\left.\mathbb{Q}(m)\right.|_{\mathcal{F}_{U_t}}$,
\begin{align}
& E^{\mathbb{Q}}[\exp{\left(\mathrm{i}xJ^u(U_t) + \mathrm{i}y\tilde{J}_t^v\right)}] \notag\\
=& E^{\mathbb{Q}}[\exp{\left(\mathrm{i}xJ^u(U_t) + \mathrm{i}y \tilde{J}_t^v\right)} M_{U_t}^m]] \notag\\
= &E[\exp{\left(\mathrm{i}xJ^u(U_t) + \mathrm{i}y \tilde{J}_t^v +\mathrm{i}m L(U_t) - U_t \Phi_L(m)\right)}]\notag\\
= & E[\exp{(\mathrm{i}xJ^u(U_t) + \mathrm{i}mL(U_t) - U_t\Psi_L(m))}] E[\exp{(\mathrm{i}y \tilde{J}_t^v)}] \notag\\
= & E^{\mathbb{Q}}[\exp{(\mathrm{i}xJ^u(U_t))}]E[\exp{(\mathrm{i}y\tilde{J}_t^v)}] \label{law JV},
\end{align}
where the third equality follows from the independence of $\tilde{J}^v$ and $\{L, U, J^u\}$ under $\mathbb{P}$. Thus, $J^u$ and $\tilde{J}^v$ are independent under $\mathbb{Q}$. By letting $x = 0$ in Eq. \eqref{law JV}, the law of $\tilde{J}^v$ under $\mathbb{Q}$ is unchanged. Following a similar argument, $\tilde{J}^v$ is independent of $Z$ under $\mathbb{Q}$. We also have the independence of $B$ and $J^u$ under $\mathbb{Q}(m)$ from
\begin{align*}
& E^{\mathbb{Q}}[\exp{\left(\mathrm{i}xB_t + \mathrm{i}yJ^u_t\right)}] \\
= & E[\exp{\left(\mathrm{i}xB_t + \mathrm{i}yJ^u_t\right)}M_{t}^m] \\
= & E[\exp{\left(\mathrm{i}xB_t + \mathrm{i}m\sigma W_t - t\Psi_W(\sigma m)  + \mathrm{i}yJ^u_t + \mathrm{i}m J_t - t\Psi_J(m)\right)}] \\
= & E[\exp{(\mathrm{i}xB_t + \mathrm{i}m\sigma W_t - t\Psi_W(\sigma m))}]\cdot E[\exp{(\mathrm{i}yJ^u_t + \mathrm{i}m J_t - t\Psi_J(m))}] \\
= & E[\exp{(\mathrm{i}xB_t + \mathrm{i}m L_t - t\Psi_L(m))}]\cdot E[\exp{(\mathrm{i}yJ^u_t + \mathrm{i}m L_t - t\Psi_L(m))}] \\
= & E^{\mathbb{Q}}[\exp{(\mathrm{i}xB_t)}]E^{\mathbb{Q}}[\exp{(\mathrm{i}yJ^u_t)}].
\end{align*}
 Finally, the $\mathbb{Q}$-dynamic of $v$ is obtained by substituting the expressions for $\tilde B^\mathbb{Q}$ into the corresponding SDE.

~\\(ii) Based on the results in the proof of (i), $u$ and $v$ are independent under $\mathbb{Q}$ if $\rho_v = 0$. Thus, under $\left.\mathbb{Q}(m)\right.|_{\mathcal{F}_t^X}$, we have
$$
\begin{aligned}
    \phi_X(m; t) &= E^{\mathbb Q}[\exp{(\mathrm{i}mX_t)}\cdot (M^m(U_{V_t}))^{-1}]\\
    &= E^{\mathbb Q}[\exp{(U_{V_t}\Psi_L(m))}]\\
    &= E^{\mathbb Q}[E^{\mathbb Q}[\exp{(U_{V_t}\Psi_L(m))}\mid V_t]]\\
    & = E^{\mathbb{Q}}[\phi_U^{\mathbb Q}(-\mathrm{i} \Psi_L(m); V_t)].
\end{aligned}
$$

\end{proof}

To formulate a mathematically tractable example, we impose an affine structure on time changes, under which the CTC model \eqref{CTC} typically admits a unique, strong solution followed by Corollary \ref{coro:TC SDE}.
\begin{corollary}
\label{affine spec}
When a joint affine specification $(X_t, u_t, v_t)$ is assumed, i.e., $\alpha^u(u_t) = \kappa_u(\theta_u - u_t)$, $\alpha^v(v_t) = \kappa_v(\theta_v - v_t)$ with $\kappa_i\theta_i \ge 0$, $\beta^i(\cdot) \equiv \sigma_i \ge 0$, $\gamma^i(\cdot) \equiv \eta_i \ge 0$, $i = u,v$, and $\rho_v = 0$, the characteristic function of $X$ is explicitly expressed as follows:
$$
    \phi_X(m; t) = \frac{1}{\pi}\int_0^{+\infty} \int_0^{+\infty} \phi_U^{\mathbb Q}(-\mathrm{i}\Psi_L(m);s)\operatorname{Re}\left[e^{-z s} \phi_V(-\mathrm{i}z;t)\right] \mathrm{d} z_I\,\mathrm{d} s, \quad m\in \mathbb{R}.
$$
Here, $z = z_R + \mathrm{i}z_I$ with $z_R \le 0$ and 
\begin{equation}
\label{ODE for V}
\phi_V(m;t):= E[e^{\mathrm{i}mV_t}] = \exp{({b^V(t)v_0 + c^V(t)})},
\end{equation}
with the affine exponents $b^V(t), c^V(t)$ solutions to the system of Riccati-type ODE system
$$
    \begin{aligned}
        &b^V(t)^\prime = \mathrm{i}m - \kappa_v b^V(t) + \frac{\sigma_v^2}{2}b^V(t)^2 + \Psi_{\tilde J^v}(\mathrm{i}\eta_v b^V(t)),\\
        & c^V(t)^\prime = \kappa_v \theta_v b^V(t),
    \end{aligned}
$$
with $b^V(0) = c^V(0) = 0$. The function $\phi_U^\mathbb{Q}(-\mathrm{i}\Psi_L(m); t)$ is given by
\begin{equation}
\label{ODE for T}
\phi_U^{\mathbb Q}(-\mathrm{i}\Psi_L(m);t) = \exp{({b^U(t)u_0 + c^U(t)})},
\end{equation}
with coefficients the solutions to 
$$
\begin{aligned}
    b^U(t)^\prime &= \Psi_L(m) - \kappa^{\mathbb{Q}}_u b^U(t) + \frac{\sigma_u^2}{2}b^U(t)^2 + \Psi_{J^u}^\mathbb{Q}(\mathrm{i}\eta_u b^U(t)),\\
    c^U(t)^\prime & = \kappa_u \theta_u b^U(t),
\end{aligned}
$$
with $b^U(0) = c^U(0) = 0,$ $\kappa^{\mathbb Q}_u = \kappa_u - \mathrm{i}\rho_u\sigma_u\sigma m$, and $\Psi_{J^u}^\mathbb{Q}(\cdot)$ is the characteristic exponent of $J^u$ given by Eq. \eqref{characteristic exponent}.
\end{corollary}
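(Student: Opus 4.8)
The plan is to build the formula in three stages: first reduce $\phi_X$ to an expectation over $V_t$ via the already-established Theorem~\ref{Th1}(ii), then make the two ingredients $\phi_U^{\mathbb Q}$ and $\phi_V$ explicit through the affine machinery, and finally convert the expectation over $V_t$ into the stated double integral by a Fourier/Laplace inversion of the density of $V_t$.

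First I would specialize Theorem~\ref{Th1}(ii). Since $\rho_v=0$ is assumed, that theorem gives directly $\phi_X(m;t)=E^{\mathbb Q}[\phi_U^{\mathbb Q}(-\mathrm i\Psi_L(m);V_t)]$. The same hypothesis $\rho_v=0$ makes the $\mathbb Q$-drift correction of $v$ in Theorem~\ref{Th1}(i) vanish and leaves the law of $\tilde J^v$ unchanged, so the law of $v$ --- hence of $V$ --- is identical under $\mathbb P$ and $\mathbb Q$; this is what lets me use $\phi_V(m;t)=E[\mathrm e^{\mathrm i m V_t}]$ (computed under $\mathbb P$) as the transform of the $\mathbb Q$-distribution of $V_t$ in the final step.

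Second, I would establish the exponential-affine forms. Because $U_V$ and its constituents are type-3 time changes, the time-change characteristics in Eq.~\eqref{TC characteristics} show that, in calendar time, the affine specification turns $u$ and $v$ into CIR-type affine jump-diffusions: $\mathrm dZ^{\mathbb Q}(U_t)$ and $\tilde B(V_t)$ contribute diffusion coefficients $\sigma_u\sqrt{u_t}$ and $\sigma_v\sqrt{v_t}$, while $J^u(U_t)$ and $\tilde J^v(V_t)$ contribute jumps whose L\'evy measures are scaled by $u_t$ and $v_t$, respectively. The $\mathbb Q$-drift of $u$ carries the shift $\kappa_u^{\mathbb Q}=\kappa_u-\mathrm i\rho_u\sigma_u\sigma m$ from Theorem~\ref{Th1}(i), whereas $v$ keeps $\kappa_v$. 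For such processes the transforms $E^{\mathbb Q}[\exp(\Psi_L(m)\int_0^t u_s\,\mathrm ds)]=\phi_U^{\mathbb Q}(-\mathrm i\Psi_L(m);t)$ and $E[\exp(\mathrm i m\int_0^t v_s\,\mathrm ds)]=\phi_V(m;t)$ are exponential-affine in the initial states; I would verify the ansatz $\exp(b(t)\cdot+c(t))$ by the Feynman--Kac PIDE (equivalently the affine transform theorem), collecting the coefficient of the state variable and the constant to obtain the two Riccati systems, the jump terms appearing as the corresponding $\mathbb Q$-characteristic exponents $\Psi_{J^u}^{\mathbb Q}$ and $\Psi_{\tilde J^v}$ evaluated at the stated arguments, and checking that the regularity conditions keep the solutions finite on $[0,t]$.

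The heart of the argument, and the step I expect to be the main obstacle, is the inversion that produces the $z_I$-integral. Here I would observe that $\phi_V(-\mathrm i z;t)=E^{\mathbb Q}[\mathrm e^{zV_t}]$ is the exponentially tilted Laplace transform of the nonnegative variable $V_t$, finite for $\operatorname{Re}(z)=z_R\le 0$. Writing $z=z_R+\mathrm i z_I$, the map $z_I\mapsto\phi_V(-\mathrm i z;t)$ is the Fourier transform of $\mathrm e^{z_R s}p_{V_t}(s)\mathbf 1_{\{s>0\}}$, where $p_{V_t}$ is the density of $V_t$; Fourier inversion together with the reality of $V_t$ (which gives $\phi_V(-\mathrm i\bar z;t)\mathrm e^{-\bar z s}=\overline{\phi_V(-\mathrm i z;t)\mathrm e^{-zs}}$, so the two halves of the line combine into twice a real part) yields $p_{V_t}(s)=\frac1\pi\int_0^\infty\operatorname{Re}[\mathrm e^{-zs}\phi_V(-\mathrm i z;t)]\,\mathrm dz_I$. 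Substituting this density into $\phi_X(m;t)=\int_0^\infty\phi_U^{\mathbb Q}(-\mathrm i\Psi_L(m);s)\,p_{V_t}(s)\,\mathrm ds$ and interchanging the order of integration by Fubini gives exactly the claimed formula. The delicate points are all analytic: I must confirm that $V_t$ admits a density (absolute continuity of the integrated CIR-type process), that $\phi_V(-\mathrm i z;t)$ decays fast enough in $z_I$ for the inversion integral to converge absolutely --- the damping freedom $z_R\le 0$ is precisely what I would exploit here --- and that the resulting double integral is absolutely convergent so that Fubini applies; dominated-convergence estimates on the Riccati-generated exponents would supply these bounds.
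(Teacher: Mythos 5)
Your proposal is correct and follows essentially the same route as the paper's proof: reduce via Theorem \ref{Th1}(ii) to an expectation over $V_t$ (whose law is unchanged under $\mathbb{Q}$ since $\rho_v=0$), obtain the exponential-affine transforms with the Riccati systems from the affine structure (the paper cites \cite{filipovic2001general} where you verify the ansatz directly), and recover the double integral by generalized Fourier inversion of the density of $V_t$. Your additional attention to the analytic justifications (existence of the density, decay in $z_I$, Fubini) is more careful than the paper's treatment but does not change the argument.
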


\begin{proof}{Proof}
By the independence between $B$, $J^v$ and $(L, U)$ under measure $\mathbb{P}$, the distribution of $V_t$ is unchanged under $\mathbb{Q}$ by Theorem \ref{Th1}. Combined with the result of Theorem \ref{Th1} and the property of inverse generalized Fourier transform:
$$f_{V_t}(s) = \frac{1}{\pi}\int_0^\infty \operatorname{Re}[e^{-zs}\phi_V(-\mathrm{i}z; t)]\mathrm{d}z_I,$$
where $f_{V_t}$ is the density function of $V_t$ and $z = z_R + \mathrm{i}z_I$ with $z_R \le 0$, it follows that
\begin{equation}
\label{ind chf}
\begin{aligned}
    \phi_X(m; t) & = E^{\mathbb{Q}}[\phi_U^{\mathbb Q}(-\mathrm{i} \Psi_L(m); V_t)]\\
    &= E[\phi_U^\mathbb{Q}(-\mathrm{i}\Psi_L(m); V_t)]\\
    &= \frac{1}{\pi}\int_0^\infty \phi_U^{\mathbb Q}(-\mathrm{i}\Psi_L(m);s)f_{V_t}(s) \mathrm{d} s\\
    &= \frac{1}{\pi}\int_0^\infty \int_0^{\infty} \phi_U^{\mathbb Q}(-\mathrm{i}\Psi_L(m);s)\operatorname{Re}\left[e^{-z s} \phi_V(-\mathrm{i}z;t)\right] \mathrm{d} z_I \mathrm{d} s.
\end{aligned}
\end{equation}
If an affine structure is imposed for $U$ and $V$, we see from Theorem \ref{Th1} that, under $\mathbb{Q}(m)$, $u$ is still affine with $\kappa_u^\mathbb{Q} = \kappa_u - \mathrm{i}\rho_u\sigma_u\sigma m$, $\kappa^\mathbb{Q}_u\theta^\mathbb{Q}_u = \kappa_u\theta_u$, and the characteristic exponent of $J^u$ given by Eq. \eqref{che JU}. Thus, from the results of \cite{filipovic2001general} for generalized affine volatility, the generalized Fourier transform of $U_t$ is given by
$$\phi_U^{\mathbb Q}(-\mathrm{i}\Psi_L(m);t) = \exp{({b^U(t)u_0+ c^U(t)})},$$
where the coefficients $b^U(t)$ and $c^U(t)$ are given by Eq. \eqref{ODE for T}. Likewise, combined with the fact that the law of $V$ is unchanged under $\mathbb{Q}$, the characteristic function of $\phi_V$ has coefficients $b^V(t)$ and $c^V(t)$ given by Eq. \eqref{ODE for V}. Then the result follows.
\end{proof}

\subsection{Specifications}
\label{specifications}
In this section, we introduce some composite time change models that will be used in empirical research. All specifications assume that $V$ is independent of $L$ and $U$, so Theorem \ref{Th1}(ii) can be applied to compute the characteristic function of the log return.
\begin{example}[Composite Heston Model]
\label{eg comp heston}
    \begin{equation}
    \label{Composite Heston}
             \begin{array}{lr}
            \mathrm{d} L_t = -\frac{1}{2} \mathrm{d}t + \mathrm{d}W(t), &   \\
             \mathrm{d} u_t = \kappa_u (\theta_u - u_t)\mathrm{d}t + \sigma_u \mathrm{d}Z(U_t), &\\
             \mathrm{d} v_t = \kappa_v (\theta_v - v_t)\mathrm{d}t + \sigma_v \mathrm{d}\tilde B(V_t),
             \end{array}
    \end{equation}
with $\langle W, Z\rangle_t = \rho_u t$, $\langle W, B\rangle_t = 0$, and $\langle Z, B\rangle_t = 0$.
\end{example}

\begin{example}[Composite Jump Heston]
\label{eg comp jh}
A Jump Heston (JH) model follows:
 \begin{equation}
             \label{JH}
             \begin{array}{lr}
             \mathrm{d}L_t = -\Psi(-\mathrm{i}) \mathrm{d}t + \mathrm{d}J(t), &   \\
             \mathrm{d} u_t = \kappa_u (\theta_u -u_t)\mathrm{d}t + \eta_u \mathrm{d}J^u(U_t),
             \end{array}
            \end{equation}
where $J^u$ is a subordinator such that 
\begin{equation}
\label{dJ^u}\mathrm{d} J^u_t = \begin{cases}
    0, & \quad \mathrm{d} J_t \ge 0,\\
    -\mathrm{d} J_t, & \quad \mathrm{d} J_t < 0,
\end{cases}
\end{equation}
and $J$ is a CGMY process whose characteristic exponent is given by
$$\Psi_J(m) = C\Gamma(- Y) \left(G + \mathrm{i}m)^Y - G^Y + (M - \mathrm{i}m)^Y - M^Y\right),$$
for $C > 0$, $G \ge 0$, $M \ge 0$, $Y \in (0, 2)$. The model exhibits leverage effect purely by co-jumps in the base process and activity rate process. \cite{ballotta2022smiles} demonstrate a superior performance of an ordinary time change JH model over other classic models, e.g., Heston, Barndorff-Nielsen-Shephard model (cf. \cite{barndorff2003}). 

To extend the Jump Heston model to CTC models, we consider the following Composite JH model
      \begin{equation}
      \label{Composite JH}
             \begin{array}{lr}
             \mathrm{d}L_t = -\Psi(-\mathrm{i})\mathrm{d}t + \mathrm{d}J(t), &   \\
             \mathrm{d} u_t = \kappa_u(\theta_u - u_t)\mathrm{d}t + \eta_u \mathrm{d}J^u(U_t), &\\
             \mathrm{d} v_t = \kappa_v(\theta_v - v_t)\mathrm{d}t + \sigma_v \mathrm{d}\tilde B(V_t),
             \end{array}
      \end{equation}
where $J^u$ is as defined above, and $\tilde B$ is a Brownian motion. In this specification, leverage is introduced purely by simultaneous jumps of return and volatility. 

In both JH and Composite JH models, we obtain from Eq. \eqref{characteristic exponent} and Eq. \eqref{dJ^u} that
$$\begin{aligned}
    \Psi_{J^u}^{\mathrm{Q}}(x) & = \Psi_{J, J^u}(m,x) - \Psi_J(m) \\
    & =C\Gamma(-Y) \left((G + \mathrm{i}(x + m))^Y - (G + \mathrm{i}m)^Y\right).
\end{aligned}$$
\end{example}

\section{Derivative Pricing Under the CTC Model}
This section develops pricing methods for derivatives under CTC models. In Section \ref{CTC-COS method}, we present the CTC-COS method for efficient European-style option pricing via characteristic function-based methods. And in Section \ref{VIX derivatives}, we analyze VIX derivatives pricing through explicit VIX-spot variance relationships for affine models and Laplace transforms for general cases. For implementation, we provide exact simulation schemes that exploit the models' structure to avoid path-dependent discretizations.
\subsection{CTC-COS Method}
\label{CTC-COS method}
Since we have obtained the expression for the characteristic function of log price process $X$, the pricing of European options follows from, for example, FFT (\cite{carr1999option}) or the cosine method (\cite{fang2009novel}), that efficiently compute option prices. However, evaluating the characteristic function of a CTC model usually involves repeated integration as shown in Corollary \ref{affine spec}. In the following, we propose a CTC-COS method that decouples such double integration in option pricing and achieves lower computational complexity.
\begin{theorem}[CTC-COS]
For a CTC model with $\rho_v = 0$, given expiry date $\bar{T}$, the price of a European-style option with payoff $G(Y, \bar{T})$, $Y:= \ln S_{\bar T}$, is numerically approximated by
\begin{equation} 
\label{call price}
\begin{aligned}
V\left(S_0, \bar{T}\right) & \approx \frac{2e^{-r \bar T}}{c}\left.\int_0^c \left(\sum_{k=0}^{N-1}{}^\prime \operatorname{Re}\left\{\phi_U^{\mathbb Q}\left(-\mathrm{i}\Psi_L(\frac{k\pi}{b-a});v\right) A_k\right\} F_k\right)\right. \\
& \quad \quad \left.\left(\sum_{l=0}^{M-1}{}^\prime \operatorname{Re}\left\{\phi_V(\frac{l\pi}{c}; \bar{T})\right\} \cos \left(\frac{l\pi v}{c}\right)\right)\mathrm{d}v.\right.
\end{aligned}
\end{equation}
Here $\sum {}^\prime$ means the first term in the summation is weighted by one-half, $a, b, c \in \mathbb{R}$ are integration range parameters with $a < b$ and $c > 0$,
\begin{equation}
\label{Ak}
A_k = \exp{\left\{-\mathrm{i} k \pi \frac{a}{b-a}+\frac{\mathrm{i}k\pi(\ln S_0 + r\bar T)}{b-a}\right\}},
\end{equation}
and
\begin{equation}
\label{Vk}
F_k = \frac{2}{b-a}\int_a^b G(y,\bar{T})\cos \left(k \pi \frac{y-a}{b-a}\right)\mathrm{d}y.
\end{equation}
\end{theorem}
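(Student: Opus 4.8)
The plan is to build the formula from two nested Fourier-cosine (COS) expansions: an outer one for the law of the log-price that turns risk-neutral valuation into a finite sum over $k$, and an inner one for the law of the second activity clock $V_{\bar T}$ that linearizes the characteristic function of the composite model; the last step is to interchange the resulting finite sums with a single integral over $v$.

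First I would start from risk-neutral valuation $V(S_0,\bar T)=e^{-r\bar T}E[G(Y,\bar T)]$ with $Y=\ln S_{\bar T}=\ln S_0+r\bar T+X_{\bar T}$, so that $\phi_Y(m)=e^{\mathrm{i}m(\ln S_0+r\bar T)}\phi_X(m;\bar T)$. Applying the COS method of \cite{fang2009novel} on a truncation interval $[a,b]$, the density of $Y$ is approximated by $f_Y(y)\approx\sum_{k=0}^{N-1}{}^\prime\frac{2}{b-a}\operatorname{Re}\{\phi_Y(\tfrac{k\pi}{b-a})e^{-\mathrm{i}k\pi a/(b-a)}\}\cos(k\pi\frac{y-a}{b-a})$, and substituting this into the valuation integral gives $V(S_0,\bar T)\approx e^{-r\bar T}\sum_{k=0}^{N-1}{}^\prime\operatorname{Re}\{\phi_X(\tfrac{k\pi}{b-a};\bar T)A_k\}F_k$, where the phase $e^{-\mathrm{i}k\pi a/(b-a)}$ together with the price shift $e^{\mathrm{i}k\pi(\ln S_0+r\bar T)/(b-a)}$ arising from $\phi_Y$ is precisely the constant $A_k$ of \eqref{Ak}, and $F_k$ is the payoff coefficient \eqref{Vk}. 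This is the standard COS derivation, and the normalizing factors $\frac{2}{b-a}$ and $\frac{b-a}{2}$ cancel.

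Next I would replace $\phi_X$ by its composite representation. Since $\rho_v=0$, Theorem \ref{Th1}(ii) gives $\phi_X(m;\bar T)=E[\phi_U^{\mathbb Q}(-\mathrm{i}\Psi_L(m);V_{\bar T})]$; because $V$ is independent of $(L,U)$ and $V_{\bar T}\ge 0$, I write this as $\int_0^\infty\phi_U^{\mathbb Q}(-\mathrm{i}\Psi_L(m);v)\,f_{V_{\bar T}}(v)\,\mathrm{d}v$. I then apply a half-range COS expansion to the density $f_{V_{\bar T}}$ on $[0,c]$ (taking $a=0,\ b=c$), namely $f_{V_{\bar T}}(v)\approx\frac{2}{c}\sum_{l=0}^{M-1}{}^\prime\operatorname{Re}\{\phi_V(\tfrac{l\pi}{c};\bar T)\}\cos(\tfrac{l\pi v}{c})$. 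The crucial point is that this approximant is \emph{real-valued}, since the half-range support forces coefficients $\frac{2}{c}\operatorname{Re}\{\phi_V\}$ that are real.

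Finally I would substitute this expansion into $\phi_X$ and use realness to move the real-part operator inside the $v$-integral, obtaining $\operatorname{Re}\{\phi_X(\tfrac{k\pi}{b-a};\bar T)A_k\}\approx\int_0^c\operatorname{Re}\{\phi_U^{\mathbb Q}(-\mathrm{i}\Psi_L(\tfrac{k\pi}{b-a});v)A_k\}\,\tilde f_V(v)\,\mathrm{d}v$, where $\tilde f_V$ is the real cosine approximant above. Interchanging the finite sum over $k$ with the $v$-integral (trivial Fubini for a finite sum) and collecting the two finite sums as a product inside a single integral over $[0,c]$, while pulling out the constant $\frac{2}{c}$, yields exactly \eqref{call price}. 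The step requiring the most care is the commutation of $\operatorname{Re}\{\cdot\}$ with the $v$-integration: it is legitimate precisely because $V_{\bar T}\ge 0$ mandates the half-range expansion whose weight $\tilde f_V$ is real, so the complex constant $A_k$ and the characteristic function $\phi_U^{\mathbb Q}$ are the only complex factors; the rest is bookkeeping of the two truncation ranges and the normalizing constants.
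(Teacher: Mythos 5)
Your proposal is correct and follows essentially the same route as the paper's own proof: an outer COS expansion of the density of $Y=\ln S_{\bar T}$ on $[a,b]$ yielding $V\approx e^{-r\bar T}\sum_k{}^\prime\operatorname{Re}\{\phi_X A_k\}F_k$, followed by the representation $\phi_X(m;\bar T)=\int_0^\infty\phi_U^{\mathbb Q}(-\mathrm{i}\Psi_L(m);v)f_{V_{\bar T}}(v)\,\mathrm{d}v$ from Theorem \ref{Th1}(ii) with $\rho_v=0$, an inner half-range cosine expansion of $f_{V_{\bar T}}$ on $[0,c]$, and a final interchange of the finite sums with the $v$-integral. Your explicit justification that the real-valuedness of the inner cosine approximant licenses moving $\operatorname{Re}\{\cdot\}$ inside the integral is a point the paper leaves implicit, but it is the same argument.
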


\begin{proof}{Proof}

To apply the cosine expansion to the pricing of European-style options, we denote by $f_Y$ the density function of $Y = \ln S_{\bar T}$ and by $\phi_Y$ its characteristic function. Since the density function decays to zero at $\pm \infty$, we choose an interval $[a, b] \subset \mathbb{R}$ large enough such that $\hat f_Y := f_Y 1_{[a, b]}$ approximates $f_Y$ well.

The cosine expansion of $\hat f_Y$ reads
$$\hat f_Y(x) = \sum_{k=0}^{\infty}{}^\prime U_k \cos{\left(\frac{k\pi (x-a)}{b-a}\right)},$$
where  
\begin{equation}
\begin{aligned}
U_k &= \frac{2}{b-a} \int_a^b \hat f_Y(x) \cos{\left(\frac{k\pi (x-a)}{b-a}\right)} \mathrm{d}x\\
&= \frac{2}{b-a} \operatorname{Re}\left\{\hat \phi_Y\left(\frac{k\pi}{b-a}\right)\exp{\left(\frac{-\mathrm{i}k\pi a}{b-a}\right)}\right\}.
\end{aligned}
\end{equation}
with $\hat \phi_Y(w):= \int_{a}^b e^{\mathrm{i}wx} \hat f_Y(x)\mathrm{d}x$. By our assumption, $\phi_Y(w):= \int_{\mathbb{R}}e^{\mathrm{i}wx}f_Y(x)\mathrm{d}x$ is well approximated by $\hat \phi_Y(w)$. Thus, we approximate $U_k$ by 
$$\hat U_k := \frac{2}{b-a} \operatorname{Re}\left\{\phi_Y\left(\frac{k\pi}{b-a}\right)\exp{\left(\frac{-\mathrm{i}k\pi a}{b-a}\right)}\right\}.$$
Then we obtain a truncated approximation of $\hat f_Y$ as $$\begin{aligned}\bar f_Y(x) &= \sum_{k=0}^{N-1}{}^\prime \hat U_k \cos{\left(\frac{k\pi (x-a)}{b-a}\right)}\\
&= \frac{2}{b-a}\sum_{k=0}^{N-1}{}^\prime \operatorname{Re}\left\{\phi_Y\left(\frac{k\pi}{b-a}\right)\exp{\left(\frac{-\mathrm{i}k\pi a}{b-a}\right)}\right\}\cos{\left(\frac{k\pi (x-a)}{b-a}\right)}.\end{aligned}$$

Note that $X_{\bar{T}} = Y - r\bar T - \ln S_0$, which implies $\phi_Y(x)= \phi_X(x; \bar T)\exp{\left(\mathrm{i}x(\ln S_0 + r\bar T)\right)}$. We thus compute the option price as
$$
\begin{aligned}
V\left(S_0, \bar{T}\right) & = e^{-r\bar T} E[G(Y,\bar T)] \\
& = e^{-r\bar T} \int_{-\infty}^\infty G(y, \bar{T}) f_Y(y) \mathrm{d}y\\
& \approx e^{-r\bar T} \int_{a}^b G(y, \bar{T}) \bar f_Y(y) \mathrm{d}y\\
& = e^{-r\bar T} \int_a^b \frac{2}{b-a}G(y, \bar{T})\left(\sum_{k=0}^{N-1}{}^\prime \operatorname{Re}\left\{\phi_Y(\frac{k\pi}{b-a})\exp{\left(\frac{-\mathrm{i}k\pi a}{b-a}\right)}\right\}\cos{\left(\frac{k\pi (y-a)}{b-a}\right)}\right) \mathrm{d}y \\
& = \frac{2e^{-r\bar T}}{b-a}\sum_{k=0}^{N-1}{}^\prime \left(\operatorname{Re}\left\{\phi_X\left(\frac{k\pi}{b-a}; \bar T\right)\exp{\left(-\mathrm{i} k \pi \frac{a}{b-a}+\frac{\mathrm{i}k\pi(\ln S_0 + r\bar T)}{b-a}\right)}\right\} \right.\\
& \quad \quad \cdot\left. \int_a^b G(y, \bar{T})\cos{\left(\frac{k\pi (y-a)}{b-a}\right)}\right) \mathrm{d}y.
\end{aligned}
$$
We denote $A_k$ and $F_k$ as in Eq. \eqref{Ak} and Eq. \eqref{Vk}, respectively. Then the formula above is simplified as
$$V\left(S_0, \bar{T}\right) \approx e^{-r \bar T} \sum_{k=0}^{N-1}{}^\prime\operatorname{Re}\left\{\phi_X\left(\frac{k \pi}{b-a}; \bar{T}\right) A_k\right\}F_{k}.$$

According to Theorem \ref{Th1}, under the condition $\rho_v = 0$, 
$$
\phi_X\left(\frac{k \pi}{b-a}; \bar{T}\right) = E^\mathbb{Q}\left[\phi_U^{\mathbb Q}\left(-\mathrm{i}\Psi_L(\frac{k\pi}{b-a}); V_{\bar{T}})\right)\right].
$$
and the distribution of $V_{\bar{T}}$ remains invariant under $\mathbb{Q}$. Then the density of $V_{\bar T}$, denoted by $f_{V_{\bar T}}$, is supported on $[0, \infty)$. As a result, 
$$\phi_X\left(\frac{k \pi}{b-a}; \bar{T}\right) = \int_0^\infty \phi_U^{\mathbb Q}\left(-\mathrm{i}\Psi_L(\frac{k\pi}{b-a}); v\right)f_{V_{\bar T}}(v)\mathrm{d}v.$$
Similar to the treatment of $f_Y$, we apply the cosine expansion to $f_{V_{\bar T}}$ and truncate the integration above by the upper limit $c$:
$$f_{V_{\bar T}}(x) \approx  \frac{2}{c}\sum_{k=0}^{M-1}{}^\prime \operatorname{Re}\left\{\phi_V\left(\frac{k\pi}{c}; \bar T\right)\right\} \cos{\left(\frac{k\pi x}{c}\right)}.$$
Thus, the option price is further computed as
$$
\begin{aligned}
&\quad V\left(S_0, \bar{T}\right) \\
& \approx e^{-r \bar T} \sum_{k=0}^{N-1}{}^\prime \operatorname{Re}\left\{E^\mathbb{Q}\left[\phi_U^{\mathbb Q}(-\mathrm{i}\Psi_L(\frac{k\pi}{b-a}); V_{\bar{T}})\right]A_k\right\} F_{k}\\
& \approx e^{-r \bar T} \sum_{k=0}^{N-1}{}^\prime \operatorname{Re}\left\{ \frac{2A_k}{c} \int_0^c \phi_U^{\mathbb Q}\left(-\mathrm{i}\Psi_L(\frac{k\pi}{b-a});v\right)\left( \sum_{l=0}^{M-1}{}^\prime\operatorname{Re} \left\{\phi_V\left(\frac{l\pi}{c}; \bar T\right)\right\}\cos{\left(\frac{l\pi v}{c}\right)}\right)\mathrm{d}v \right\}F_k\\
& = \frac{2e^{-r \bar T}}{c}\left.\int_0^c \left(\sum_{k=0}^{N-1}{}^\prime \operatorname{Re}\left\{\phi_U^{\mathbb Q}(-\mathrm{i}\Psi_L(\frac{k\pi}{b-a});v) A_k\right\} F_k\right)\left(\sum_{l=0}^{M-1}{}^\prime \operatorname{Re}\left\{\phi_V(\frac{l\pi}{c}; \bar{T})\right\} \cos \left(\frac{l\pi v}{c}\right)\right)\mathrm{d}v\right.
\end{aligned}
$$
The last equation interchanges the order of integral and the summation, and is rearranged such that the summations with respect to time change $U$ and $V$ are separate.
\end{proof}

\begin{remark}
The payoff for plain vanilla options with strike price $K$ reads
$$G(y, \bar{T}) := \left[\alpha \cdot \left(e^y-K\right)\right]_{+} \quad \text{with} \quad \alpha=\left\{\begin{array}{cl}
1 & \text{for a call} \\
-1 & \text{for a put.}
\end{array}\right.$$
In this case, we note that $F_k$ admits an explicit form (cf. \cite{fang2009novel})
$$F_k = \left\{\begin{array}{cl}
\frac{2}{b-a}\left(\chi_k(0, b) - K\psi_k(0, b)\right) & \text{for a call,} \\
\frac{2}{b-a}\left(-\chi_k(a, 0) + K\psi_k(a, 0)\right) & \text{for a put,} 
\end{array}\right.$$
where
$$\begin{aligned}
\chi_k(c, d):= & \frac{1}{1+\left(\frac{k \pi}{b-a}\right)^2}\left[\cos \left(k \pi \frac{d-a}{b-a}\right) e^d-\cos \left(k \pi \frac{c-a}{b-a}\right) e^c\right. \\
& \left.\quad+\frac{k \pi}{b-a} \sin \left(k \pi \frac{d-a}{b-a}\right) e^d-\frac{k \pi}{b-a} \sin \left(k \pi \frac{c-a}{b-a}\right) e^c\right]
\end{aligned}$$
and
$$\psi_k(c, d):= \begin{cases}{\left[\sin \left(k \pi \frac{d-a}{b-a}\right)-\sin \left(k \pi \frac{c-a}{b-a}\right)\right] \frac{b-a}{k \pi}} & k \neq 0, \\ d-c & k=0 .\end{cases}$$
\end{remark}
\begin{remark}
The approximation error contains three parts: the integration range truncation error $\varepsilon_1$ induced by truncating the integration range from $(-\infty,\infty)$ to $[a, b]$ and from $[0, \infty)$ to $[0, c]$, the series truncation error $\varepsilon_2$ induced by truncating the cosine expansion series, and the error $\varepsilon_3$ induced by replacing $U_k$ by $\hat{U}_k$ in each term of the cosine expansion series. Error $\varepsilon_1, \epsilon_3$ are controlled by $a, b, c$ and converge to zero as $a\to -\infty$ and $b, c\to \infty$. The series truncation error $\varepsilon_2$ is controlled by $M$ and $N$. If the first derivatives of the densities $f_Y$ and $f_V$ are integrable, then $\epsilon_2$ converges to zero as $M, N\to \infty$ and $a\to -\infty$, $b, c\to +\infty$. Readers may refer to \cite{fang2009novel} for further error analysis. In Appendix \ref{accuracy of COS}, we further conduct numerical analysis of the accuracy of CTC-COS method, exemplified by Composite Heston model.
\end{remark}

In summary, the approximation above is deduced by performing double cosine series expansions and then reordering the summation and integration. A direct numerical computation can be applied if the Fourier transform of time changes $U$ and $V$ are available. 

Since the summations within the integral are separate, the overall complexity is $O(ND)$ under affine specifications, where $D$ is the number of time steps in numerical integration. Thus, the complexity of call price computation for affine CTC models is the same as their ordinary time-change counterparts, including Composite Heston and Composite JH in Example \ref{JH}.

In practice, when pricing the whole volatility surface, the number of strike prices does not add computational complexity since Fourier-based methods like COS allow for a separation of strikes and the underlying asset. Specifically, the strike price is only incorporated in $F_k$ in the numerical approximation \eqref{call price} and can be computed in advance.

Meanwhile, for affine models, we can simultaneously compute $\phi_U^{\mathbb{Q}}(\cdot; v)$ and $\phi_V^{\mathbb{Q}}(\cdot; \bar{T})$ across all maturities $\bar T$. This is because the characteristic functions are computed by solving ODE systems \eqref{ODE for V} and \eqref{ODE for T} through temporal discretization. And by discretizing for the largest maturities, the corresponding coefficients for smaller maturities are solved along the discretization process.

\begin{algorithm}
\label{alg0}
\renewcommand{\algorithmicrequire}{\textbf{Input:}}
\renewcommand{\algorithmicensure}{\textbf{Output:}}
\caption{Simulation Method for Call Option Price Under Composite Heston Model (Eq. \eqref{Composite Heston})} 
\label{simu_alg} 
\begin{algorithmic}
\Require Maturity $\bar{T}$, strike $K$, number of simultion sample path $N$, temporal discretization step size $\delta$, model parameters $\Theta$ (with $\rho_u =\rho \in [-1, 0]$ and $\rho_v = 0$).
\\
\vspace{-1.5em}
{\noindent}\rule[-10pt]{14.3cm}{0.05em}
\State Define $M_v = \bar{T} / \delta$. Obtain a $N \times M_v$ array $(B_{n,m})$, where the elements $B_{n,m}$ are i.i.d. samples of standard normal distribution.
\For{$n = 1, 2, \dots, N$}
\State Obtain $\{v_{0}, v_{\delta}, \cdots, v_{M_v\delta}\}$ from a proper discretization scheme of $v$ in Eq. \eqref{Composite Heston} using $\{B_{n,m}, m = 1,\cdots, M_v\}$.
\EndFor
\State Sum up $v$ to obtain $\{V_0, V_{\delta}, \cdots, V_{M_v \delta}\}$.
\State Obtain two $N \times M_v$ arrays, where each term $Z_{n, m}$ (of the first array) is sampled independently from std normal $\mathcal{N}_{m,n}^1$ and each term $W_{n, m}$ (of the second array) is sampled independently from $\rho \mathcal{N}_{m,n}^1 + \sqrt{1-\rho^2} \mathcal{N}_{m,n}^2$. $\mathcal{N}_{m,n}^1$ and $\mathcal{N}_{m,n}^2$ are independent.

\For{$n = 1, 2, \dots, N$}
\State Obtain $\{u_{0}, u_{V_\delta}, \cdots, u_{V_{M_v\delta}}\}$ from a proper discretization scheme of $u$ in Eq. \eqref{Composite Heston} using $\{Z_{n,m}, m=1,\cdots, M_u\}$. That is, discretizing
$$\begin{aligned}
    \mathrm{d}u_{V_t} &= \kappa_u(\theta_u- u_{V_t})\mathrm{d}V_t + \sigma_u\mathrm{d}Z(V_t)\\
    &= \kappa_u (\theta_u - u_{v_t}) v_t \mathrm{d}t + \sigma_u \sqrt{v_t} \mathrm{d}\bar Z_t.
\end{aligned}$$
\EndFor

\For{$n = 1, 2, \dots, N$}
\State Obtain $\{L_{0}, L_{U_{V_\delta}}, \cdots, L_{U_{V_{M_v\delta}}}\}$ from a proper discretization scheme of $L$ in Eq. \eqref{Composite Heston} using $\{W_{n,m}, m=1,\cdots, M_u\}$. That is,
discretizing 
$$\begin{aligned}\mathrm{d}L_{U_{V_t}} &= -\frac{1}{2}\mathrm{d}U_{V_t} + \mathrm{d}W(U_{V_t})
\\&= -\frac{1}{2}u_{V_t}v_t\mathrm{d}t + \sqrt{u_{V_t}v_t}\mathrm{d}\bar W_t.\end{aligned}$$
\EndFor

\State Sum up and obtain $L_{U_{V_{M_v\delta}}}$
\Ensure $V\left(S_0, K, \bar{T}\right) = \operatorname{Mean}(S_0\exp{(L_{U_{V_{M_v\delta}}}}) - K)^+.$
\end{algorithmic} 
\end{algorithm}
\subsection{VIX Derivatives}
\label{VIX derivatives}



To price VIX derivatives, it is common practice to first derive the relationship between the VIX and spot variance at maturity since the distribution of VIX is not directly available. It's known that $\text{VIX}^2$ is linearly dependent on the spot variance for the Heston model. However, for most other models, such relationship is implicit and requires simulation or a numerical procedure to derive. 

For example, as shown in \cite{jacquier2018VIX} for rough Bergomi models, the $\text{VIX}_T^2$ is expressed as an integral form and a computationally costly simulation is needed to obtain samples of $\text{VIX}_T$. Moreover, the VIX-Spot relationship is also implicit for a 3/2 model, where a numerical differentiation is needed that leads to additional computational cost.

In the following, we first establish explicit VIX-spot variance relationships for both ordinary and composite time change models (in Section \ref{ordinary time change} and \ref{CTC models:affine}, respectively). In Section \ref{exact simulation}, we derive an exact simulation scheme for certain CTC models. The approach avoids path simulation by directly sampling terminal distributions. And in Section \ref{CTC models:nonaffine}, we consider non-affine cases using numerical differentiation.

\subsubsection{Ordinary Time Change Models}
\label{ordinary time change}
\begin{proposition}
\label{VIX formula}
Consider a time change model $X_t = L_{T_t}$ in Eq. \eqref{CTC} with $V_t \equiv t$ and an affine process $u$ as considered in Corollary \ref{affine spec}, i.e., 
$$\mathrm{d}u_t = \kappa_u (\theta_u - u_t)\mathrm{d}t + \sigma_u \mathrm{d}Z(U_t) + \eta_u \mathrm{d}J^u(U_t).$$
Define $\mathcal{M}(x,t) := \frac{e^{xt} - 1}{x t}$ and $\bar{\tau} = 30 / 365$. 
Then, it holds that
\begin{equation*}\textbf{}
\text{VIX}_t^2 = au_t + b,
\end{equation*} where
\begin{align*}
& a = 2 (\Psi(-\mathrm{i}) - E[J_1])\mathcal{M}(m_u,\bar\tau),\\
& b = 2 (\Psi(-\mathrm{i}) - E [J_1])\frac{\kappa_u\theta_u}{m_u} (\mathcal{M}(m_u, \bar\tau) - 1),\\
& m_u = \eta_u E[J^u_1] - \kappa_u.
\end{align*}
\end{proposition}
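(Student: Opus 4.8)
The plan is to reduce the VIX definition to a conditional expectation of the log-price increment, strip off the Lévy mean via optional sampling through the time change, and then evaluate the expected business-time increment by solving a scalar linear ODE. As a first step I would rewrite the index. Since $X_t = \ln(e^{-rt}S_t/S_0)$ gives $S_t = S_0 e^{rt+X_t}$, the log-return inside the CBOE definition collapses to $\ln(e^{-r\bar\tau}S_{t+\bar\tau}/S_t) = X_{t+\bar\tau}-X_t$. With $V_t\equiv t$ we have $\mathcal{F}^X_t = \mathcal{F}_{U_t}$ and $X_{t+\bar\tau}-X_t = L_{U_{t+\bar\tau}}-L_{U_t}$, so that $\operatorname{VIX}_t^2 = -\tfrac{2}{\bar\tau}E_t[L_{U_{t+\bar\tau}}-L_{U_t}]$.

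Second, I would extract the Lévy mean. Because $L$ is integrable, $M_s := L_s - E[L_1]s$ is a locally square-integrable $\mathcal{F}$-martingale, and $U_t, U_{t+\bar\tau}$ are $\mathcal{F}$-stopping times with $E[U_t]<\infty$; optional sampling then gives $E_t[M_{U_{t+\bar\tau}}-M_{U_t}]=0$, i.e. $E_t[L_{U_{t+\bar\tau}}-L_{U_t}] = E[L_1]\,E_t[U_{t+\bar\tau}-U_t]$. From the specification $L_t = -\Psi(-\mathrm{i})t + \sigma W_t + J_t$ one reads off $E[L_1] = -\Psi(-\mathrm{i}) + E[J_1]$, so $\operatorname{VIX}_t^2 = \tfrac{2}{\bar\tau}(\Psi(-\mathrm{i})-E[J_1])\,E_t[U_{t+\bar\tau}-U_t]$. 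This already produces the common prefactor $2(\Psi(-\mathrm{i})-E[J_1])$ shared by $a$ and $b$.

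Third, I would compute $E_t[U_{t+\bar\tau}-U_t] = \int_t^{t+\bar\tau}E_t[u_s]\,\mathrm{d}s$ (Fubini, justified by $E[U_t]<\infty$) by deriving an ODE for $g(s):=E_t[u_s]$ with $g(t)=u_t$. Taking conditional expectations in the SDE for $u$, the Brownian time-change term $\sigma_u\,\mathrm{d}Z(U_\cdot)$ is a martingale increment and drops out, while the subordinator term contributes a drift: writing $J^u(U_s)-E[J^u_1]U_s$ as a martingale yields $E_t[\eta_u(J^u(U_s)-J^u(U_t))]=\eta_u E[J^u_1]\int_t^s g(r)\,\mathrm{d}r$. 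Hence $g'(s) = \kappa_u\theta_u + m_u\, g(s)$ with $m_u = \eta_u E[J^u_1]-\kappa_u$, whose solution is $g(s) = (u_t + \kappa_u\theta_u/m_u)\,e^{m_u(s-t)} - \kappa_u\theta_u/m_u$. Integrating over $[t,t+\bar\tau]$ and using $\tfrac{e^{m_u\bar\tau}-1}{m_u}=\bar\tau\,\mathcal{M}(m_u,\bar\tau)$ gives $E_t[U_{t+\bar\tau}-U_t] = \bar\tau\big[u_t\,\mathcal{M}(m_u,\bar\tau) + \tfrac{\kappa_u\theta_u}{m_u}(\mathcal{M}(m_u,\bar\tau)-1)\big]$; substituting back produces exactly the stated $a$ and $b$.

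The hard part will be the two conditional-expectation manipulations through the time change, namely justifying the optional-sampling step for $L$ and the martingale decomposition of the time-changed subordinator $J^u(U_\cdot)$ that isolates its drift rate $\eta_u E[J^u_1]$. Both rely on the integrability (type-1) and absolute-continuity (type-3) conditions together with the $\mathcal{H}^2_{\text{loc}}$ assumptions recorded earlier, which also license the Fubini interchange. Once these interchanges are legitimate, the rest is a routine linear ODE solve followed by a single integration and identification of coefficients.
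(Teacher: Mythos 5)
Your proposal is correct and follows essentially the same route as the paper: reduce $\operatorname{VIX}_t^2$ to $-\tfrac{2}{\bar\tau}E_t[L_{U_{t+\bar\tau}}-L_{U_t}]$, use the martingale property of the compensated time-changed Lévy components (which the paper justifies by citing that $L(T_t)-E[L_1]T_t$ is a locally square-integrable martingale when $E[T_t]<\infty$, the same fact your optional-sampling step requires) to reduce to $2(\Psi(-\mathrm{i})-E[J_1])\tfrac{1}{\bar\tau}E_t[U_{t+\bar\tau}-U_t]$, and then solve the linear integral/ODE equation $E_t[u_s]=u_t+\kappa_u\theta_u(s-t)+m_u\int_t^s E_t[u_r]\,\mathrm{d}r$ and integrate to obtain $\mathcal{M}(m_u,\bar\tau)$. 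The coefficient identifications for $a$, $b$, and $m_u$ match the paper exactly.
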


\begin{proof}{Proof}
Note that the local martingale property is preserved under a type-2 time change (cf. Theorem (10.17) in \cite{jacod1979calcul}). Then according to Proposition 4.22 (Statement 3) of \cite{eberlein2019mathematical} and the local characteristics of a time-changed L\'{e}vy process given by Eq. \eqref{TC characteristics}, for L\'{e}vy process $L \in \mathcal{H}^2_{\text{loc}}(\mathbb{R})$ and a type-3 time change $T$, $L(T_t) - E[L_1] T_t$ is a locally square-integrable martingale if and only if $E[T_t] < \infty$. Consequently, $B(U_t)$ and $J(U_t) - E[J_1]U_t$ are $\mathcal{F}^X$-martingales, and we have
$$
\begin{aligned}
\text{VIX}_t^2 & = - \frac{2}{\bar{\tau}} E_t\left[\ln \frac{e^{-r \bar \tau}S_{t+\bar{\tau}}}{S_t}\right]\\
& = - \frac{2}{\bar{\tau}} E_t\left[X_{t+\bar{\tau}} - X_t \right] \\
& = \frac{2}{\bar \tau}\left( \Psi(-i)E_t[ U_{t+\bar\tau} - U_t] - E_t[\sigma(B(U_{t+\bar\tau}) - B(U_t))+ \eta(J(U_{t+\bar\tau})- J(U_t))]\right)\\
& = \frac{2}{\bar \tau}\left(\Psi(-\mathrm{i}) E_t\left[ U_{t+\bar{\tau}} - U_{t} \right] - E[J_1]E_t\left[ \left(U_{t+\bar{\tau}} - U_{t})\right)\right]\right) \quad \text{(martingale property)}\\
& = 2 (\Psi(-\mathrm{i}) - E[J_1])g(u_t, \bar{\tau}),
\end{aligned}
$$
where $g(u_t, h) := \frac{1}{h}E_t\left[ U_{t+h} - U_{t} \right]$. Due to, for $s > t$,
$$u_s = u_t + \int_t^s \kappa_u (\theta_u - u_r)\mathrm{d}r + \sigma_u \left(B(U_s) - B(U_t)\right) + \eta_u \left(J^u(U_s) - J^u(U_t)\right),
$$
and the martingale property of $B(U_t)$ and $J^u(U_t) - E[J_1^u] U_t$, we have
$$
\begin{aligned}
    E_t[u_s] & = u_t + \kappa_u \theta_u (s-t) -\kappa_u\int_t^s E_t[u_r]\mathrm{d}r + \eta_u E[J^u_1]E_t[U_s - U_t]\\
    & = u_t + \kappa_u \theta_u (s-t) + m_u \int_t^s E_t[u_r]\mathrm{d}r,
\end{aligned}
$$
with $m_u = \eta_u E[J^u_1] - \kappa_u$. Thus, we solve $E_t[u_s]$ by
$$
E_t[u_s] = \left(u_t + \frac{\kappa_u\theta_u}{m_u}\right) e^{m_u (s-t)} - \frac{\kappa_u \theta_u}{m_u}
$$
and obtain
\begin{equation}
\label{g}g(u_t,h) = \frac{1}{h}\int_t^{t+ h} E_t[u_s]\mathrm{d}s= u_t \mathcal{M}(m_u, h) + \frac{\kappa_u \theta_u}{m_u} (\mathcal{M}(m_u, h) - 1).
\end{equation}
Then the results follow by substituting $g(u_t,  \bar\tau)$ back to $\operatorname{VIX}_t^2.$
\end{proof}

Since the characteristic function of $u$ is easily obtained by solving an ODE system as given in \cite{filipovic2001general}, the pricing of European-style derivatives is obtained accordingly via Fourier-based methods. For a VIX call option with strike price $K$, the option price is given by (e.g., according to Proposition 2 of \cite{lian2013pricing}):
$$
V^{\operatorname{VIX}}\left(S_0, \bar{T}\right) = \frac{e^{-r\bar T}}{2 a \sqrt{\pi}} \int_0^{\infty} \operatorname{Re}\left[e^{z b / a} \phi_u(-\mathrm{i}z; s) \frac{1- \operatorname{erf}(K \sqrt{z / a})}{(\sqrt{z / a})^3}\right] \mathrm{d}z_I
$$
where $z=z_R+\mathrm{i} z_I $ is a complex variable with any $z_R \le 0$, $\phi_u(z; s)$ is the (generalized) characteristic function of $u_s$, constants $a, b$ are given in Proposition \ref{VIX formula}, and 
$$\operatorname{erf}(x) = \frac{2}{\sqrt{\pi}}\int_0^x e^{-s^2}\mathrm{d}s.$$

\subsubsection{CTC Models: Affine Cases}
\label{CTC models:affine}
For the case of affine CTC models, the VIX-spot relationship is still explicit and easy to obtain.
\begin{proposition}
Consider a CTC model with affine activity rates as considered in Corollary \ref{affine spec}. 
Define $\phi_V(x; t, t+ \bar\tau) := E_t[\exp{\left(\mathrm{i}x(V_{t+\bar\tau} - V_t)\right)}]$,  $\mathcal{M}(x, t) := \frac{e^{xt} - 1}{xt}$, and $\bar{\tau} = 30 / 365$.
Then, it holds that
\begin{equation}
\label{VIX for CTCM}
\operatorname{VIX}_t^2 = A(v_t) u_{V_t} + B v_t + C(v_t),
\end{equation}
where
\begin{align*}
& A(v_t) = \bar M\frac{\phi_V(-\mathrm{i}m_u;t,t+\bar{\tau})-1}{m_u\bar{\tau}},\\
& B = -\bar M \frac{\kappa_u\theta_u\mathcal{M}(m_v,\bar\tau)}{m_u},\\ 
& C(v_t) = \bar M \left(- \frac{\kappa_u\kappa_v\theta_u\theta_v(\mathcal{M}(m_v,\bar\tau) - 1)}{m_u m_v}+ \frac{\kappa_u\theta_u}{m_u^2\bar{\tau}}\left(\phi_V(-\mathrm{i}m_u;t,t+\bar{\tau}) - 1\right)\right),\\
& m_u = \eta_u E[J^u_1] - \kappa_u, \,m_v = \eta_v E[\tilde J^v_1] - \kappa_v,\\
& \bar M = 2 (\Psi(-\mathrm{i}) - E[J_1]).
\end{align*}
\end{proposition}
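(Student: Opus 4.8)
The plan is to follow the same route as the proof of Proposition \ref{VIX formula}, but with the composite time change $T_t = U_{V_t}$ in place of $U_t$, and then to carry out an additional layer of conditioning on $V$. First I would rewrite $\operatorname{VIX}_t^2 = -\frac{2}{\bar\tau}E_t[X_{t+\bar\tau} - X_t]$ and substitute $X_s = L_{U_{V_s}}$ together with the decomposition $L_s = -\Psi(-\mathrm{i})s + \sigma W_s + J_s$. By Proposition \ref{prop:adapt}, $U_V$ is a type-3 $\mathcal{F}$-time change, so the martingale argument used in Proposition \ref{VIX formula} (via \cite{jacod1979calcul} and \cite{eberlein2019mathematical}) shows that $W_{U_{V_t}}$ and $J_{U_{V_t}} - E[J_1]U_{V_t}$ are $\mathcal{F}^X$-martingales. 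This collapses the expectation to $\operatorname{VIX}_t^2 = \frac{\bar M}{\bar\tau}E_t[U_{V_{t+\bar\tau}} - U_{V_t}]$ with $\bar M = 2(\Psi(-\mathrm{i}) - E[J_1])$, reducing the problem to evaluating $E_t\!\left[\int_{V_t}^{V_{t+\bar\tau}} u_s\,\mathrm{d}s\right]$.

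The key step is to exploit the independence of $V$ from $(u, U, L)$, which holds in the affine specification and underlies the applicability of Theorem \ref{Th1}(ii). I would condition on the whole path of $V$ and use the tower property. Conditionally on $V$, the activity rate $u$ evolves as the affine process of Proposition \ref{VIX formula} in its own clock, so its conditional mean is $E[u_s \mid u_{V_t}, V] = \left(u_{V_t} + \frac{\kappa_u\theta_u}{m_u}\right)e^{m_u(s - V_t)} - \frac{\kappa_u\theta_u}{m_u}$ for $s \geq V_t$, with $m_u = \eta_u E[J^u_1] - \kappa_u$. Integrating over $s \in [V_t, V_{t+\bar\tau}]$, using $\int_{V_t}^{V_{t+\bar\tau}} e^{m_u(s - V_t)}\,\mathrm{d}s = \frac{e^{m_u(V_{t+\bar\tau} - V_t)} - 1}{m_u}$, yields an expression linear in $u_{V_t}$ whose coefficients depend only on $\Delta V := V_{t+\bar\tau} - V_t$.

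It then remains to take the outer expectation over $\Delta V$. The exponential term produces $E_t[e^{m_u\Delta V}] = \phi_V(-\mathrm{i}m_u; t, t+\bar\tau)$ directly from the definition of $\phi_V$, and the linear term produces $E_t[\Delta V] = E_t\!\left[\int_t^{t+\bar\tau} v_s\,\mathrm{d}s\right]$. The latter is evaluated exactly as in Proposition \ref{VIX formula} applied to $v$, giving $E_t[\Delta V] = v_t\bar\tau\mathcal{M}(m_v, \bar\tau) + \frac{\kappa_v\theta_v}{m_v}\bar\tau(\mathcal{M}(m_v, \bar\tau) - 1)$ with $m_v = \eta_v E[\tilde J^v_1] - \kappa_v$. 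Substituting both back and separating the coefficient of $u_{V_t}$, the coefficient of $v_t$, and the remaining term yields exactly $A(v_t)$, $B$, and $C(v_t)$; note that $A$ and $C$ retain dependence on $v_t$ through $\phi_V(-\mathrm{i}m_u; t, t+\bar\tau)$, which is a function of the current state $v_t$.

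I expect the main obstacle to be the careful justification of the conditioning step: because the integration limits $V_t$ and $V_{t+\bar\tau}$ are themselves random and $V$-measurable, one must verify that the Markovian conditional-mean formula for $u$ remains valid over this random interval. This is precisely where the independence of $V$ from the drivers $Z, J^u$ of $u$ is used, since it lets the random interval be treated as deterministic once we condition on $V$, so that the affine formula of Proposition \ref{VIX formula} carries over term by term before the outer expectation reintroduces the randomness of $\Delta V$.
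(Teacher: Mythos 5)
Your proposal is correct and follows essentially the same route as the paper's proof: the martingale reduction to $\operatorname{VIX}_t^2 = \frac{\bar M}{\bar\tau}E_t[U_{V_{t+\bar\tau}} - U_{V_t}]$, iterated conditioning on the path of $V$ (justified by the independence of $u$ and $v$), the affine conditional-mean formula for $u$ integrated over the random interval $[V_t, V_{t+\bar\tau}]$, and the outer expectation yielding $\phi_V(-\mathrm{i}m_u; t, t+\bar\tau)$ for the exponential term and the affine mean of $v$ for $E_t[\Delta V]$. Your closing remark correctly identifies the conditioning over a $V$-measurable random interval as the step requiring the independence assumption, which is exactly how the paper handles it.
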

\begin{proof}{Proof}
Since $E[U_{V_t}] < \infty$ for every $t\ge 0$ by our assumption of composite time change, $X_t - E[L_1] U_{V_t} = X_t + \frac{1}{2}\bar M U_{V_t}$ is a zero-mean martingale (following the same argument in the proof of Proposition \ref{affine spec}), where $\bar M = 2 (\Psi(-\mathrm{i}) - E[J_1]).$ Thus,
\begin{equation}
\label{VIX general}
    \operatorname{VIX}_t^2 = -\frac{2}{\bar{\tau}} E_t[X_{t+ \bar\tau} - X_t]= \frac{\bar M}{\bar\tau}E_t[U_{V_{t+\bar{\tau}}} - U_{V_t}].\end{equation}
By the same approach for deriving Eq. \eqref{g}, we obtain 
$$
    \frac{1}{h} \int_{t}^{t+h} E[u_s\mid \mathcal{F}_{U_t}] \mathrm{d}s = u_{t}\mathcal{M}(m_u, h) + \frac{\kappa_u \theta_u}{m_u} (\mathcal{M}(m_u, h) - 1).
$$
Thus, by the independence of $u$ and $v$,
\begin{equation}
    \label{g for u}
\begin{aligned}&\frac{1}{\Delta_{V_t}(h)} \int_{V_t}^{V_{t+h}}E[u_s \mid \mathcal{F}^X_t, \{V_r\}_{r\ge 0}]\mathrm{d}s\\
=& u_{V_t}\mathcal{M}(m_u, \Delta_{V_t}(h)) + \frac{\kappa_u \theta_u}{m_u} (\mathcal{M}(m_u, \Delta_{V_t}(h)) - 1),
\end{aligned}
\end{equation}
with $\Delta_{V_t}(h):= V_{t+h} - V_t$. 

Next, we apply an iterated conditioning argument to compute the expectation $E_t[U_{V_{t+\bar{\tau}}} - U_{V_t}]$:
\begin{align}
&\quad \frac{1}{\bar\tau}E_t \left[U_{V_{t+\bar{\tau}}} - U_{V_t}\right] \notag\\ &=\frac{1}{\bar\tau} E_t \left[E\left[U_{V_{t+\bar{\tau}}} - U_{V_t}\mid \mathcal{F}_t^X, \{V_{r}\}_{r\ge 0}\right]\right] \notag\\
&=\frac{1}{\bar\tau} E_t \left[\int_{V_t}^{V_{t+\bar\tau} }E\left[u_{s}\mid \mathcal{F}^X_t, \{V_r\}_{r\ge 0}\right] \mathrm{d}s\right] \quad (\text{independence of } u \text{ and } v)\notag\\
&= \frac{1}{\bar\tau}E_t\left[ u_{V_t} \Delta_{V_t}(\bar{\tau})\mathcal{M}(m_u, \Delta_{V_t}(\bar{\tau})) + \frac{\kappa_u \theta_u}{m_u}\Delta_{V_t}(\bar{\tau})\left(\mathcal{M}(m_u, \Delta_{V_t}(\bar{\tau})) - 1\right)\right] \notag \quad (\text{by Eq. \eqref{g for u}})\\
&=\frac{u_{V_t}}{m_u\bar\tau}E_t[\exp{(m_u \Delta_{V_t}(\bar\tau))} - 1] + \frac{\kappa_u \theta_u}{m_u^2 \bar\tau}E_t[\exp{(m_u\Delta_{V_t}(\bar\tau))} - 1] - \frac{\kappa_u \theta_u}{m_u \bar\tau} E_t[\Delta_{V_t}(\bar\tau)] \notag\\
&= \frac{m_u u_{V_t} + \kappa_u \theta_u}{m_u^2 \bar\tau}E_t[\exp{(m_u\Delta_{V_t}(\bar{\tau})} -1] - \frac{\kappa_u\theta_u}{m_u}\left(v_t\mathcal{M}(m_v, \bar\tau) + \frac{\kappa_v\theta_v}{m_v} (\mathcal{M}(m_v, \bar\tau) - 1)\right) \notag\\
&= \frac{\phi_V(-\mathrm{i}m_u;t,t+\bar{\tau})-1}{m_u\bar{\tau}}u_{V_t} - \frac{\kappa_u\theta_u\mathcal{M}(m_v, \bar\tau)}{m_u} v_t + \frac{C(v_t)}{\bar M}, \notag
\end{align}
where $C(v_t)$ is as defined in the proposition. Substituting it back to the expression of $\operatorname{VIX}_t^2$ and the result follows.
\end{proof}


\subsubsection{Exact Simulation of Spot Variances}
\label{exact simulation}
Despite the explicit VIX-Spot relationship, a simulation procedure for $u_{V_t}$ and $v_t$ is still needed in general to price VIX derivatives. While Monte Carlo methods are typically slow, we will show in the following that an efficient and exact simulation scheme exists for certain models, where $v$ is a CIR process and the characteristic function of $u_t$ is explicit. This approach only requires simulating distributions at maturity rather than entire trajectories, which eliminates discretization error and improves efficiency.

Given a maturity $\bar{T}$, we first simulate $v_{\bar{T}}$ and $u_{V_{\bar{T}}}$, and then compute VIX samples using formula \eqref{VIX for CTCM}. The exact simulation proceeds as follows:
\begin{description}
    \item[Step 1] Simulate $v_{\bar T}$ from a non-central chi-square distribution.
    \item[Step 2] Simulate the conditional distribution $(V_{\bar T} \mid v_0, v_{\bar T})$ according to the method of \cite{glasserman2011gamma}.
    
    Specifically, by the independence of $u$ and $v$, we first simulate $V_{\bar{T}}$ and then simulate $u_{V_{\bar{T}}}$ with a noncentral chi-square distribution. Therefore, the conditional distribution of $V_{\bar{T}}$, in the form of $\left(\int_0^{\bar{T}} v_s \mathrm{d}s \mid v_0, v_{\bar{T}}\right)$, needs to be derived. This is exactly the same problem faced in the Heston simulation, as brought up by \cite{broadie2006exact}. We then apply the method of gamma expansion in \cite{glasserman2011gamma}, where the conditional distribution is efficiently simulated as a sum of independent variables.
    \item[Step 3] Simulate $u_{V_{\bar{T}}}$ using density function obtained by inverting the characteristic function of $u$ at $V_{\bar{T}}$. 

    The characteristic function of $u_t$ is explicit for some models, e.g., Heston  and 3/2 models given in \cite{carr2007new}. For Composite JH model \eqref{Composite JH}, the method still applies. We can ease the computation by precomputing the characteristic function of $u_t$ for a range of $t \in [0, M]$ along the numerical discretization of the corresponding ODE, where $M$ is an estimated upper bound of the support of $V_{\bar T}$. We then obtain sample points of $u_{V_{\bar{T}}}$ by matching the precomputed values with the simulated $V_{\bar{T}}$ samples.
    \item[Step 4] Obtain $\text{VIX}_{\bar{T}}$ by formula \eqref{VIX for CTCM} and compute $V^{\operatorname{VIX}}(t, S_t, \bar{T})$ by taking the average of the discounted payoff.
\end{description}


To sum up, we price European-style options and VIX derivatives with payoff function $G(\cdot)$ by algorithm \ref{alg1} and \ref{alg2}, respectively.
\begin{algorithm}
	\renewcommand{\algorithmicrequire}{\textbf{Input:}}
	\renewcommand{\algorithmicensure}{\textbf{Output:}}
	\caption{Calculate European-style Prices} 
	\label{alg1} 
	\begin{algorithmic}
		\Require Maturity $\bar{T}$, payoff $G(\cdot)$, discretization parameters $N, M, Q$, integration range $a,b,c$.
		\For{$v = \frac{c}{Q}, \frac{2c}{Q}, \dots, , c$} 
  \For{$k = 0, 1, \cdots, N-1$}
  \State Compute $A_k, V_k$ according to Eq. \eqref{Ak} and \eqref{Vk}
		\State Compute $\phi_U^{\mathbb Q}(-\mathrm{i}\Psi_L(m); v)$ according to Eq.$\eqref{ODE for T}$
		\EndFor
  \For{$l = 0, 1, \cdots, M-1$}
  \State Compute $\phi_V^{\mathbb{Q}}(\frac{l\pi}{c}; \bar{T})$ and $\cos(\frac{l\pi v}{c})$
  \EndFor
  \State Compute option price $V\left(S_0, \bar{T}\right)$ by summing up according to Eq. \eqref{call price}
  \EndFor
  \Ensure $V\left(S_0, \bar{T}\right)$
	\end{algorithmic} 
\end{algorithm}

\begin{algorithm}
	\renewcommand{\algorithmicrequire}{\textbf{Input:}}
	\renewcommand{\algorithmicensure}{\textbf{Output:}}
	\caption{Calculate VIX Derivative Prices} 
	\label{alg2} 
	\begin{algorithmic}
		\Require Maturity $\bar{T}$, strike $K$, payoff $G(\cdot)$.
	\State Simulate $v_{\bar T}$ from a non-central chi-square distribution
 \State Simulate the conditional distribution $(V_{\bar{T}} \mid v_0, v_{\bar{T}})$ by the method of \cite{glasserman2011gamma}
 \State Given $V_{\bar{T}}$, simulate $u_{V_{\bar{T}}}$ by inverting the characteristic function of $u$
 \State Obtain the sample of $\text{VIX}_{\bar{T}}$ according to Eq. \eqref{VIX for CTCM}
 \State Compute the derivative price $V^{\operatorname{VIX}}\left(S_0, \bar{T}\right)$ by taking the mean of the payoff function $G(\cdot)$
  \Ensure $V^{\operatorname{VIX}}\left(S_0, \bar{T}\right)$
	\end{algorithmic} 
\end{algorithm}

As shown in the results in this section, a CTC model has the following features in derivative pricing. First, the composite structure does not pose additional computational complexity to European option pricing. A CTC-COS method can be developed to price efficiently as in ordinary time-changed L\'{e}vy models. However, the situation is different in VIX derivative pricing. Under an affine specification, Fourier-based numerical methods are available for ordinary time-changed L\'{e}vy models, but are not for CTC models, due to a nonlinear relationship between spot $\text{VIX}$ and the activity rates. But fortunately, an exact simulation is available for certain CTC models, where we can directly simulate the distributions at maturity with the aid of the explicit characteristic functions of the time changes. This semi-numerical simulation method enables the VIX derivatives to be priced efficiently as well.

\subsubsection{CTC Models: Non-affine Cases}
\label{CTC models:nonaffine}
For more general specifications of time changes in Eq. \eqref{CTC}, there generally does not exist an explicit expression for $\text{VIX}_t$. Still, we could recover it numerically from the Laplace transform of $U_{V_t}$. Since $u$, by its definition, is a time-homogeneous Markov process with respect to $\{\mathcal{F}_{U_t}\}_{t\ge 0}$, for any bounded and measurable function $f$ and $ s > 0$,
$$\begin{aligned}
    E_t[f(u_{V_{t+s}})] &= E_t[E[f(u_{V_{t+s}})] \mid \mathcal{F}^X_t, \{V_r\}_{r\ge 0}]\\
    &= E_t[P_{\Delta_{V_t}(s)}f(u_{V_t})]\\
    &=: E[\tilde f(\Delta_{V_{t}}(s); u_{V_t}) \mid \mathcal{F}^X_t],
\end{aligned}$$
where $P_s f(x) := E[f(u_{V_t + s}) \mid u_{V_t} =x, \{V_r\}_{r\ge 0}]$ is bounded and measurable with respect to variable $s$ since $u_{V_t}$ is right-continuous. Thus, the function $\tilde f$ defined from $P_{\Delta_{V_t}(s)}f$ is bounded and measurable. As a result of the Markovian property of $v$, $$E[\tilde f(\Delta_{V_{t}}(s); u_{V_t})\mid \mathcal{F}_t^X] = E[\tilde f(\Delta_{V_{t}}(s); u_{V_t}) \mid u_{V_t}, v_t].$$
Thus, $(u_{V_t}, v_t)$ is Markovian, and we have from Eq. \eqref{VIX general}:
\begin{equation*}
\begin{aligned}
\text{VIX}_t^2 &= \frac{\bar M}{\bar\tau}E[U_{V_{t+\bar\tau}} - U_{V_t} \mid u_{V_t}, v_t]\\
&=\frac{2 (\Psi(-\mathrm{i}) - E[J_1]}{\bar{\tau}}g(v_t, u_{V_t}, \bar{\tau}),
\end{aligned}
\end{equation*}
where $$g(v_t, u_{V_t}, \bar{\tau}) := -\left.\frac{\partial}{\partial l}E\left[\exp{(-l(U_{V_{t+\bar{\tau}}} - U_{V_t}))}\mid u_{V_t}, v_t\right]\right|_{l=0}.$$
In particular, for affine specifications, the Laplace transform can be computed according to Eq. \eqref{ind chf}, with $\mathbb{Q}$ replaced by $\mathbb{P}$ and $\Psi_L(m)$ replaced by $-l.$ 

Once the expression for $\operatorname{VIX}_t$ is obtained, the price of VIX derivatives can be computed by Monte Carlo simulation. That is, for a given maturity $\bar T$, simulate $v_{\bar T}$ and $u_{V_{\bar T}}$ and obtain samples of $\operatorname{VIX}_{\bar T}$ from function $g$. The price is obtained by averaging the payoff function given the samples of $\operatorname{VIX}_{\bar T}.$

\section{Joint Calibration}
This section examines the joint calibration of our models to SPX and VIX options markets. In Section \ref{data} and \ref{calibration procedure}, we describe the dataset characteristics and outline our calibration procedure. In Section \ref{results}, we present in-sample and out-of-sample performance metrics and general calibration results. Finally in Section \ref{evaluations}, we analyze how well the calibrated models capture key volatility characteristics.
\subsection{Data}
\label{data}
We use the S\&P 500 equity-index and VIX options traded on CBOE to test the performance of the proposed models. In order to compare with existing works, we choose the dataset used in \cite{yuan2020new}. The dataset we consider spans ten years, containing the implied volatility surfaces of the SPX and VIX options from April 2, 2007 to December 29, 2017. We use the closing data including implied volatilties, trading volumes, and option contract information. Only Wednesday data are considered to avoid weekday effects, resulting in 557 weeks in total. 

To obtain option moneyness, defined as $K / F$ for strike price $K$ and futures price $F$, we compute the implied futures price using put-call parity from the pair of options with the strike price closest to the index. Then we use the inferred futures price to compute moneyness. The options are selected through the following procedures:
\begin{itemize}
\item Remove options quotes with zero trading volume.
\item Remove options quotes that violates standard arbitrage conditions (see \cite{kokholm2015joint} for example).
    \item Remove SPX (VIX) option quotes with maturity fewer than 7 (7) days or more than 365 (160) days.
    \item Remove SPX (VIX) option quotes whose moneyness does not fall in $[0.5, 1.4]$, $([0.7, 2.5])$.
    \item Remove all ITM options.
\end{itemize}

The filtering conditions are common practice, as introduced in \cite{bakshi1997empirical} and \cite{bardgett2019inferring}. Finally, we obtain a daily average of 352 SPX options and 73 VIX options. In comparison, \cite{yuan2020new} obtained a daily average of 426 SPX options and 58 VIX options.

\subsection{Calibration Procedure}
\label{calibration procedure}
The sample is divided into an in-sample period from April 4, 2007 to April 1, 2015, and an out-of-sample period from April 8, 2015 to December 27, 2017.

In the in-sample period, we calibrate the subsample on date $t$ by solving the optimization problem below.
\begin{equation}
\label{err}
\begin{aligned}
\Theta^*_t & =  \underset{\Theta}{\arg \min }\, \frac{1}{N_S} \sum_{i=1}^{N_S}\left(\frac{\sigma_S^i(\Theta)-\widehat{\sigma}_S^{i, t}}{\widehat{\sigma}_S^{i, t}}\right)^2+\frac{1}{N_V}\sum_{i=1}^{N_V}\left(\frac{\sigma_V^i(\Theta)-\widehat{\sigma}_V^{i, t}}{\widehat{\sigma}_V^{i, t}}\right)^2\\
& = :  \underset{\Theta}{\arg \min } \, F^t(\Theta),
\end{aligned}
\end{equation}
where $\Theta$ is the set of model parameters, $N_S$ and $N_V$ are the number of SPX and VIX options quotes in the sample, $\sigma_X^i(\Theta), X\in \{S, V\}$ is the $i$-th implied volatility computed from the calibrated model, and $\widehat{\sigma}_X^{i, t}, X\in \{S, V\}$ is the $i$-th market implied volatility on date $t$.

To perform an out-of-sample test of pricing models, we divide the model parameters into structural parameters (denoted by $\Theta_0$) and state parameters (denoted by $\mathcal{V}$). For example, in Composite JH model, $\Theta_0 = \{\kappa_u, \theta_u, \eta_u, \kappa_v, \theta_v, \sigma_v, C, G, M, Y\}$ and $\mathcal{V} = \{u_0, v_0\}$. Then we perform a two-step optimization. First, we choose a small sample ($T_0$ number of dates) before the out-of-sample period. In this period, we optimize by allowing the state parameters to vary daily and fixing a common set of structural parameters, which leads to an aggregate optimization problem
\begin{equation}
    \min_{\Theta_0, \mathcal{V}(1), \cdots, \mathcal{V}(T_0)}\sum_{t=1}^{T_0} F^t(\{\Theta_0, \mathcal{V}(t)\}).
\end{equation}
Denote by $\Theta_0^*$ the resulting optimized set structural parameters. Then in step 2, we solve daily out-of-sample optimization problems (with $T$ number of dates) by optimizing the state parameters
\begin{equation}
    \min_{\mathcal{V}(t)} F^t(\Theta_0^*, \mathcal{V}(t)),\quad t = T_0 + 1, \cdots, T_0 + T.
\end{equation}
\subsection{Results}
\label{results}
Below are the results of joint calibration under the four models: Heston, JH, Composite Heston and Composite JH. The performance error is measured by the root mean square relative error (RMSRE) of implied volatility, defined as
\begin{equation}
\label{error}
   \epsilon_t = \frac{1}{2} \sqrt{\frac{1}{N_S} \sum_{i=1}^{N_S}\left(\frac{\sigma_S^{i}(\Theta^*_t)-\widehat{\sigma}_S^{i, t}}{\widehat{\sigma}_S^{i, t}}\right)^2}+\frac{1}{2}\sqrt{\frac{1}{N_V}\sum_{i=1}^{N_V}\left(\frac{\sigma_V^i(\Theta^*_t)-\widehat{\sigma}_V^{i, t}}{\widehat{\sigma}_V^{i, t}}\right)^2}.
\end{equation}

Table \ref{Insample} summarizes the in-sample and out-of-sample calibration performance (via RMSRE) of ordinary models (Heston, JH) and their composite extensions, with the out-of-sample period divided into three sub-periods: 2015/3/31 - 2016/3/13 (P1), 2016/3/14 - 2017/2/5 (P2), and 2017/2/6 - 2017/12/31 (P3). Before the calibration of every out-of-sample period, we fix the structural parameters calibrated from the last period using two-month market data.

Consistently, composite models outperform ordinary ones. For in-sample period, the Composite Heston reduces RMSRE by 32.62\% (vs. Heston), and the Composite JH by 23.00\% (vs. JH). For out-of-sample period, the improvement persists: for the Composite JH, RMSRE is 0.0837 (P1), 0.0894 (P2), 0.1000 (P3)—all lower than the ordinary JH’s 0.1113, 0.1451, 0.1723. The Composite Heston also cuts Heston’s out-of-sample RMSRE, with the largest reduction (45.75\%) in P2. These results confirm that composite extensions enhance both in-sample fit and out-of-sample generalization. Another notable finding is that the RMSRE standard deviations of composite models are consistently smaller, which further confirms their superior generalization capability.

\begin{table}
\caption{The in-sample and out-of-sample calibration RMSRE and its standard deviation (in parathesis)\label{Insample}}
\begin{center}
\scalebox{0.8}{
\begin{tabular}{ccccccc}
\hline & Heston & \makecell[c]{Composite\\ Heston} & \makecell[c]{Heston \\Err. Reduced}  & JH & \makecell[c]{Composite\\ JH} & \makecell[c]{JH \\Err. Reduced}\\
\hline \makecell[c]{In-sample\\ \quad} & \makecell[c]{0.1263\\ (0.0480)}  & \makecell[c]{0.0851 \\(0.0353)} & \textit{-32.62\%}& \makecell[c]{0.0626\\(0.0171)}  & \makecell[c]{0.0482\\ (0.0111)} &  \textit{-23.00\%}\\
\makecell[c]{Out-of-sample\\ $(P1)$} & \makecell[c]{0.1784 \\(0.0375)} & \makecell[c]{0.1297 \\(0.0282)} & \textit{- 27.30\%} & \makecell[c]{0.1113\\ (0.0239)} & \makecell[c]{0.0837 \\(0.0196)} & \textit{- 24.80\%}\\
\makecell[c]{Out-of-sample\\ $(P2)$}& \makecell[c]{0.3685\\
(0.0688)} & \makecell[c]{0.1999\\
(0.0312)} & \textit{- 45.75\%} &  \makecell[c]{0.1451 \\ (0.0445)} &\makecell[c]{0.0894 \\(0.0184)} & \textit{- 38.39\%}  \\
\makecell[c]{Out-of-sample\\ $(P3)$} & \makecell[c]{0.2491\\
(0.0672)} &\makecell[c]{0.1780\\
(0.0317)} & \textit{- 28.54\%} & \makecell[c]{0.1723 \\ (0.0501)} & \makecell[c]{0.1000 \\(0.0307)} & \textit{- 41.96\%} \\
\hline
\end{tabular}}
\end{center}
{\textit{Note.} This table reports the RMSRE and its standard deviation (in parentheses) for four models, covering in-sample and out-of-sample calibration. The models include two ordinary benchmarks (Heston, JH) and their extended versions (Composite Heston, Composite JH). Values in italics (``Heston Err. Reduced", ``JH Err. Reduced") are the relative error reductions of composite models versus their ordinary counterparts. The in-sample period is April 4, 2007–April 1, 2015. The out-of-sample period is split into three sub-periods: 2015/3/31–2016/3/13 (P1), 2016/3/14–2017/2/5 (P2), 2017/2/6–2017/12/31 (P3). Prior to each out-of-sample calibration, structural parameters are fixed using two-month market data from the last period.}
\end{table}

To compare different models, we define a pairwise $t$-statistics between model $i$ and model $j$ as
\begin{equation}
\label{t}
    t\text{-statistics} = \frac{\overline{\text{RMSRE}_{i,t} -\text{RMSRE}_{j,t}}}{\operatorname{SE}(\text{RMSRE}_{i,t} -\text{RMSRE}_{j,t})},
\end{equation}
where $\operatorname{SE}(\cdot)$ denotes the sample standard error. We adjust the standard error calculation for serial dependence based on \cite{newey1987simple}, with the maximum lag estimated as the fourth root of the number of trading days in the sample following \cite{greene2000econometric}. The resulting pairwise $t$-statistics are presented in Table \ref{indicators}.

Table \ref{indicators} evaluates the in-sample joint calibration of ordinary models (Heston, JH) and their composite versions, using multi-error metrics (Panel A) and statistical significance tests (Panel B).

Panel A shows composite models outperform ordinary ones in both error magnitude and stability. For error reduction, the improvement is most notable in VIX-related calibration: the Composite Heston cuts VIX’s RMSE by 41.58\% (from 0.1229 to 0.0718), and the Composite JH reduces it by 30.11\% (from 0.0548 to 0.0383). This advantage extends to other metrics—for example, the Composite Heston’s MAE drops to 0.0359 (vs. Heston’s 0.0586) and the Composite JH’s MAE to 0.0198 (vs. JH’s 0.0277). Importantly, composite models also show better calibration stability (reflected in smaller standard deviations). For RMSRE, the Composite Heston’s standard deviation (0.0353) is lower than Heston’s (0.0480); the Composite JH’s (0.0111) is smaller than JH’s (0.0171). Similarly, the Composite JH’s MAE standard deviation (0.0057) is 29.6\% lower than JH’s (0.0081), confirming more consistent performance.

Panel B’s pairwise $t$-statistics further validate these results. All composite models have negative statistics against ordinary counterparts—for example, the Composite JH shows -22.55 vs. Heston and -14.47 vs. JH—indicating their outperformance is statistically significant at the 99.9\% confidence level. Overall, composite models deliver more accurate and stable in-sample joint calibration, with robustness across metrics.

\begin{table}
\caption{Comparison of in-sample joint calibration performance\label{indicators}}
\begin{center}
    \begin{tabular}{ccccc}
       \hline & Heston & \makecell[c]{Composite\\ Heston} & JH & 
       \makecell[c]{Composite\\ JH}\\\hline
       \multicolumn{5}{c}{Panel A. Summary of Statistics of Performance}\\\hline
        \makecell[c]{RMSRE\\ \quad} & \makecell[c]{0.1263\\ (0.0480)} & \makecell[c]{0.0851\\ (0.0352)} & \makecell[c]{0.0626\\ (0.0171)} & \makecell[c]{0.0482\\
       (0.0111)}\\
       \makecell[c]{RMSE (SPX)\\ \quad} & \makecell[c]{0.0292 \\(0.0144)}& \makecell[c]{0.0260 \\(0.0154)} & \makecell[c]{0.0174 \\(0.0065)} & \makecell[c]{0.0149 \\(0.0055)}\\
       \makecell[c]{RMSE (VIX)\\ \quad} & \makecell[c]{0.1229 \\ (0.0370)} & \makecell[c]{0.0718 \\(0.0381)} & \makecell[c]{0.0548 \\(0.0176)} & \makecell[c]{0.0383 \\(0.0136)}\\
       \makecell[c]{RMSE (Aggregate)\\ \quad} & \makecell[c]{0.0760 \\(0.0230)} & \makecell[c]{0.0489\\ (0.0236)} & \makecell[c]{0.0361\\ (0.0103)} & \makecell[c]{0.0266 \\(0.0077)}\\
       \makecell[c]{MAE\\ \quad} & \makecell[c]{0.0586 \\(0.0185)} & \makecell[c]{0.0359\\ (0.0180)} & \makecell[c]{0.0277\\ (0.0081)} & \makecell[c]{0.0198 \\ (0.0057)}\\\hline
              \multicolumn{5}{c}{Panel B. Pairwise $t$-statistics}\\\hline
        Heston & 0 & 16.05 & 18.69 & 22.55\\
        Composite Heston & -16.05 & 0 & 9.21 & 16.28\\
        JH & -18.69 & -9.21 &  0 & 14.47\\
        Composite JH & -22.55 & -16.28 &-14.47 & 0\\\hline
    \end{tabular}
    \end{center}
    {\textit{Note.} Panel A compares the in-sample joint calibration performance of four models (Heston, Composite Heston, JH, Composite JH) using error metrics including RMSE, RMSRE and MAE. All values are the daily average of errors, with the standard deviation of the errors shown in parentheses. Panel B reports pairwise $t$-statistics (defined in Equation \eqref{t}), where the $i$-th row corresponds to Model $i$ and the $j$-th column to Model $j$. A significantly negative statistic indicates Model $i$ outperforms Model $j$, while a positive statistic indicates the opposite.}
\end{table}

Table \ref{RMSRE} contrasts the in-sample joint calibration performance (via RMSRE) of our proposed models with existing models from \cite{yuan2020new}, while also reporting each model’s parameter count (N). First, models with jumps (JH, Composite JH) show substantial advantages over continuous models (Heston, Composite Heston). For example, the ordinary JH (0.0626) has a 50.4\% lower RMSRE than the ordinary Heston (0.1263), which aligns with expectations: jumps capture the short-maturity skews of SPX and VIX options that continuous models fail to replicate.

Notably, our composite jump model also outperforms existing complex models. The Composite JH has fewer parameters (N=12) than \(\text{3FU-CJ}^{*}\) (N=16) and \(\text{4F-ICJ}^{*}\) (N=21), but its RMSRE (0.0482) is lower than both (0.0664 and 0.0581, respectively). This shows the Composite JH achieves better calibration accuracy with a more parsimonious structure.

\begin{table}
\caption{In-sample comparison of joint calibration performance\label{RMSRE}}
\begin{center}
\begin{tabular}{cccccccccc}
\hline Model & Heston & \makecell[c]{Comp. \\Heston} & JH & \makecell[c]{Comp. \\JH} & $\text{2F}^{*}$ & $\text{3FU-CJ}^{*}$ & $\text{4F-ICJ}^{*}$\\
\hline 
N & 5 & 9 & 8 & 12 & 13 & 16 & 21\\
RMSRE & 0.1263 & 0.0851 & 0.0626 & 0.0482 & 0.1010 & 0.0664 &  0.0581\\
\hline
\end{tabular}
\end{center}
{\textit{Note.} This table compares the in-sample RMSRE (defined in Equation \eqref{error}) between two types of models: the four proposed models (Heston, Composite Heston, JH, Composite JH) and three existing models from \cite{yuan2020new} (marked with\({}^*\): \(\text{2F}^{*}\), \(\text{3FU-CJ}^{*}\), \(\text{4F-ICJ}^{*}\)). The ``N" row represents the number of model parameters.}
\end{table}

Table \ref{Para} presents the mean values and the standard errors of calibrated parameters for the four models, revealing key structural features that explain their calibration performance, consistent with Table \ref{RMSRE} where Composite JH performed best.

First, the leverage parameter $\rho_u$ and fine structure parameter $Y$ show notable stability: the coefficient of vairation for $\rho_u$ is $-0.16$ ($-0.19$) for Heston (Composite Heston) model and $0.057$ ($0.044$) for JH (Composite JH) model. Second, the time change $V$ drives volatility dynamics as expected. The initial value $v_0$ is 1.2070 for Composite Heston and 1.5118 for Composite JH, both greater than 1, confirming that V first accelerates volatility dynamics. Over the long term, $V$ reverts to lower mean levels: $\theta_v$ ($V$’s mean-reversion level) is 1.0909 for Composite Heston and 0.9224 for Composite JH, revealing an ``accelerate-then-revert" pattern that improves short-term volatility fitting.

Third, the incorporation of $V$ slightly softens the dynamics of time change $U$ in the top-performing JH and Composite JH models. For $U$’s mean-reversion level $\theta_u$, it falls from 0.5105 (JH) to 0.4783 (Composite JH); while parameter $C$, the overall level of activity, is reduced more than half, reflecting milder fluctuations in U. This proves that $V$ contributes to both the model’s overall volatility and its vol of vol.

\begin{table}
\caption{In-sample model parameter statistics
\label{Para}}
\begin{center}
\begin{tabular}{ccccccccccccc}
\hline & Heston & Composite Heston & JH & Composite JH  \\
\hline \makecell[c]{$\kappa_u$\\ \quad} & \makecell[c]{14.3761 \\ (0.3193)} & \makecell[c]{10.1081 \\ (0.3674)}  & \makecell[c]{2.7048\\ (0.1036)} & \makecell[c]{3.9423\\ (0.0816)} \\
\makecell[c]{$\theta_u$\\ \quad} & \makecell[c]{0.0750\\(0.0023)} & \makecell[c]{0.1646\\(0.0080)} & \makecell[c]{0.5105 \\(0.0152)} & \makecell[c]{0.4782\\ (0.0081)}\\
\makecell[c]{$\sigma_u/\eta_u$\\ \quad} &\makecell[c]{1.9859 \\(0.0307)} & \makecell[c]{1.9973 \\ (0.0441)} & \makecell[c]{4.1362 \\(0.1635)} & \makecell[c]{7.2706 \\(0.2009)} \\
\makecell[c]{$\kappa_v$\\ \quad} &  & \makecell[c]{2.7114 \\(0.0716)} &  & \makecell[c]{4.4931 \\(0.1358)}\\
\makecell[c]{$\theta_v$\\ \quad} & & \makecell[c]{1.0908 \\(0.0279)} &  & \makecell[c]{0.9224 \\(0.0167)}  \\
\makecell[c]{$\sigma_v$\\ \quad} &  & \makecell[c]{0.4124 \\(0.0069)} &  & \makecell[c]{0.4194 \\(0.0098)} \\
\makecell[c]{$C$\\ \quad} & & & \makecell[c]{0.2213 \\(0.0084)} & \makecell[c]{0.1071 \\(0.0048)}\\
\makecell[c]{$G$\\ \quad} & & & \makecell[c]{2.2288 \\(0.0538)} & \makecell[c]{3.4883 \\(0.0684)} \\
\makecell[c]{$M$\\ \quad} & & & \makecell[c]{22.4491 \\(1.3127)} & \makecell[c]{24.8861 \\(0.8723)} \\
\makecell[c]{$Y$\\ \quad} & & & \makecell[c]{1.6166 \\(0.0046)} & \makecell[c]{1.6975 \\(0.0037)} \\
\makecell[c]{$\rho_u$\\ \quad} & \makecell[c]{-0.7126 \\ (0.0056)} & \makecell[c]{-0.6891 \\(0.0065)}
 & &\\
\makecell[c]{$u_0$\\ \quad} & \makecell[c]{0.0384 \\(0.0041)} & \makecell[c]{0.0484 \\(0.0030)} & \makecell[c]{0.1023 \\(0.0056)} & \makecell[c]{0.0720 \\(0.0026)} \\
\makecell[c]{$v_0$\\ \quad} &  & \makecell[c]{1.2070 \\(0.0134)}  & & \makecell[c]{1.5115 \\(0.0226)} \\
\hline
\end{tabular}
\end{center}
{\textit{Note.} This table reports the mean values of calibrated parameters (with standard errors in parentheses) for four models during the in-sample period (April 4, 2007–April 1, 2015). The model parameters are specified in Example \ref{eg comp heston} and \ref{eg comp jh}.}
\end{table}

Table \ref{sort} evaluates the in-sample calibration performance of the four models across different moneyness and maturity ranges, separately for SPX and VIX options, to identify where composite models add the most value.

For SPX options, composite models consistently outperform their ordinary counterparts in most scenarios. In terms of moneyness (Panel A.1), Composite Heston reduces RMSRE versus the original Heston across all intervals—for example, from 0.1681 to 0.1084 when \(0.80 \leq K/F<0.90\), and from 0.1995 to 0.1276 when \(1.0 \leq K/F<1.1\). For Composite JH, it outperforms JH in both small and large moneyness: when \(K/F<0.90\), its RMSRE $(0.0623, 0.0581)$ is lower than JH’s $(0.0720, 0.0632)$, and, when $K/F > 1$, its RMSRE $(0.0680, 0.0669)$ is lower than JH's $(0.0827, 0.0685)$. By maturity (Panel A.2), Composite Heston outperforms Heston across all maturity ranges, with the largest improvement in short maturity (\(<15\) days: 0.2225 vs. 0.3462). Composite JH excels in short-to-medium maturity (\(<90\) days: e.g., \(<15\) days 0.0828 vs. JH’s 0.1180) but falls slightly behind in long maturity (\(>180\) days: 0.0657 vs. JH’s 0.0565).

For VIX options, the advantage of composite models is even more pronounced (Panel B). In moneyness terms (Panel B.1), both composite models prevail for \(K/F<1\) (out-of-the-money or at-the-money options): Composite Heston cuts RMSRE from 0.3029 to 0.1068 when \(K/F<0.8\), and Composite JH reduces it from 0.1059 to 0.0457 in the same interval—nearly a 60\% drop. By maturity (Panel B.2), composite models achieve ~50\% RMSRE reduction across short-to-long terms: for short maturity (\(<30\) days), Composite JH’s RMSRE (0.0467) is nearly half of JH’s (0.0825); for long maturity (\(>120\) days), it falls from 0.0867 (JH) to 0.0475 (Composite JH). 

These results confirm that the composite time change plays a critical role in calibrating VIX options. By reconstructing the ``vol of vol" (volatility of volatility), it effectively improves fitting accuracy for both moneyness and maturity ranges, especially for VIX options where ordinary models struggle most.

\begin{table}[H]
\caption{Comparison of in-sample RMSRE across different moneyness and maturity ranges
\label{sort}}
\begin{center}
\begin{tabular}{ccccc}
\hline & Heston & Composite Heston & JH & Composite JH\\
\hline \multicolumn{5}{c}{Panel A: SPX Options} \\
 \multicolumn{5}{l}{Panel A.1: Sorting by moneyness} \\
\hline $K/F<0.80$ & 0.1309 & 0.1086 & 0.0720 & 0.0623  \\
$0.80 \leq K/F<0.90$ & 0.1681 & 0.1084 & 0.0632 & 0.0581 \\
 $0.90 \leq K/F<0.95$ & 0.1397 & 0.0910 & 0.0440 & 0.0482 \\
 $0.95 \leq K/F<1.0$ & 0.1425 & 0.0822 & 0.0697 & 0.0498\\
$1.0 \leq K/F<1.1$ & 0.1996 & 0.1276 & 0.0827 & 0.0680\\
 $1.1 \leq K/F$ & 0.1738 & 0.1399 & 0.0685 & 0.0669  \\
\\ \multicolumn{5}{l}{Panel A.2: Sorting by days to maturity} \\
\hline $<15$ & 0.3463 & 0.2225 & 0.1180 & 0.0828 \\
 15-30 & 0.1534 & 0.1365 & 0.0805 & 0.0757 \\
 30-50 & 0.1084 & 0.0955 & 0.0705 & 0.0584  \\
 50-60 & 0.1133 & 0.0741 & 0.0551 & 0.0515\\
 60-90 & 0.1315 & 0.0783 & 0.0595 & 0.0493 \\
 90-180 &  0.1613 & 0.0910 & 0.0573 & 0.0464 \\
 $>180$ & 0.1941 & 0.1095 & 0.0565 & 0.0657 \\
\hline \\\multicolumn{5}{c}{Panel B: VIX Options} \\ \multicolumn{5}{l}{Panel B.1: Sorting by moneyness} \\
\hline $K/F<0.8$ & 0.3029 & 0.1068 & 0.1059 & 0.0457 \\
 $0.8 \leq K/F<0.9$ & 0.1688 & 0.0813 & 0.0738 & 0.0376 \\
 $0.9 \leq K/F<1.0$ & 0.1747 & 0.0765 & 0.0563 & 0.0335\\
$1.0 \leq K/F<1.2$ & 0.1393 & 0.0681 & 0.0461 & 0.0304 \\
 $1.2 \leq K/F<1.5$ & 0.0968 & 0.0677 & 0.0513 & 0.0384 \\
 $1.5 \leq K/F<1.8$ &0.1035 & 0.0909 & 0.0593 & 0.0525\\
 $1.8 \leq K/F$ & 0.1461 & 0.1276 & 0.0665 & 0.0679 \\
\\ \multicolumn{5}{l}{Panel B.2: Sorting by days to maturity} \\
\hline $< 30$ & 0.2062 & 0.1254 & 0.0825 & 0.0467 \\
 30-60 & 0.1531 & 0.0822 & 0.0657 & 0.0397\\
60-90 & 0.1402 & 0.0748 & 0.0533 & 0.0379 \\
90-120 & 0.1403 & 0.0756 & 0.0625 & 0.0399 \\
$>120$ & 0.1531 & 0.0855 & 0.0867 & 0.0475 \\
\hline
\end{tabular}
\end{center}
{\textit{Note.} This table reports the daily average RMSRE of implied volatilities for four models during the in-sample period, grouped by two dimensions: moneyness and days to maturity $\tau$. RMSRE measures the alignment between model-predicted and market-observed implied volatilities. Panel A focuses on SPX options, with Panel A.1 sorting results by 6 moneyness intervals (from \(K/F<0.80\) to \(K/F \geq 1.1\)) and Panel A.2 by 7 maturity intervals (from \(<15\) days to \(>180\) days). Panel B focuses on VIX options, with Panel B.1 sorting results by 7 moneyness intervals (from \(K/F<0.80\) to \(K/F \geq 1.8\)) and Panel B.2 by 5 maturity intervals (from \(<30\) days to \(>120\) days).}
\end{table}
\subsection{Evaluation of Calibrated CTC Models}
\label{evaluations}
Table \ref{characteristics} focuses on fitting the shape characteristics of the implied volatility surface (ATM IV and Skew) in the joint SPX-VIX market, using two RMSE metrics (Daily and Sample RMSE) to test short-term and long-term calibration accuracy.

First, we clarify the key definitions: ATM IV is derived via linear interpolation for log moneyness $k:= \ln (K/F)$ near 0:
\begin{equation}
\label{IV}\text{IV}(\tau) = \frac{k_+}{k_+ - k_-} \sigma(k_-, \tau) - \frac{k_-}{k_+ - k_-} \sigma(k_+, \tau),
\end{equation}
where $k_+$ ($k_-$) is the smallest positive 
(largest negative) log moneyness of the implied volatility in the market. The SPX (VIX) volatility skew is defined as the slope of the two implied volatilities whose moneyness $\exp{(k_1)}$ and $\exp{(k_2)}$ are the closest to $0.7$ ($0.8$) and $1.2$ ($1.5$), respectively. That is,
\begin{equation}
    \label{Skew}\text{Skew}(\tau) = \frac{\sigma(k_1,\tau) - \sigma(k_2, \tau)}{k_1 - k_2}.
\end{equation}
Short-term refers to $< 30$ days to maturity (for both options), and long-term is $> 270$ days (SPX) or $> 90$ days (VIX). Daily RMSE averages across intraday maturities, while Sample RMSE treats all maturities equally.

For SPX options, composite models generally outperform ordinary ones, with minor exceptions. In short-term metrics: Composite JH reduces short-term ATM IV RMSE by 13.6\% (from 0.0235 to 0.0203) and short-term Skew RMSE by 41.2\% (from 0.1867 to 0.1098) versus JH. Composite Heston also improves on Heston, cutting long-term ATM IV RMSE by 45.0\% (from 0.0334 to 0.0189). The only exception is Composite JH’s weaker long-term SPX performance (long-term ATM IV RMSE: 0.0149 vs. JH’s 0.0100), which is partly justified by long-term SPX options accounting for just 4.72\% of the sample (vs. 18.90\% for short-term)—the model prioritizes fitting the more representative short-term data.

For VIX options, the advantage of composite models is prominent. Short-term gains are most striking: Composite Heston cuts short-term ATM IV RMSE by 37.7\% (from 0.2728 to 0.1699) and short-term Skew RMSE by 31.4\% (from 0.7122 to 0.4886) versus Heston. Composite JH performs even better, reducing short-term ATM IV RMSE by 60.9\% (from 0.0893 to 0.0349) and short-term Skew RMSE by 45.1\% (from 0.2876 to 0.1578) versus JH. Long-term VIX metrics show similar improvements—Composite JH’s long-term ATM IV RMSE (0.0204) is half of JH’s (0.0393).

Notably, Composite JH achieves balanced performance across the joint SPX-VIX market. Its short-term ATM IV RMSE is 0.0203 (SPX) vs. 0.0349 (VIX), and short-term Skew RMSE is 0.1098 (SPX) vs. 0.1578 (VIX)—a narrow gap. By contrast, JH shows large disparities (0.0235 vs. 0.0893 for short-term ATM IV; 0.1867 vs. 0.2876 for short-term Skew). Robustness tests in Appendix \ref{robustness} (using more moneyness pairs) confirm this pattern, proving composite models’ superiority is not sensitive to moneyness selection.

\begin{table}
\caption{Short-term and long-term calibration accuracy of implied volatility characteristics
    \label{characteristics}}
    \begin{center}
    \begin{tabular}{ccccc}
         & Heston & Composite Heston & JH & Composite JH \\\hline
  \multicolumn{5}{c}{Panel A: S\&P 500 options}\\
   \multicolumn{5}{l}{Panel A.1: Daily RMSE}\\\hline
       Short-term ATM IV & 0.0264 & 0.0255 & 0.0235 & 0.0203\\
       Long-term ATM IV & 0.0334 & 0.0189 & 0.0100 & 0.0149\\
    Short-term Skew & 0.4384 & 0.3584 &  0.1867 & 0.1098\\
    Long-term Skew & 0.1007 & 0.0777 & 0.0425 & 0.0613\\
     \multicolumn{5}{l}{Panel A.2: Sample RMSE}\\\hline
     Short-term ATM IV & 0.0266 & 0.0247 & 0.0236 & 0.0206\\
     Long-term ATM IV & 0.0336 & 0.0187 & 0.0098 & 0.0147\\
     Short-term Skew & 0.4397 & 0.3601 &  0.1813 & 0.1121\\
     Long-term Skew &  0.0991 &  0.0773 & 0.0420 & 0.0602 \\\hline
    \multicolumn{5}{c}{Panel B: VIX options}
         \\
    \multicolumn{5}{l}{Panel B.1: Daily RMSE}\\\hline
       Short-term ATM IV & 0.2728 & 0.1699 & 0.0893 & 0.0349\\
       Long-term ATM IV & 0.0621 & 0.0510 & 0.0393 & 0.0204\\
       Short-term Skew & 0.7122 & 0.4886 & 0.2876 & 0.1578\\
        Long-term Skew & 0.2905 & 0.1530 & 0.1050 & 0.0647\\
       \multicolumn{5}{l}{Panel B.2: Sample RMSE}\\\hline
        Short-term ATM IV & 0.2728 & 0.1699 & 0.0893 & 0.0349\\
     Long-term ATM IV & 0.0622 & 0.0520 & 0.0425 & 0.0204\\
     Short-term Skew & 0.7122 & 0.4886 & 0.2876 & 0.1578\\
     Long-term Skew & 0.3074 & 0.1530 &  0.1199 & 0.0640\\\hline
    \end{tabular}
    \end{center}
    {\textit{Note.} This table evaluates the calibration accuracy of IV characteristics (ATM IV and volatility skew) for four models, covering short-term and long-term maturities, and separately for SPX and VIX options. Panel A presents results for SPX options (A.1 = Daily RMSE; A.2 = Sample RMSE), and Panel B presents results for VIX options (B.1 = Daily RMSE; B.2 = Sample RMSE). ATM IV is alculated via linear interpolation (Eq. \eqref{IV}), using the two implied volatilities with log moneyness closest to 0. And volatility skew is defined as the slope of two implied volatilities (Eq. \eqref{Skew})—for SPX options, moneyness is closest to 0.7 and 1.2; for VIX options, moneyness is closest to 0.8 and 1.5. Maturities of SPX (VIX) options smaller than 30 (30) days are considered short-term, and larger than 270 (90) days are considered long-term. Daily RMSE is computed by the root mean square of daily mean absolute errors, and sample RMSE takes the root mean square error over all qualified option maturities.}
\end{table}

We can observe the difference in performance more explicitly in the plot of daily fits of implied volatility characteristics. Figure \ref{JH IV SPX} shows the short-term daily fits of ATM implied volatility for JH and Composite JH in SPX option market. The two models both fit ATM IV quite well, but it can be observed in the difference plot that Composite JH model performs consistently better, typically in the peak values of market IV.

\begin{figure}[!ht]
\caption{The short-term ATM implied volatility of JH and Composite JH in SPX markets}
 \begin{minipage}{\linewidth}
 \subfloat[][Short-term ATM IV]{
 \includegraphics[width=\linewidth]{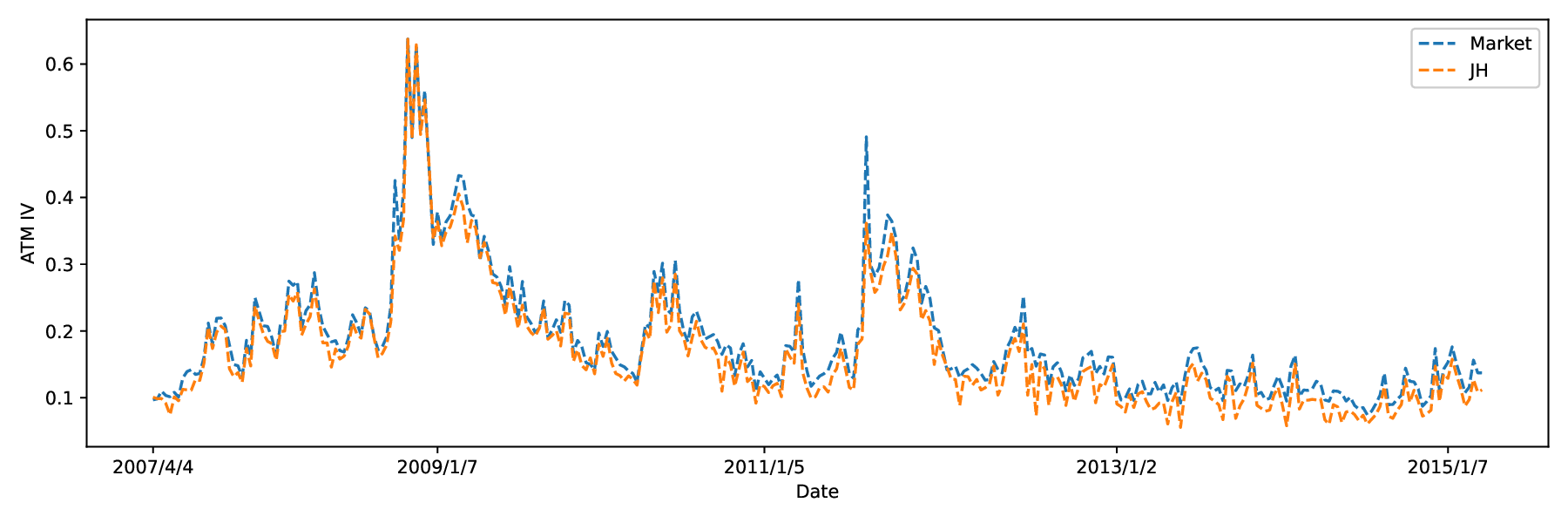}
 }
 \end{minipage} \vspace{-1em}
  \begin{minipage}{\linewidth}
 \subfloat[][Short-term ATM IV]{
 \includegraphics[width=\linewidth]{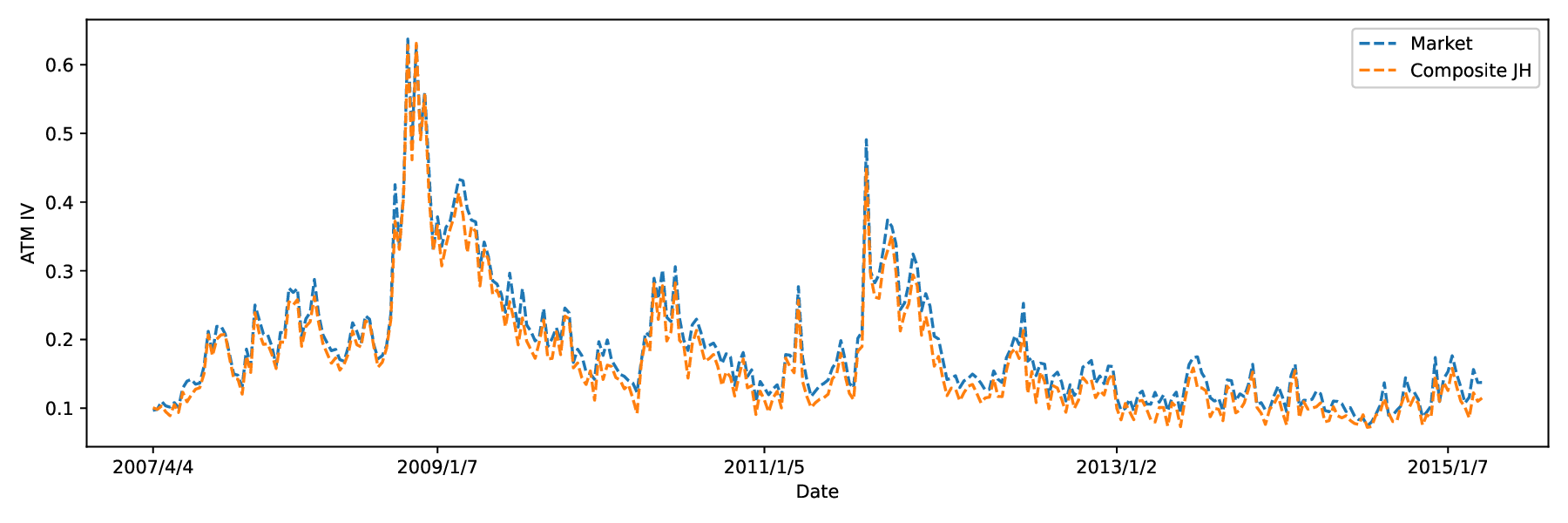}
}
 \end{minipage} \vspace{-1em}
 \begin{minipage}{\textwidth}
\subfloat[][ATM IV difference]{
\includegraphics[width=\textwidth]{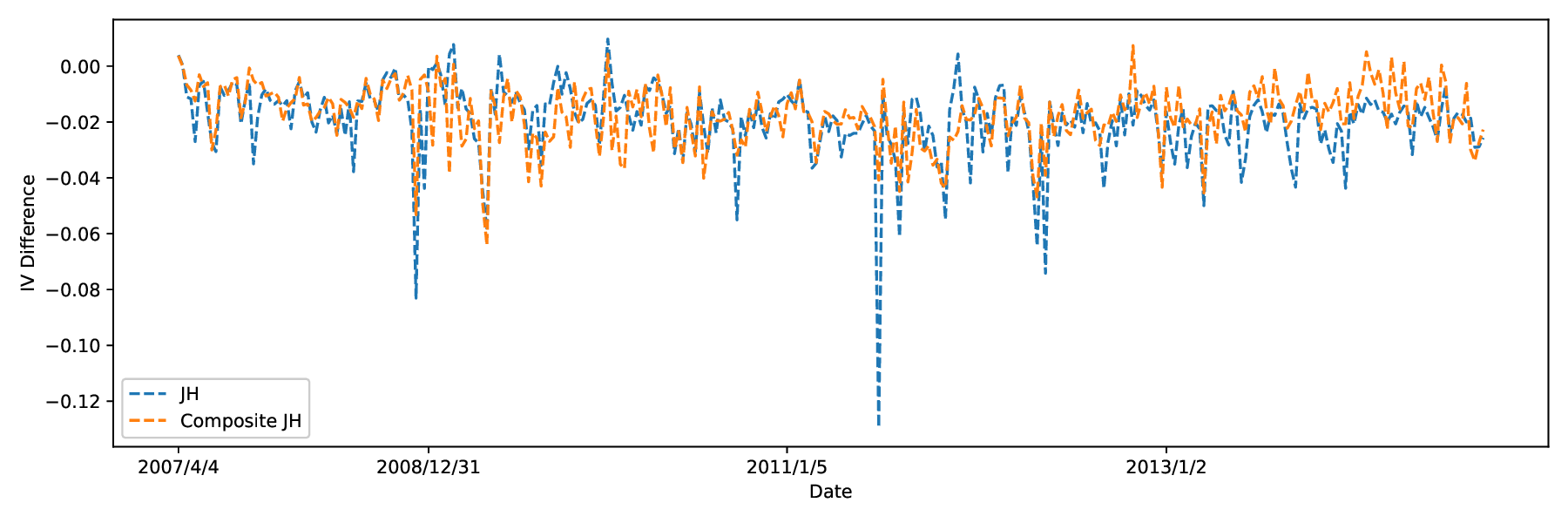}
}
\end{minipage}
\vspace{-2pt}
\footnotesize \textit{Note. } The plots compare the short-term ATM IV, defined by Eq. \eqref{IV}, between the models (JH and Composite JH) and the market. The first row plots the ATM IV of JH model, the second row plots that of Composite JH model, and the last row compares the corresponding difference (model - market) between models and the market. Maturities of SPX options smaller than 30 days are considered short-term.\label{JH IV SPX}
\end{figure}

Figure \ref{JH IV SPX Skew} plots the short-term daily fits of volatility skew for JH and Composite JH in SPX option market. We observe comparable performance between the two models, but JH model presents less calibration robustness since it occasionally generates irregular skews. Both Figure \ref{JH IV SPX} and \ref{JH IV SPX Skew} confirm the better performance of Composite JH in short-term SPX option market as shown in Table \ref{characteristics}.

\begin{figure}[!ht]
\caption{The short-term volatility skew of JH and Composite JH in SPX markets}
 \begin{minipage}{\linewidth}
 \subfloat[][Short-term volatility skew]{
 \includegraphics[width=\linewidth]{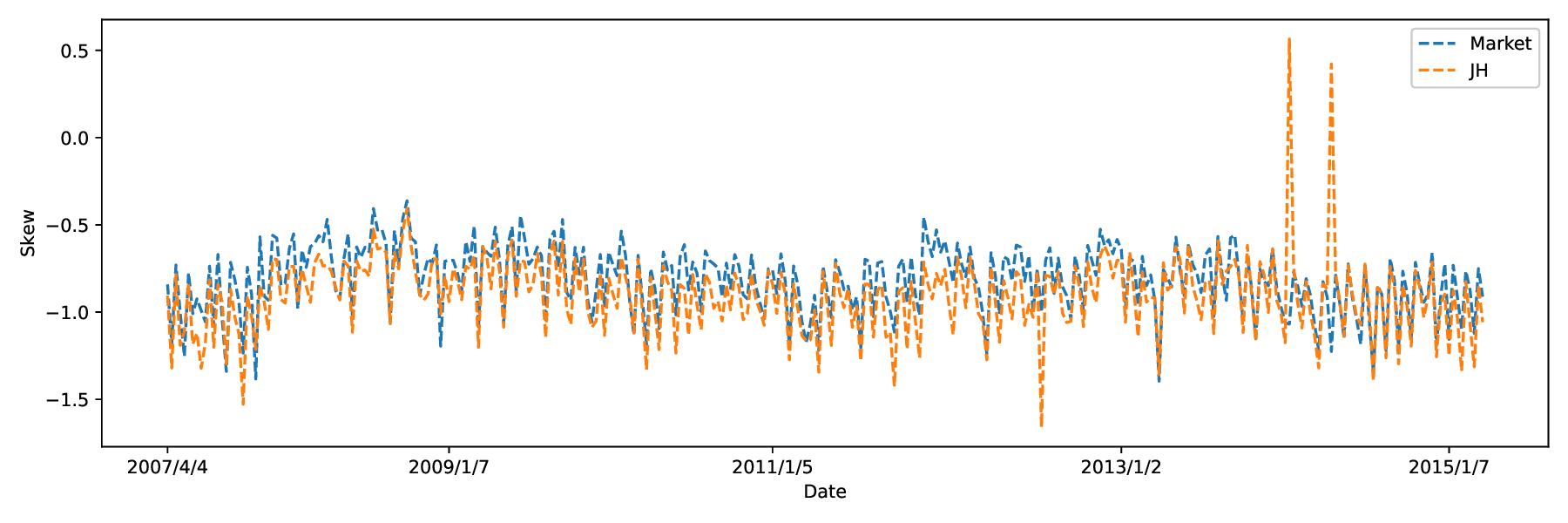}
 }
 \end{minipage} \vspace{-1em}
  \begin{minipage}{\linewidth}
 \subfloat[][Short-term volatility skew]{
 \includegraphics[width=\linewidth]{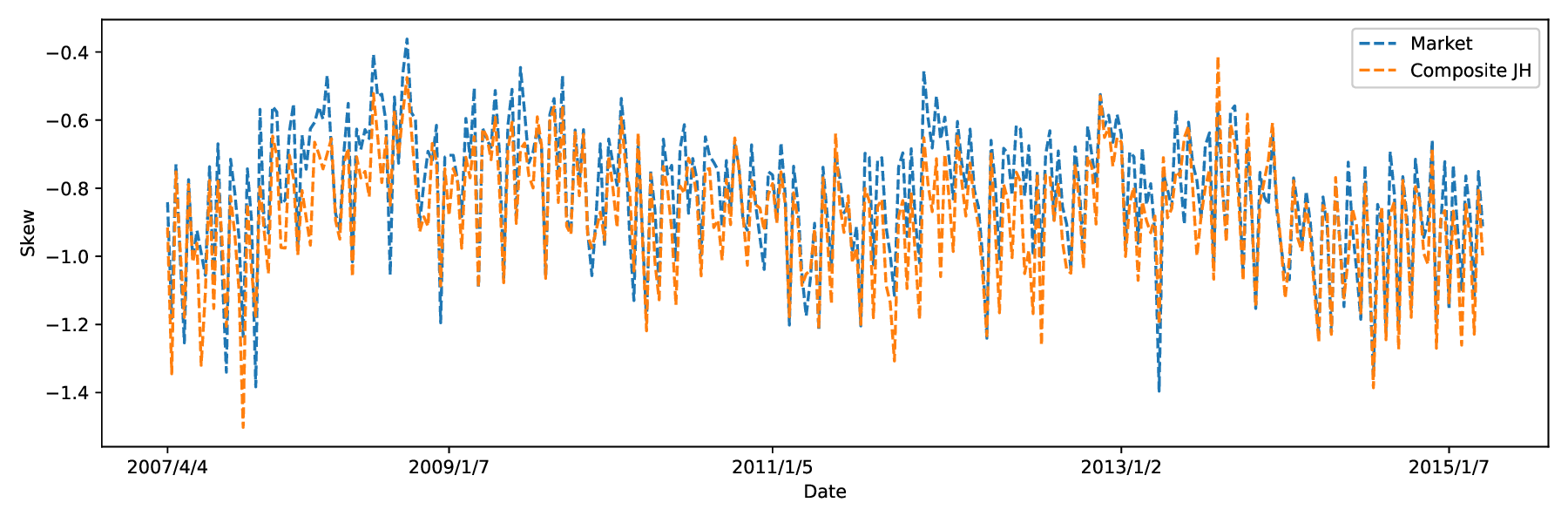}
}
 \end{minipage} \vspace{-1em}
 \begin{minipage}{\textwidth}
\subfloat[][Volatility skew difference]{
\includegraphics[width=\textwidth]{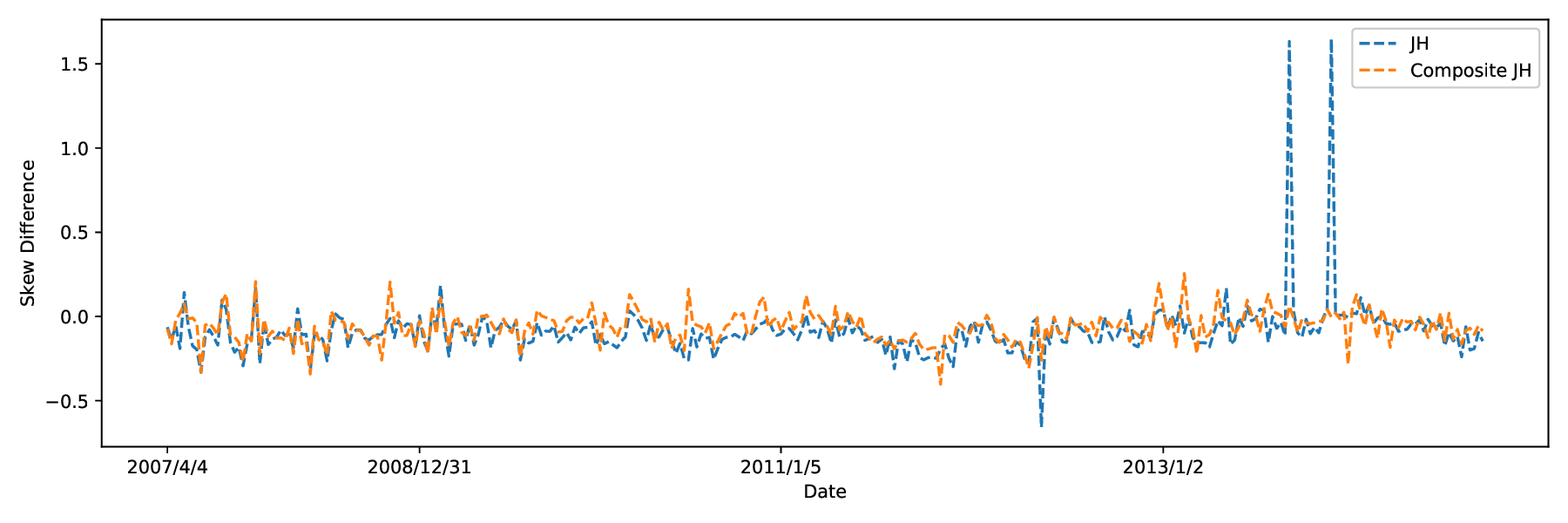}
}
\end{minipage}
\vspace{-2pt}
\footnotesize \textit{Note. } The plots compare the short-term volatility skew, defined by Eq. \eqref{Skew}, between the models (JH and Composite JH) and the market. The first row plots the volatility skew of JH model, the second row plots that of Composite JH model, and the last row compares the corresponding difference (model - market) between models and the market. Maturities of SPX options smaller than 30 days are considered short-term.\label{JH IV SPX Skew}
\end{figure}

Figure \ref{JH IV SPX long term} shows the long-term daily fitting performance of ATM IV for JH and Composite JH model in SPX option market. And the difference plot witnesses a consistently better performance of JH over Composite JH. 

\begin{figure}[!ht]
\caption{The long-term ATM implied volatility of JH and Composite JH in SPX markets}
 \begin{minipage}{\linewidth}
 \subfloat[][Long-term ATM IV]{
 \includegraphics[width=\linewidth]{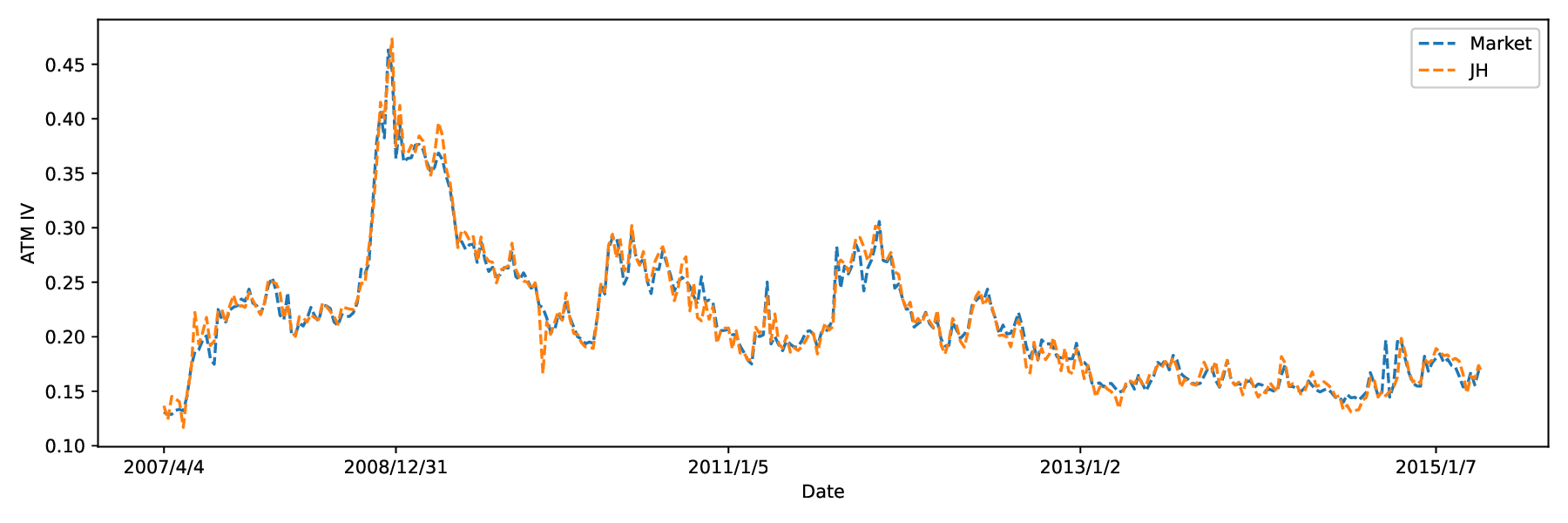}
 }
 \end{minipage} \vspace{-1em}
  \begin{minipage}{\linewidth}
 \subfloat[][Long-term ATM IV]{
 \includegraphics[width=\linewidth]{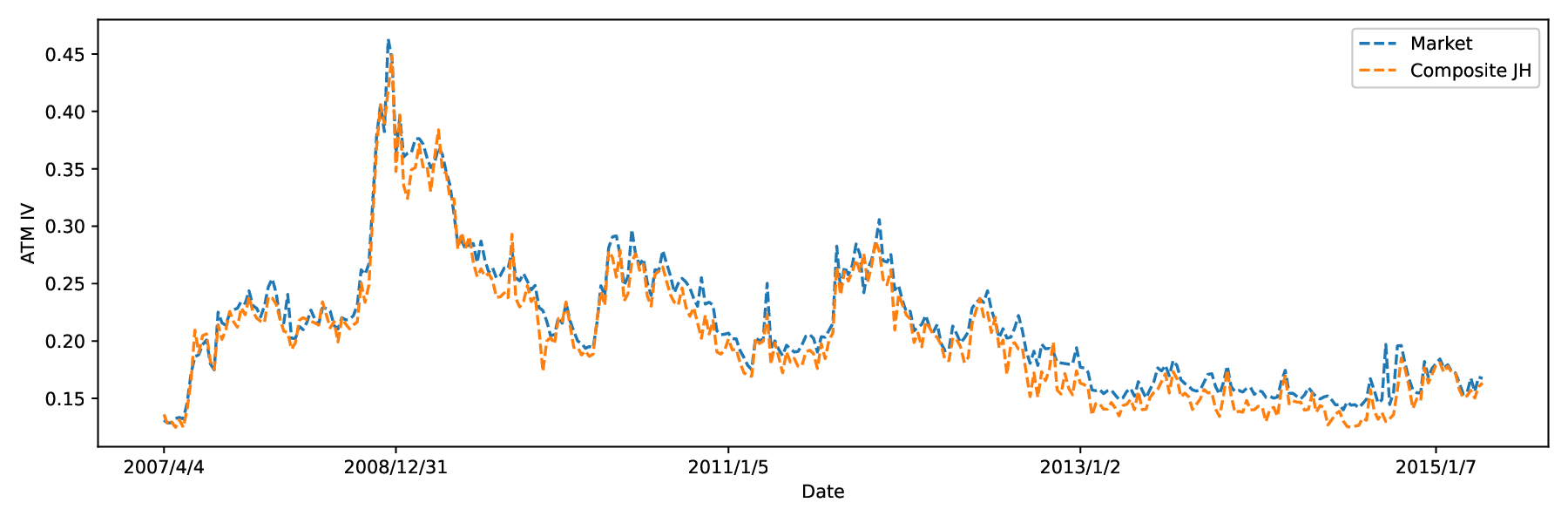}
}
 \end{minipage} \vspace{-1em}
 \begin{minipage}{\textwidth}
\subfloat[][ATM IV difference]{
\includegraphics[width=\textwidth]{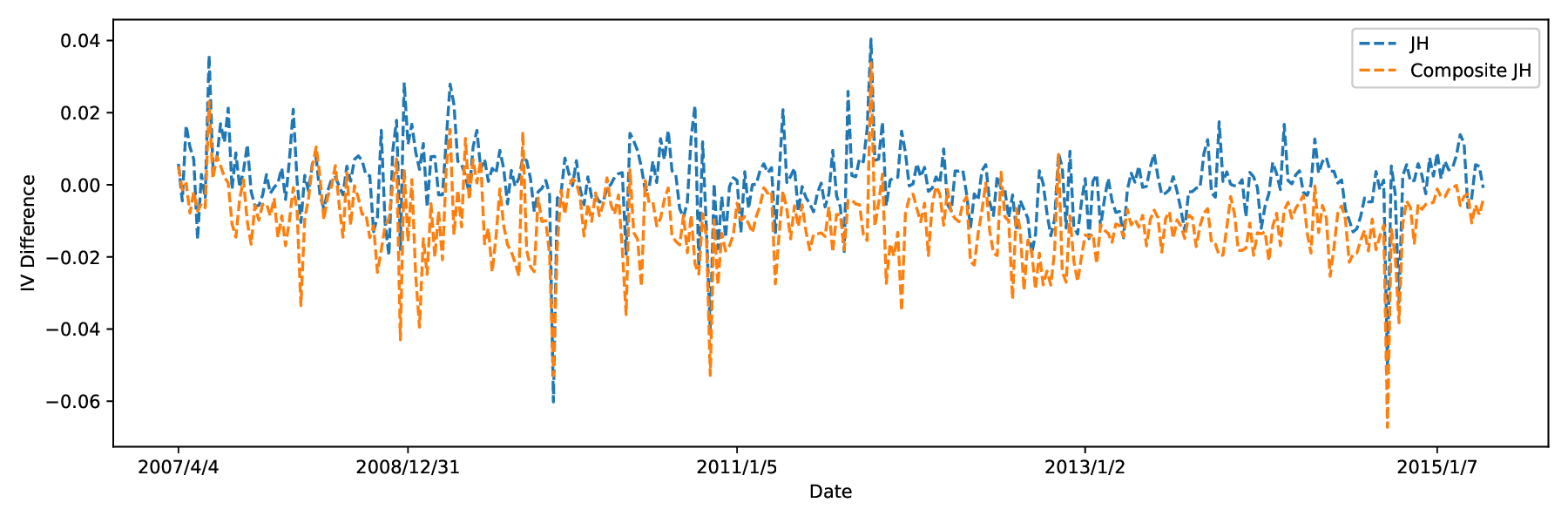}
}
\end{minipage}
\vspace{-2pt}
\footnotesize \textit{Note. } The plots compare the long-term ATM IV, defined by Eq. \eqref{IV}, between the models (JH and Composite JH) and the market. The first row plots the ATM IV of JH model, the second row plots that of Composite JH model, and the last row compares the corresponding difference (model - market) between models and the market. Maturities of SPX options smaller than 270 days are considered short-term.\label{JH IV SPX long term}
\end{figure}

Figure \ref{JH IV SPX skew long term} further confirms the better performance of JH model by comparing the daily fits of long-term volatility skew, where JH presents consistently closer volatility skew to the market values. This advantage of JH model in the long-term SPX option calibration performance is also generally presented in Table \ref{characteristics} by RMSE metric. We also note that both models have relatively inferior performance in long-term skew, which is partly justified by the low proportion of option data in the long term.

\begin{figure}[!ht]
\caption{The long-term volatility skew of JH and Composite JH in SPX markets}
 \begin{minipage}{\linewidth}
 \subfloat[][Long-term volatility skew]{
 \includegraphics[width=\linewidth]{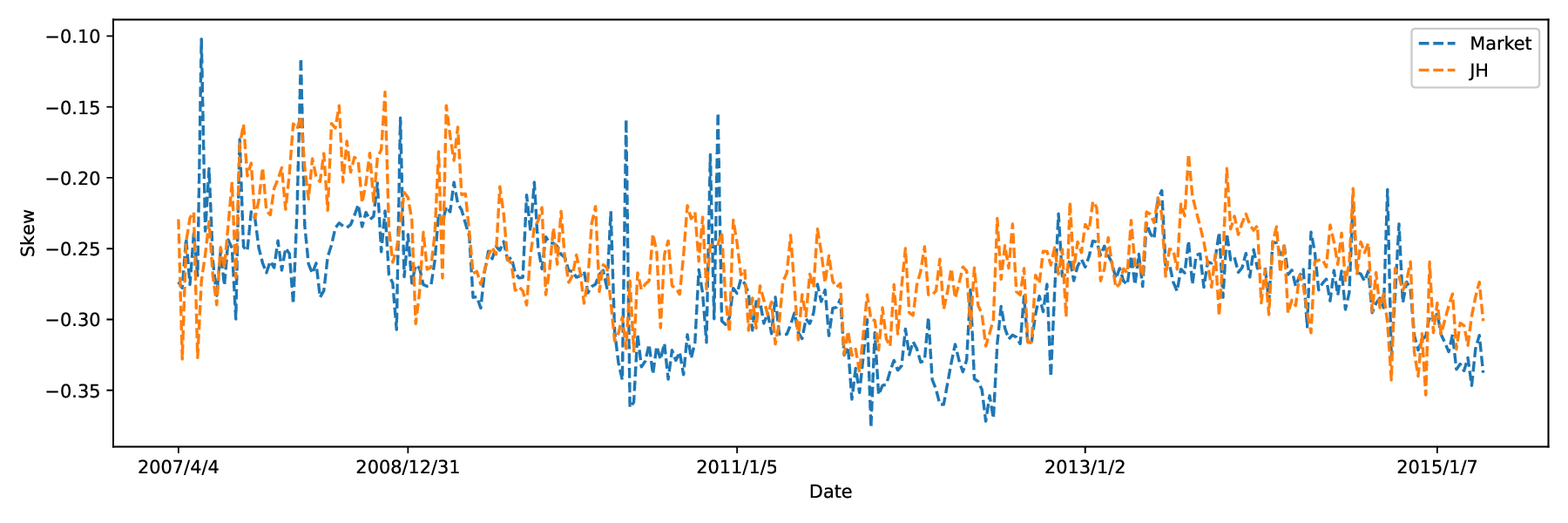}
 }
 \end{minipage} \vspace{-1em}
  \begin{minipage}{\linewidth}
 \subfloat[][Long-term volatility skew]{
 \includegraphics[width=\linewidth]{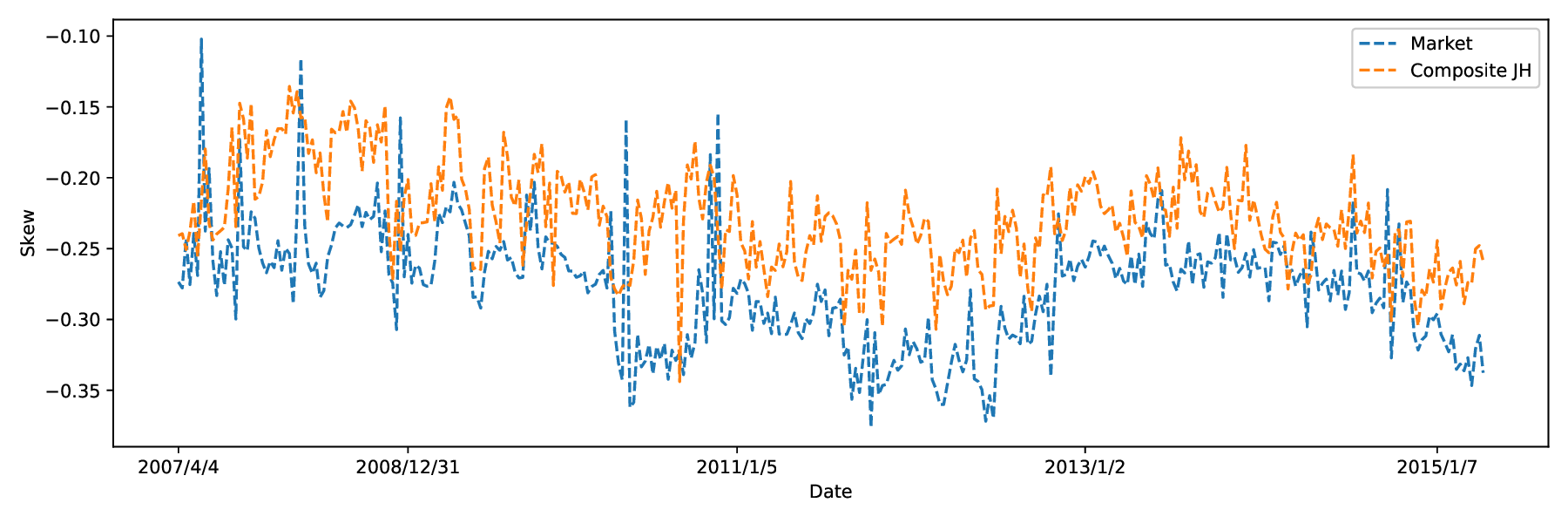}
}
 \end{minipage} \vspace{-1em}
 \begin{minipage}{\textwidth}
\subfloat[][Volatility skew difference]{
\includegraphics[width=\textwidth]{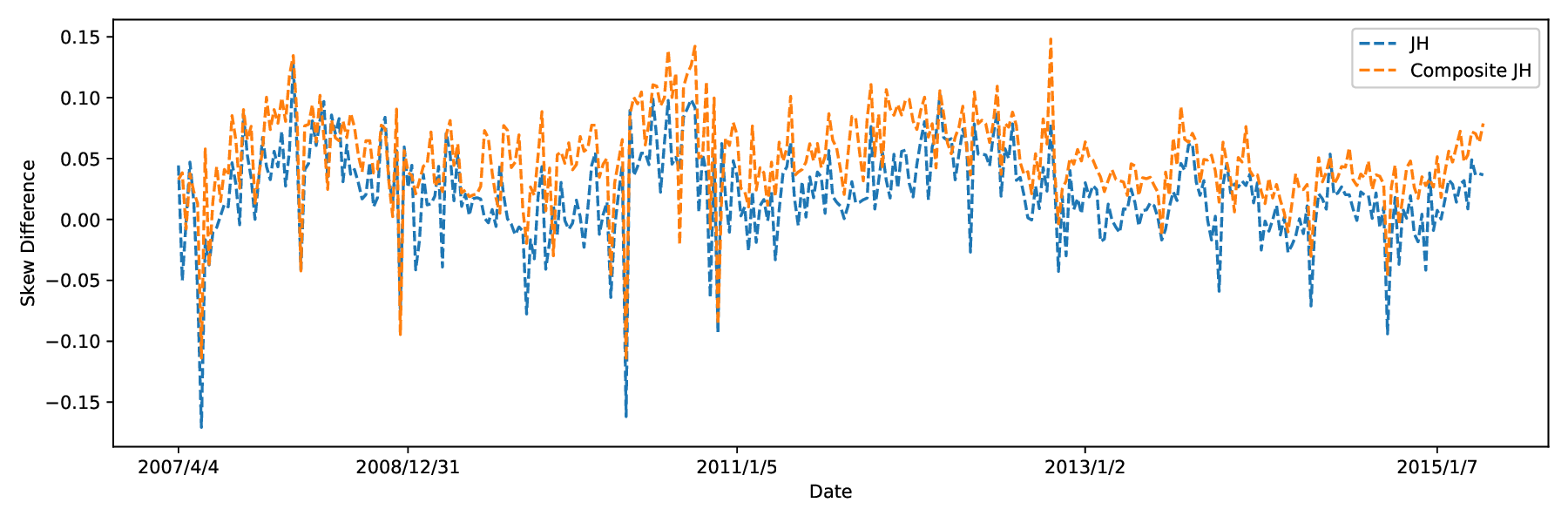}
}
\end{minipage}
\vspace{-2pt}
\footnotesize \textit{Note. } The plots compare the long-term volatility skew, defined by Eq. \eqref{Skew}, between the models (JH and Composite JH) and the market. The first row plots the volatility skew of JH model, the second row plots that of Composite JH model, and the last row compares the corresponding difference (model - market) between models and the market. Maturities of SPX options smaller than 270 days are considered short-term.\label{JH IV SPX skew long term}
\end{figure}

Despite JH model's advantage in the long-term SPX option market, we will observe that Composite JH excels consistly in VIX option market. Figure \ref{JH IV VIX} plots the short-term ATM IV fits in VIX option market for JH and Composite JH. We see that Composite JH accurately fits the market values while JH model fits relatively worse, especially in the latter half period and in extrema of market values. The difference plot also confirms this. 

\begin{figure}[!ht]
\caption{The short-term ATM implied volatility of JH and Composite JH in VIX markets}
 \begin{minipage}{\linewidth}
 \subfloat[][Short-term ATM IV]{
 \includegraphics[width=\linewidth]{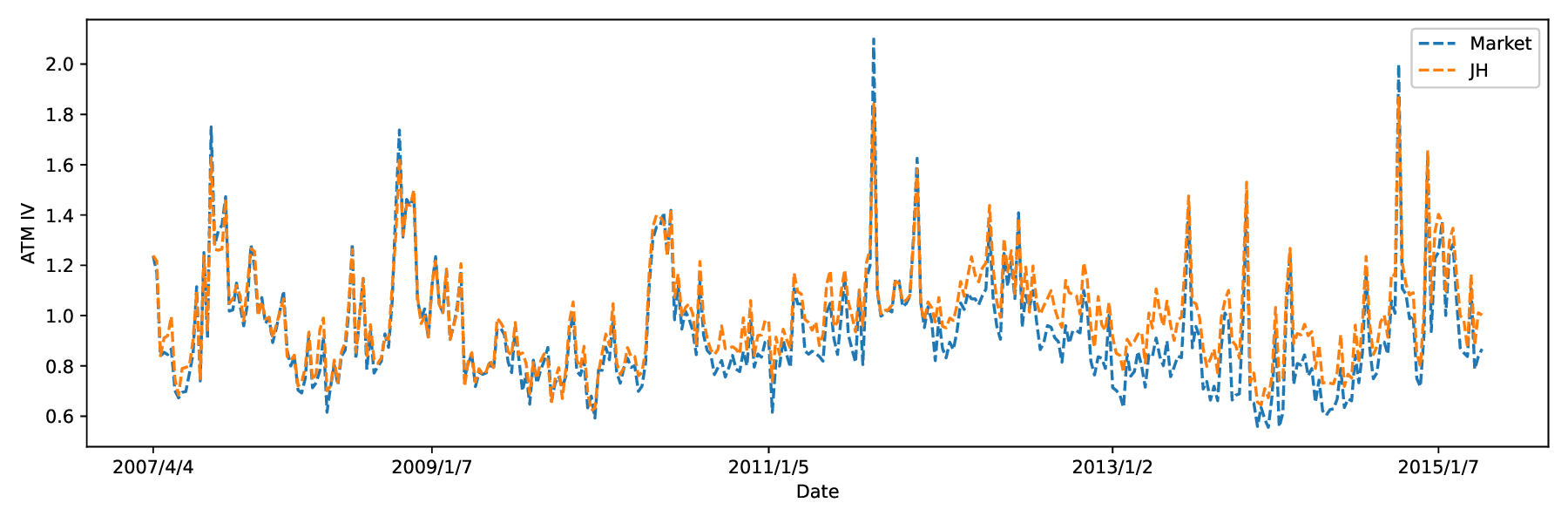}
 }
 \end{minipage} \vspace{-1em}
  \begin{minipage}{\linewidth}
 \subfloat[][Short-term ATM IV]{
 \includegraphics[width=\linewidth]{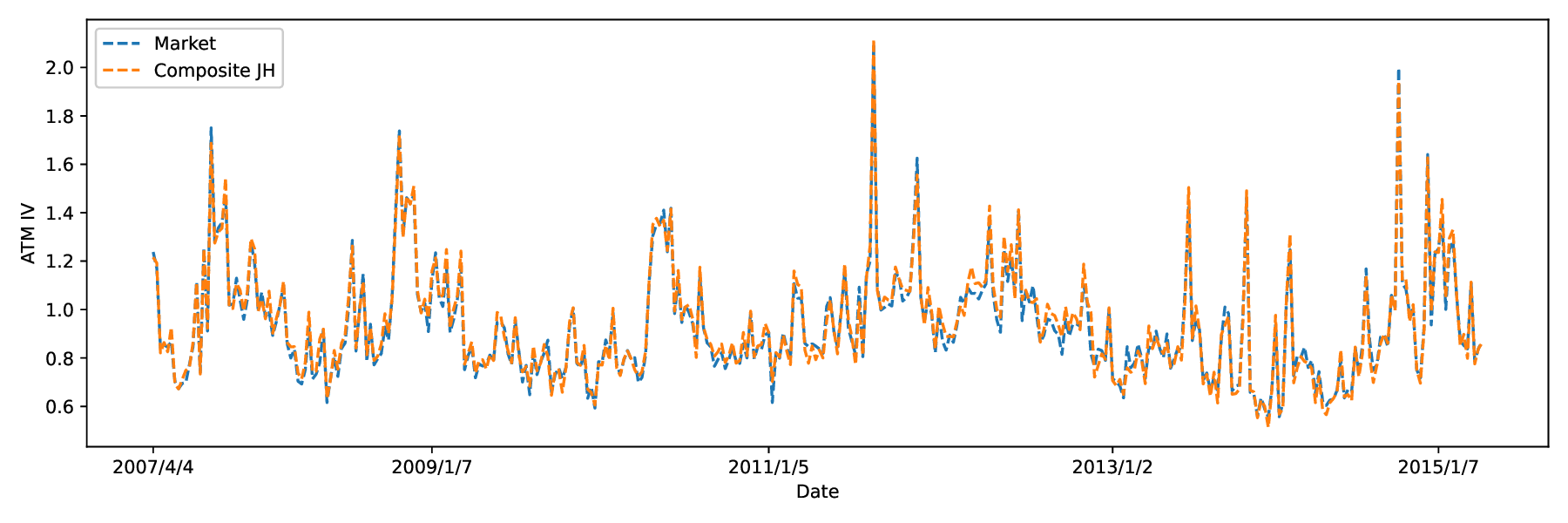}
}
 \end{minipage} \vspace{-1em}
 \begin{minipage}{\textwidth}
\subfloat[][ATM IV difference]{
\includegraphics[width=\textwidth]{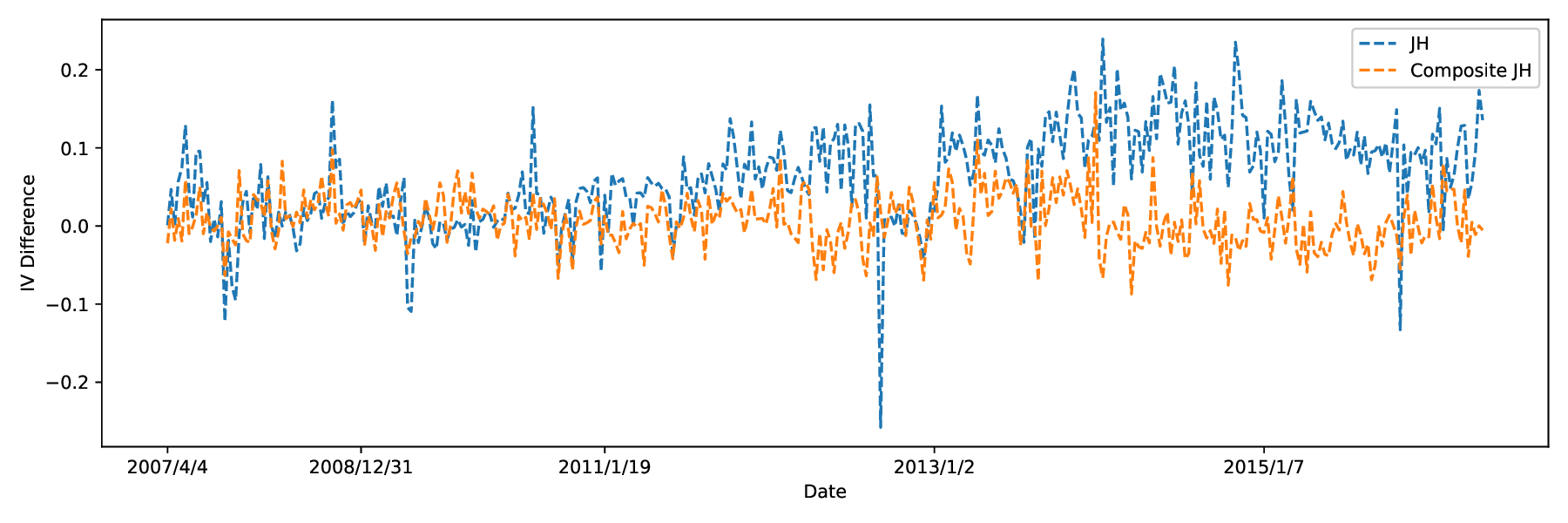}
}
\end{minipage}
\vspace{-2pt}
\footnotesize \textit{Note. } The plots compare the short-term ATM IV, defined by Eq. \eqref{IV}, between the models (JH and Composite JH) and the market. The first row plots the ATM IV of JH model, the second row plots that of Composite JH model, and the last row compares the corresponding difference (model - market) between models and the market. Maturities of SPX options smaller than 30 days are considered short-term.\label{JH IV VIX}
\end{figure}

Figure \ref{JH IV VIX Skew} further plots the short-term volatility skew fits in VIX option market for the two models. We also witness a consistently and strongly better performance of Composite JH. In addition, while JH model occasionally generates irregular calibrated models (e.g., the peak values in the difference plot), Composite JH calibrates much more robustly.

\begin{figure}[!ht]
\caption{The short-term volatility skew of JH and Composite JH in VIX markets}
 \begin{minipage}{\linewidth}
 \subfloat[][Short-term volatility skew]{
 \includegraphics[width=\linewidth]{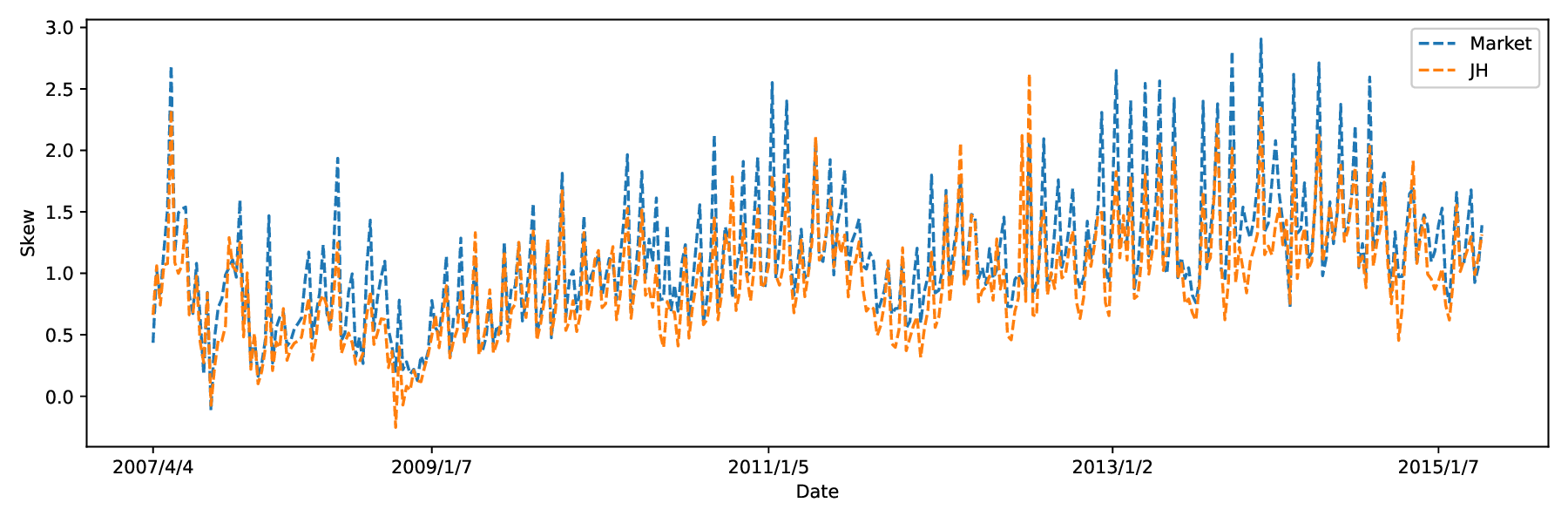}
 }
 \end{minipage} \vspace{-1em}
  \begin{minipage}{\linewidth}
 \subfloat[][Short-term volatility skew]{
 \includegraphics[width=\linewidth]{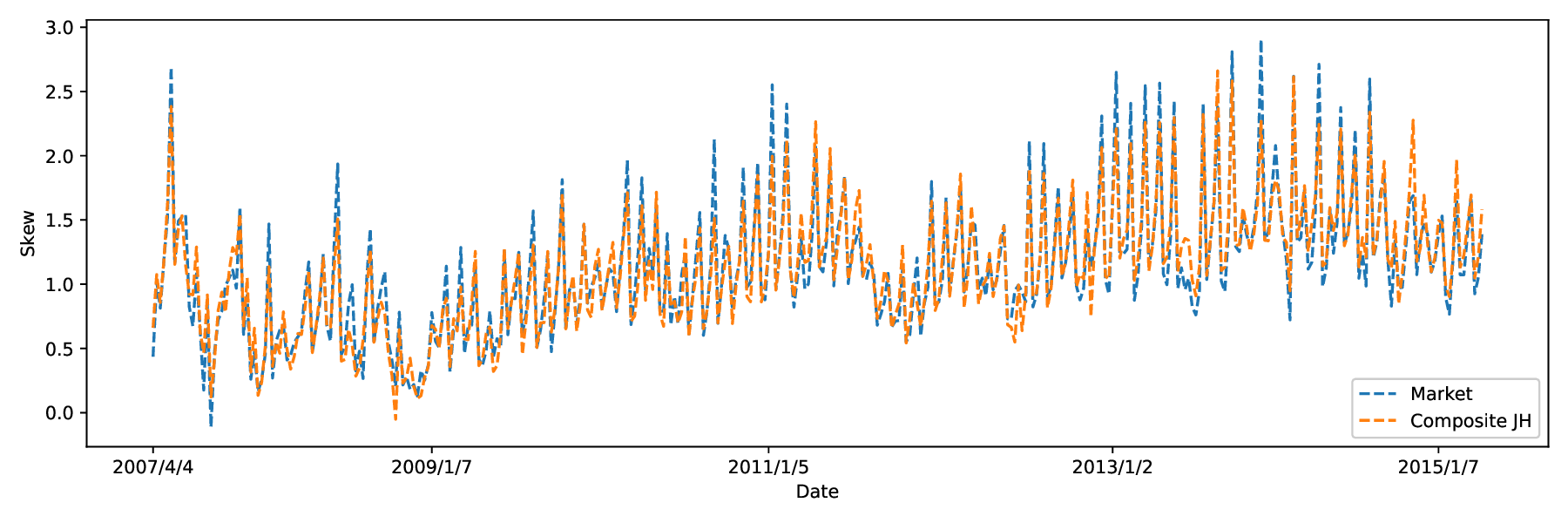}
}
 \end{minipage} \vspace{-1em}
 \begin{minipage}{\textwidth}
\subfloat[][Volatility skew difference]{
\includegraphics[width=\textwidth]{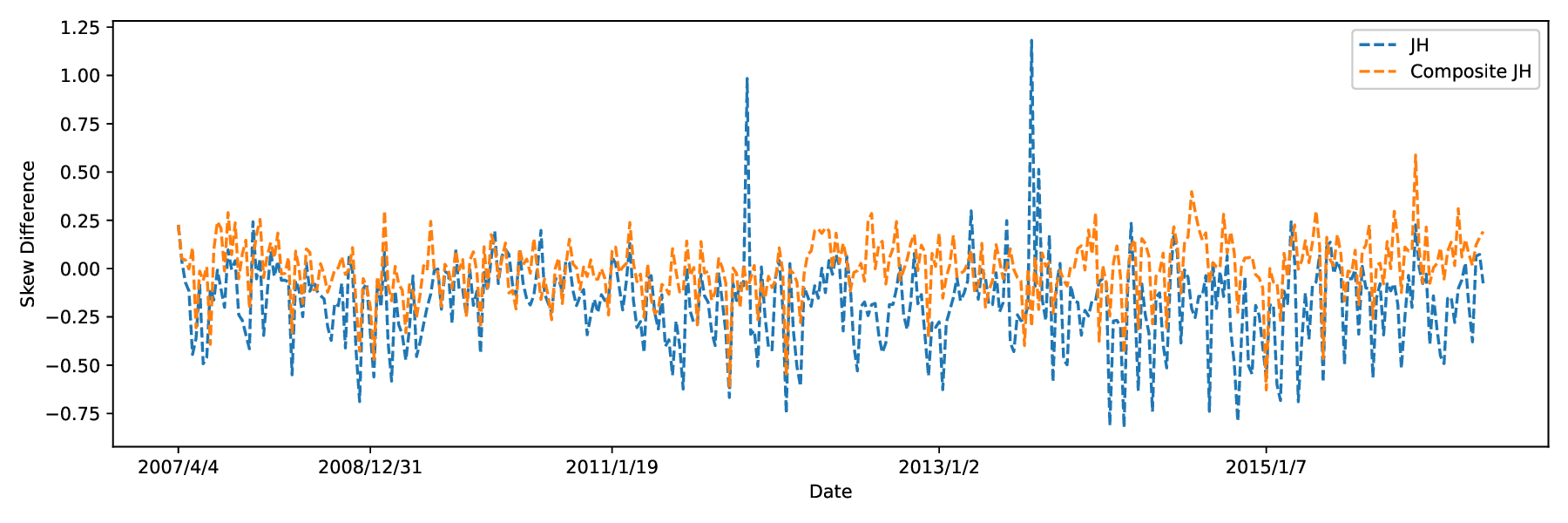}
}
\end{minipage}
\vspace{-2pt}
\footnotesize \textit{Note. } The plots compare the short-term volatility skew, defined by Eq. \eqref{Skew}, between the models (JH and Composite JH) and the market. The first row plots the volatility skew of JH model, the second row plots that of Composite JH model, and the last row compares the corresponding difference (model - market) between models and the market. Maturities of SPX options smaller than 30 days are considered short-term.\label{JH IV VIX Skew}
\end{figure}

Figure \ref{JH IV VIX long term} plots the long-term daily ATM IV fits in the VIX option market. We observe similar patterns of performance as in the short-term scenarios, where Composite JH model performs consistently better and more robustly.

\begin{figure}[!ht]
\caption{The long-term ATM implied volatility of JH and Composite JH in VIX markets}
 \begin{minipage}{\linewidth}
 \subfloat[][Long-term ATM IV]{
 \includegraphics[width=\linewidth]{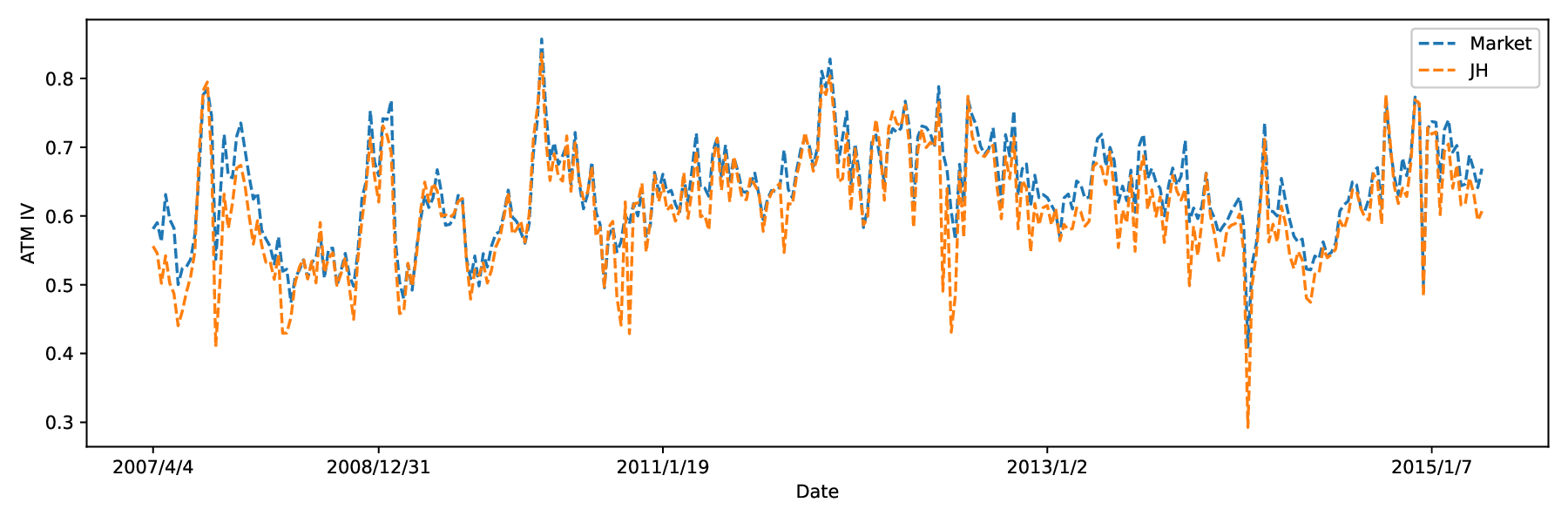}
 }
 \end{minipage} \vspace{-1em}
  \begin{minipage}{\linewidth}
 \subfloat[][Long-term ATM IV]{
 \includegraphics[width=\linewidth]{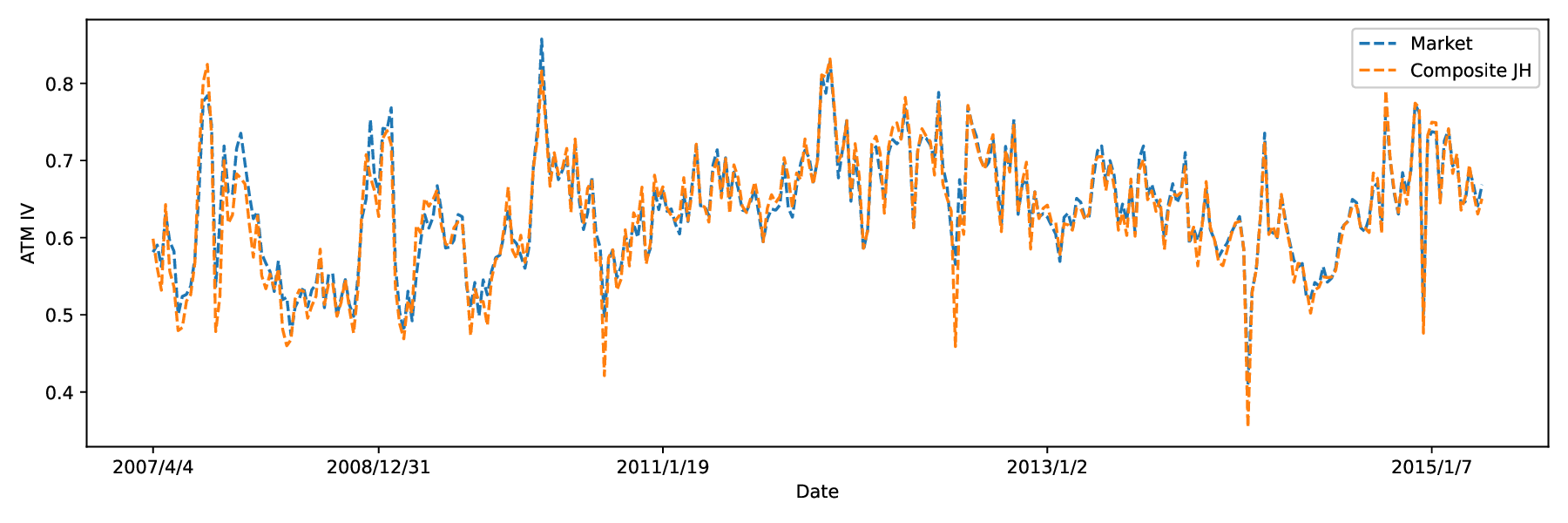}
}
 \end{minipage} \vspace{-1em}
 \begin{minipage}{\textwidth}
\subfloat[][ATM IV difference]{
\includegraphics[width=\textwidth]{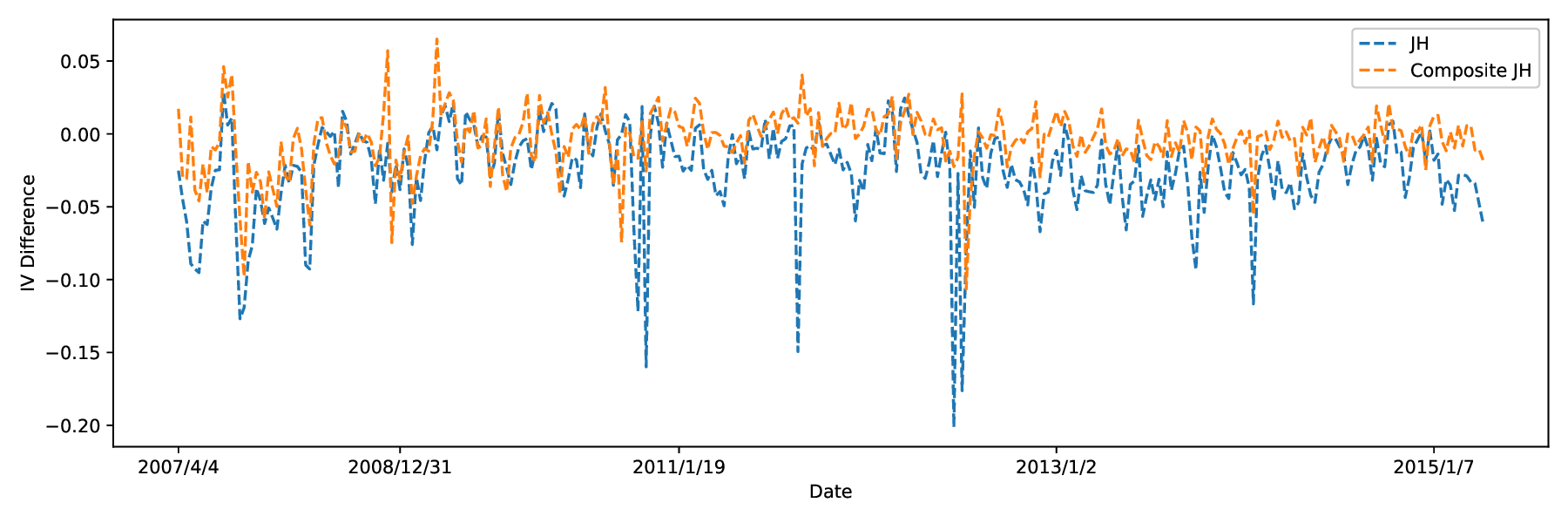}
}
\end{minipage}
\vspace{-2pt}
\footnotesize \textit{Note. } The plots compare the long-term ATM IV, defined by Eq. \eqref{IV}, between the models (JH and Composite JH) and the market. The first row plots the ATM IV of JH model, the second row plots that of Composite JH model, and the last row compares the corresponding difference (model - market) between models and the market. Maturities of SPX options smaller than 270 days are considered short-term.\label{JH IV VIX long term}
\end{figure}

And lastly, Figure \ref{JH IV SPX skew long term} shows the long-term daily fits of volatility skew for the two models in VIX option market, where Composite JH model is again much more robust. Such outstanding overall performance of Composite JH in VIX option markets is also confirmed in Table \ref{characteristics}.

Combining the results above, we find superior and balanced calibration performance of Composite JH model in the joint market.
\begin{figure}[!ht]
\caption{The long-term volatility skew of JH and Composite JH in VIX markets}
 \begin{minipage}{\linewidth}
 \subfloat[][Long-term volatility skew]{
 \includegraphics[width=\linewidth]{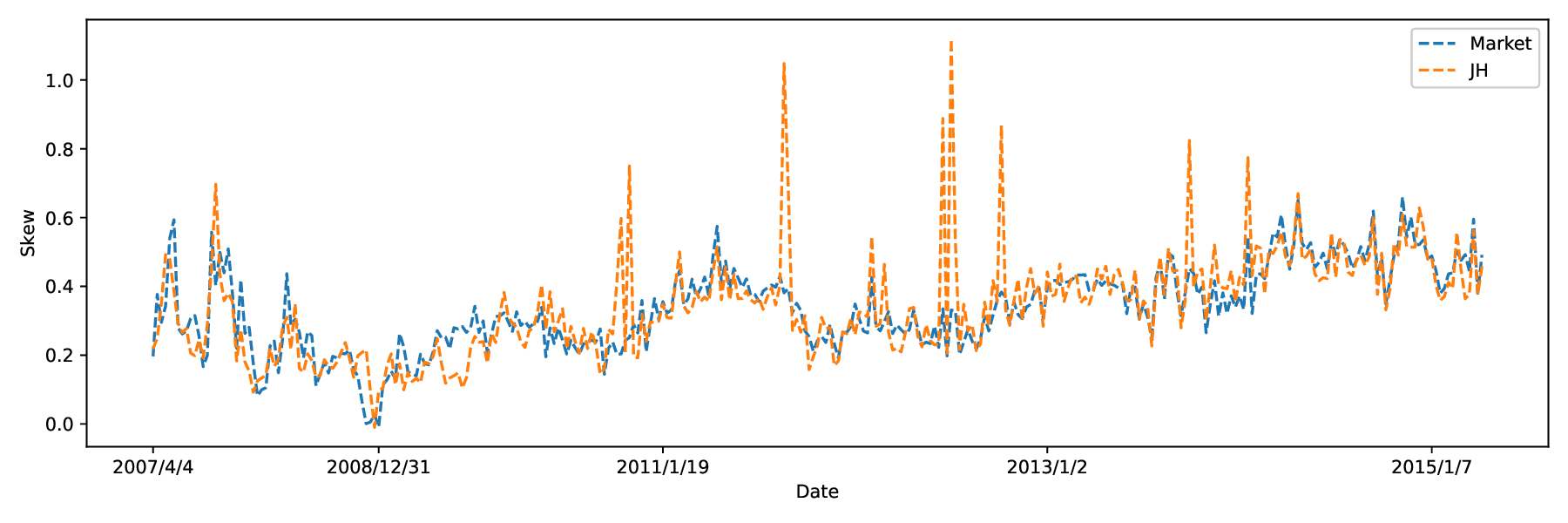}
 }
 \end{minipage} \vspace{-1em}
  \begin{minipage}{\linewidth}
 \subfloat[][Long-term volatility skew]{
 \includegraphics[width=\linewidth]{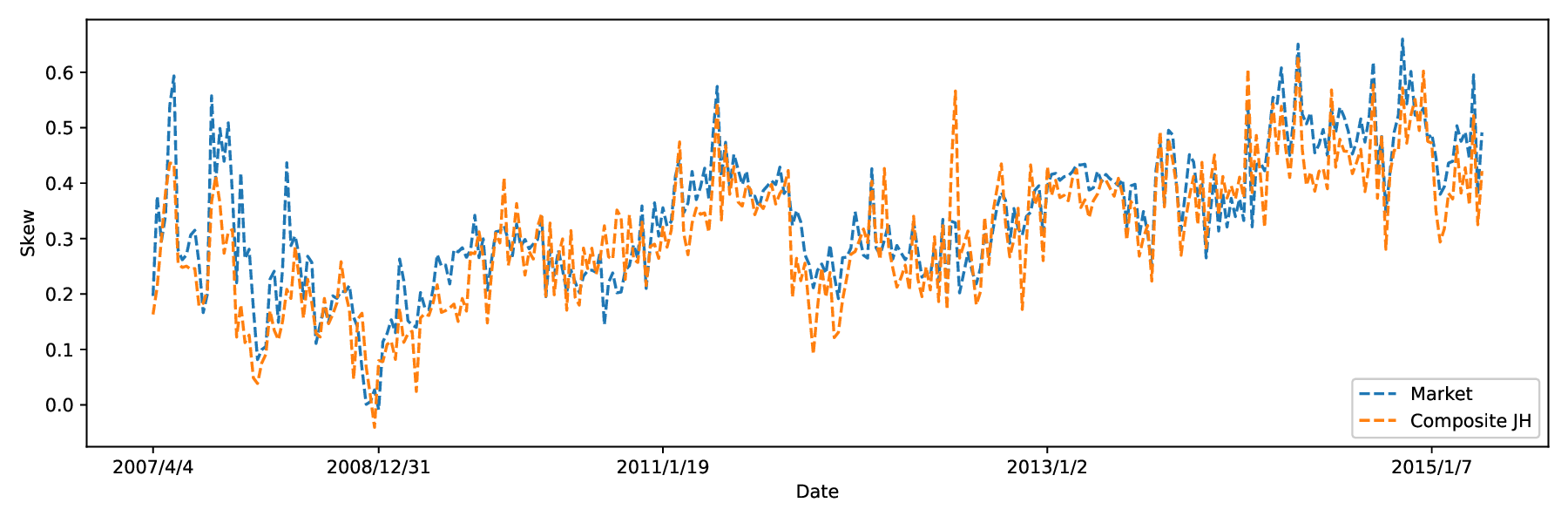}
}
 \end{minipage} \vspace{-1em}
 \begin{minipage}{\textwidth}
\subfloat[][Volatility skew difference]{
\includegraphics[width=\textwidth]{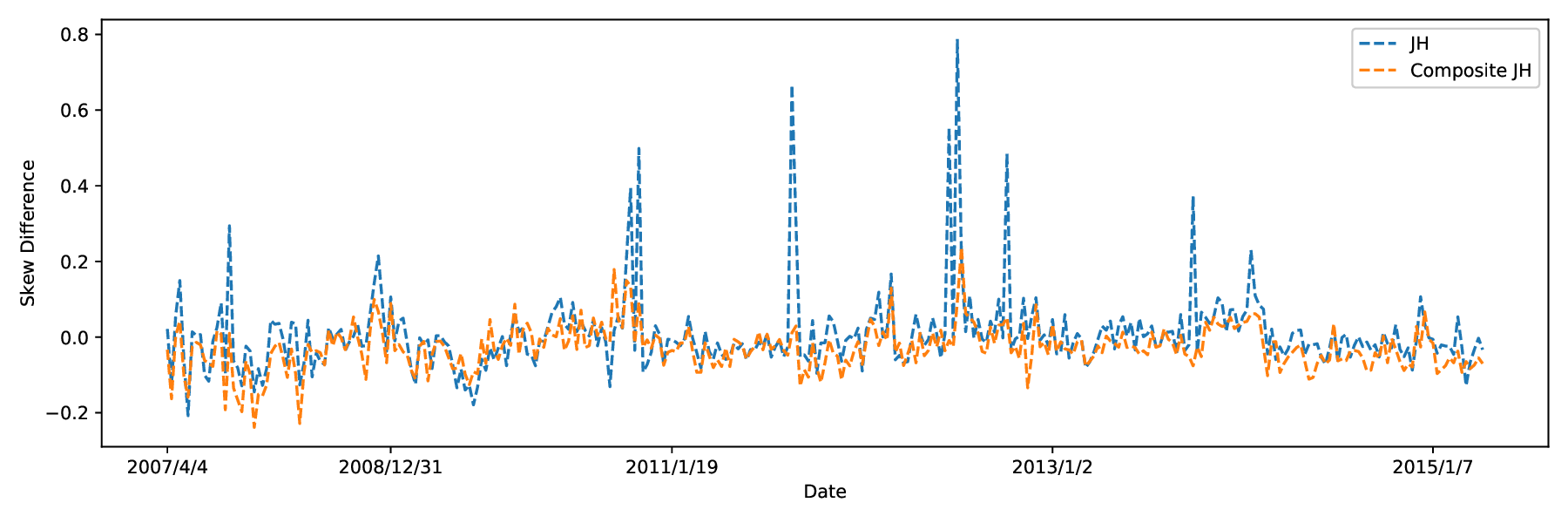}
}
\end{minipage}
\vspace{-2pt}
\footnotesize \textit{Note. } The plots compare the long-term volatility skew, defined by Eq. \eqref{Skew}, between the models (JH and Composite JH) and the market. The first row plots the volatility skew of JH model, the second row plots that of Composite JH model, and the last row compares the corresponding difference (model - market) between models and the market. Maturities of SPX options smaller than 270 days are considered short-term.\label{JH IV VIX skew long term}
\end{figure}

Next, we consider how the models fit the ATM skew in both markets. The decay rate of ATM skew is widely concerned in option markets. We characterize the goodness of the fit in short-term ATM skew by computing the RMSRE of near-the-money implied volatilities for all short maturities. We denote by $\sigma_{i}(k, \tau), \;i \in\{\text{mkt}, \text{mdl}\}$ the (market/model) implied volatility at maturity $\tau$ and moneyness $k$. The daily RMSRE is defined as 
\begin{equation}
\label{D RMSRE}\text{RMSRE}_t = \frac{1}{|\Gamma_t|} \sum_{\tau \in \Gamma_t }\left(\sqrt{\sum_{\exp{(k)}\in \tilde K}\left(\frac{\sigma_{\text{mdl}}(k, \tau) - \sigma_{\text{mkt}}(k, \tau)}{\sigma_{\text{mkt}}(k, \tau)}\right)^2}\right),
\end{equation}
where $\Gamma_t$ is the set of all maturities on date $t$ and $\tilde K$ is the moneyness range used to compute the RMSRE. We set $\tilde K = [0.9, 1.1]$ for the SPX market and $\tilde K = [0.8, 1.2]$ for the VIX market, a larger range due to the sparsity of moneyness in VIX markets.

Table \ref{decay table} and Figure \ref{decay plot} jointly assess the in-sample performance of composite models in fitting short-term near-the-money implied volatilities.

From Table \ref{decay table}, composite models consistently outperform their ordinary counterparts, with far more significant improvements in VIX markets. For SPX options: Composite Heston reduces the average daily RMSRE by 21.2\% (from 0.1044 to 0.0823), while Composite JH cuts it by 19.3\% (from 0.0658 to 0.0531). The pairwise t-statistics (4.74 for Composite Heston vs. Heston, 9.36 for Composite JH vs. JH) confirm these improvements are statistically significant.
In VIX markets, the advantage of composite models is even more striking. Composite Heston slashes the average daily RMSRE by 50.0\% (from 0.1712 to 0.0856), and Composite JH achieves a 45.2\% reduction (from 0.0620 to 0.0340), nearly halving the error. The $t$-statistics here are much larger (14.71 and 12.18, respectively), indicating the outperformance of composite models in VIX is highly statistically reliable.

Figure \ref{decay plot} provides intuitive visual evidence, focusing on the top-performing JH and Composite JH. Subfigure (a) (SPX) shows Composite JH’s daily RMSRE curve lies consistently below JH’s across the in-sample period, with smaller day-to-day fluctuations. Subfigure (b) (VIX) amplifies this gap: Composite JH’s RMSRE is drastically lower than JH’s, and the difference widens over time, especially after 2011, as noted. Subfigure (c) (RMSRE difference: JH minus Composite JH) further confirms this: nearly all daily differences are positive, meaning Composite JH outperforms JH on almost every calibrated date, with larger positive values (i.e., bigger performance gains) in VIX markets post-2011.

\begin{table}[!ht]
\caption{The in-sample average of daily RMSRE (Eq. \eqref{D RMSRE}) of short-term near-the-money implied volatilities in SPX and VIX option markets\label{decay table}}
\begin{center}
    \begin{tabular}{ccccc}
         & Heston & Composite Heston & JH & Composite JH \\\hline
       Error (SPX) & 0.1044 & 0.0823 & 0.0658 & 0.0531\\
        Error (VIX) & 0.1712 & 0.0856 & 0.0620 & 0.0340\\
        \makecell[c]{Pairwise $t$-statistics\\(SPX)} & - &4.74 & - & 9.36 \\
        \makecell[c]{Pairwise $t$-statistics\\(VIX)} & - & 14.71 & - & 12.18 \\\hline
    \end{tabular}
    \end{center}
    {\textit{Note.} This table reports the in-sample average of daily RMSRE (defined in Equation \eqref{D RMSRE}) for short-term near-the-money implied volatilities, comparing four models across SPX and VIX option markets. For SPX options, we select options with maturities smaller than 90 days and moneyness within $[0.9, 1.1]$. For VIX options, we select options with maturities smaller than 45 days and moneyness within $[0.8, 1.2]$. The pairwise $t$-statistics, calculated via Equation \eqref{t}, compare the RMSRE of each composite model to its corresponding ordinary counterpart (e.g., Composite Heston vs. Heston, Composite JH vs. JH). A significantly positive $t$-statistic confirms the composite model has a lower RMSRE (i.e., better performance) than the ordinary model.}
\end{table}

\begin{figure}
\caption{Comparison of daily RMSRE of short-term near-the-money implied volatilities between JH and Composite JH in joint markets\label{decay plot}}
 \begin{minipage}{\linewidth}
 \centering
\subfloat[][Daily RMSRE in SPX option markets.]{ \includegraphics[width=\textwidth]{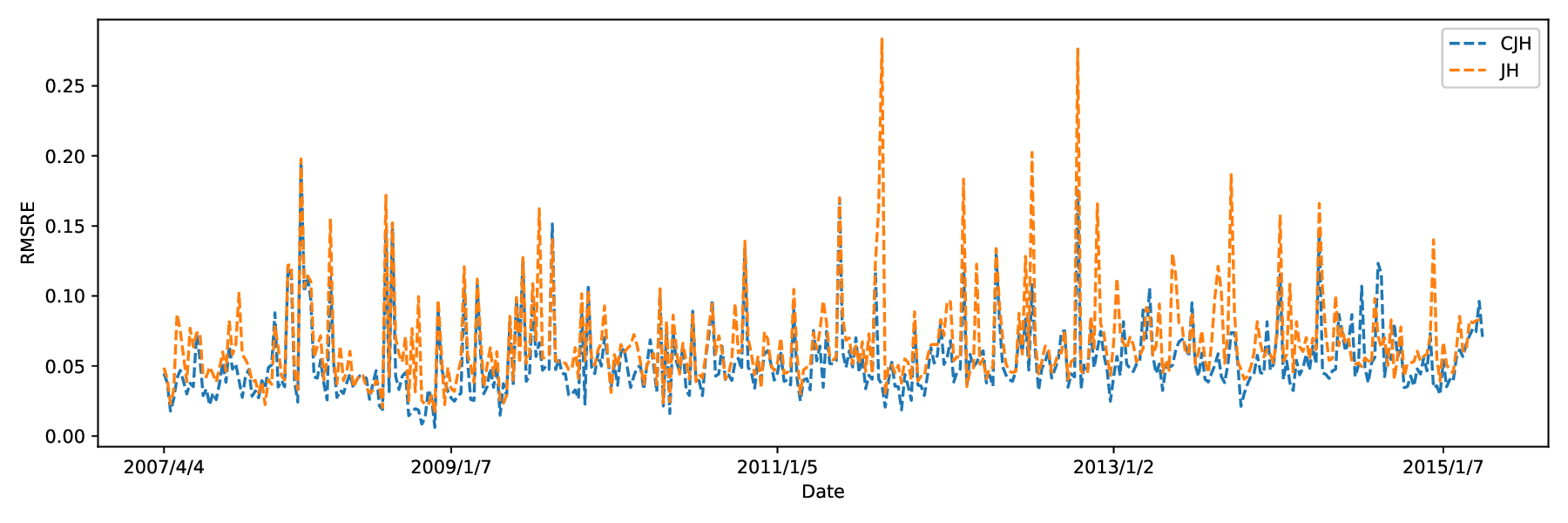}}
 \end{minipage}
 \begin{minipage}{\linewidth}
\centering\subfloat[][Daily RMSRE in VIX option market.]{ 
\includegraphics[width=\textwidth]{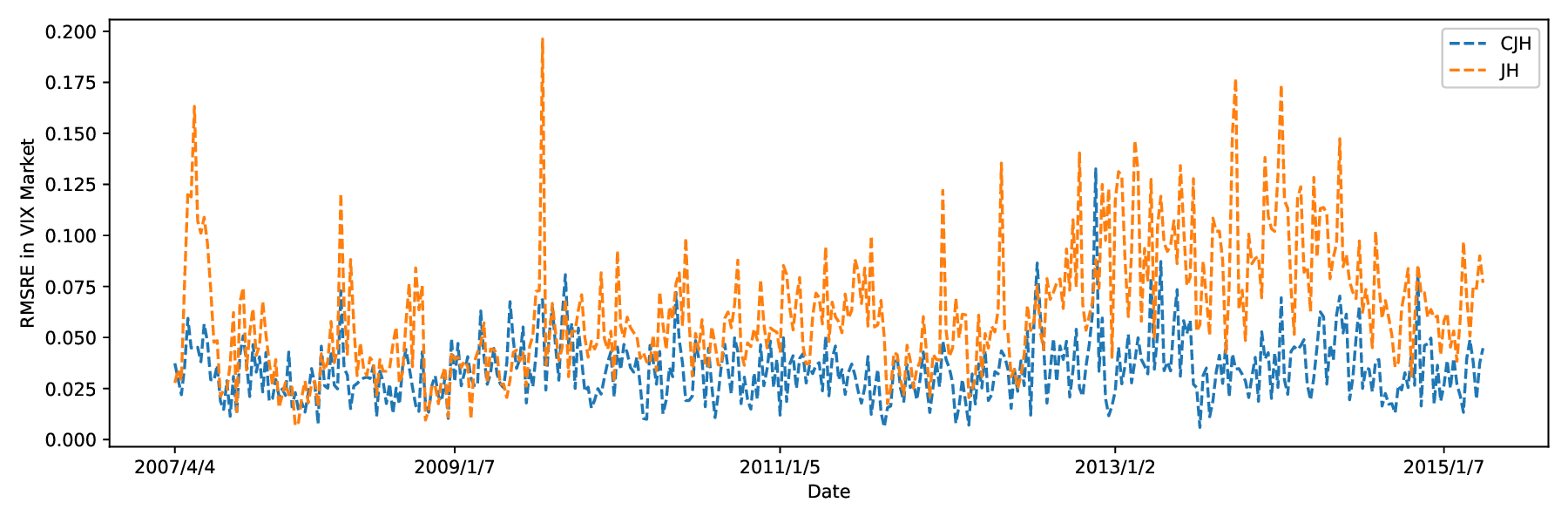}}
\end{minipage}
 \begin{minipage}{\linewidth}
\centering\subfloat[][Difference of RMSRE.]{ 
\includegraphics[width=\textwidth]{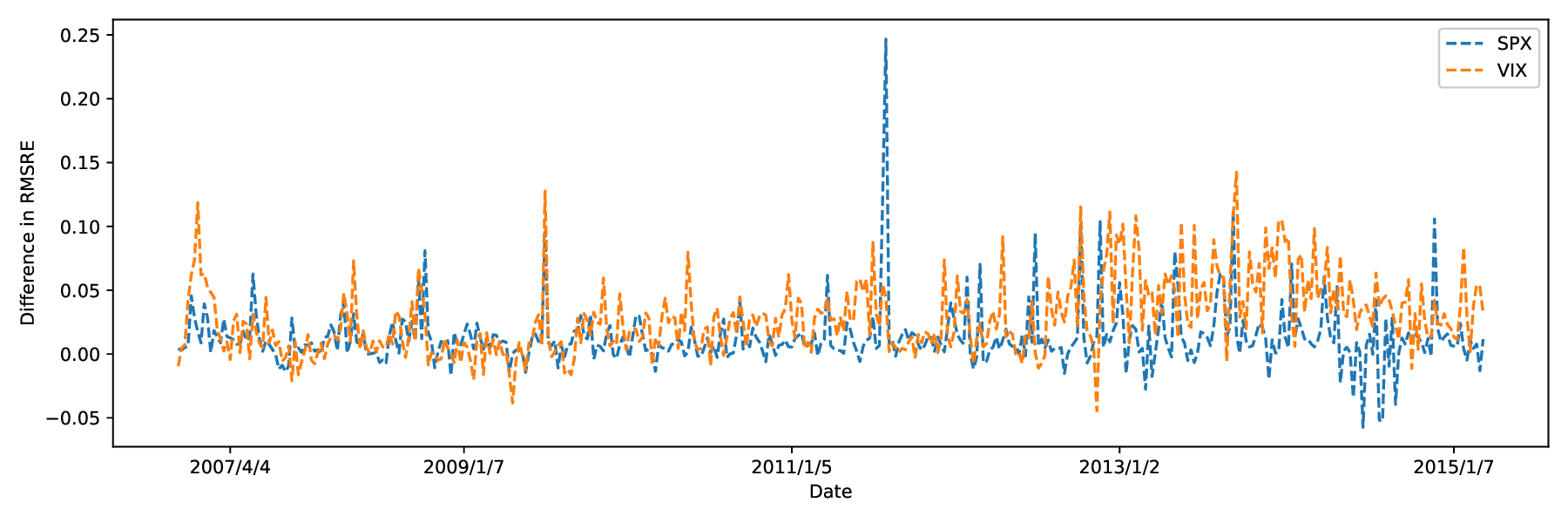}}
\end{minipage}
\footnotesize \textit{Note. } The plot compares the daily RMSRE (defined by Eq. \eqref{D RMSRE}) of short-term near-the-money implied volatilities between JH and Composite JH in SPX and VIX markets. For SPX options, we select options with maturities smaller than 90 days and moneyness within $[0.9, 1.1]$. For VIX options, we select options with maturities smaller than 45 days and moneyness within $[0.8, 1.2]$. In the third row, the difference of RMSRE is defined by the daily RMSRE of JH - Composite JH.
\end{figure}

Together, the table and figure demonstrate that the composite structure is particularly effective at fitting short-term near-the-money implied volatilities, especially in VIX markets, where ordinary models struggle most with their ordinary time change dynamics.
\section{Conclusion}
This paper has explored the application of CTC models to the joint pricing of SPX and VIX derivatives, providing a comprehensive solution to the consistent modeling problem across these interconnected markets. The theoretical framework introduces a novel decomposition of volatility and volatility-of-volatility through the composite structure $X_t = L_{U_{V_t}}$, where the theoretical decompositions $$\begin{aligned}
  &  \operatorname{VIX}_t^2 =-2E[L_1] u_{V_t}v_t + O(\bar{\tau}),\\
  & \operatorname{VVIX}_t^2 =2r- \mathcal{A}^v \ln (v_t) - \mathcal{A}^u \ln (u_{V_t}) v_t + o(1).
\end{aligned}$$
enable simultaneous calibration to both SPX and VIX implied volatility surfaces. The model specification naturally incorporates stochastic volatility through time changes $U$ and $V$, leverage effects via correlated Brownian motions, and jump components through L\'{e}vy processes, while subsuming many classical models as special cases.

The numerical implementation combines two efficient approaches for derivative pricing. For European-style options, we developed the CTC-COS method, extending the Fourier-cosine expansion technique to handle composite time changes. This method achieves $O(ND)$ computational complexity through decoupling composite time change in the numerical approximation, where $N$ represents the number of cosine terms and $D$ the discretization level. The pricing formula also involves separation of the strike dependence and efficient pricing along maturities, enabling fast calibration across the entire volatility surface. For VIX derivatives, we derived an explicit VIX-spot relationship under the affine assumptions. Then we introduced an exact simulation scheme that combines gamma expansion for conditional distributions with characteristic function inversion for $u_{V_{\bar T}}$. This approach avoids path simulation by directly sampling terminal distributions.

Empirical results demonstrate the model's superior performance in joint calibration exercises using real-market SPX and VIX options data. The composite structure improved in-sample calibration errors by 32.62\% for the Heston model and 23.00\% for the Jump Heston model, while maintaining robust out-of-sample performance. The Composite JH model achieved particularly balanced results, with short-term ATM implied volatility errors of 0.0203 for SPX options versus 0.0338 for VIX options, which outperformed existing multi-factor benchmarks.

These results establish the CTC framework as a powerful paradigm for joint derivatives pricing, combining theoretical elegance with practical computational efficiency. The methodology's success suggests several promising extensions, including applications to multi-asset derivatives, incorporation of rough volatility features, and potential acceleration through machine learning techniques for high-dimensional parameter spaces. The demonstrated ability to decouple market smiles while maintaining tractability makes this approach particularly valuable for both pricing and risk management applications in complex derivatives markets.
\newpage
\appendix

\section{Representation of VIX and VVIX under Composite Time Changes}
\label{re_VIX}
Denote by $E_t[\cdot] = \mathbb{E}[\cdot \mid \mathcal{F}_t^X]$ the conditional expectation under $\mathbb{P}$ given $\mathcal{F}_t^X$. We further impose the following assumptions for time changes $U$ and $V$:

\begin{assumption}\label{assump:UV}
    Time changes $U$, $V$ are absolutely continuous with $\mathrm{d}U_t = u_t\mathrm{d}t$ and $\mathrm{d} V_t = v_t\mathrm{d}t$.
\end{assumption}

\begin{assumption}\label{assump:regularity}
$u$ and $v$ are independent Feller processes and admit infinitesimal generators $\mathcal{A}^u$ and $\mathcal{A}^v$, respectively. Both are sufficiently regular in the sense that: for any continuous functions $f, g$, there exists a constant $\varepsilon > 0$ such that the family $\{f(v_s)g(u_{V_s})\}_{s\in [t, t+\varepsilon]}$ is uniformly integrable under $E_t[\cdot]$ for all $t\ge 0$.
\end{assumption}

\begin{assumption}\label{assump:generator}
For $f(x) = I(x) := x$ and $f(x) = \ln x$, $$E^\mathbb{P}[f(u_{t+\varepsilon}) - f(u_t)\mid\mathcal{F}^X_t] = \mathcal{A}^u f(u_t)\varepsilon + o(\varepsilon), \quad E^\mathbb{P}[f(v_{t+\varepsilon}) - f(v_t)\mid\mathcal{F}^X_t] = \mathcal{A}^vf(v_t) \varepsilon + o(\varepsilon),$$
almost surely with continuous $\mathcal{A}^u f$ and $\mathcal{A}^vf$.
\end{assumption}

Since Feller processes $u$ and $v$ are stochastically continuous, as a result of Assumption \ref{assump:regularity}, we have from probability theory that $E_t|f(v_{t+\bar\tau})g(u_{V_{t+\bar\tau}}) - f(v_{t+\bar\tau})g(u_{V_{t+\bar\tau}})| = o(1)$ almost surely for continuous $f$, $g$.
\begin{proposition}\label{prop:VIX}
Under Assumptions \ref{assump:UV}-\ref{assump:generator}, viewing $\operatorname{VIX}$ as a function of $\bar \tau$, the following approximation holds as $\bar{\tau} \to 0$:
$$
\operatorname{VIX}_t^2 = -2E[L_1] u_{V_t} v_t + O(\bar{\tau}).
$$
\end{proposition}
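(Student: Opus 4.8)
The plan is to reduce the VIX to an expected time-change increment and then expand that increment to first order in $\bar\tau$ using the Feller generators of $u$ and $v$. First I would handle the reduction to a time-change increment. Since $L\in\mathcal H^2_{\text{loc}}(\mathbb R)$ and, by Proposition~\ref{prop:adapt}, $U_V$ is a type-3 $\mathcal F$-time change with $E[U_{V_t}]<\infty$, the time-changed martingale $X_t-E[L_1]U_{V_t}$ is a zero-mean $\mathcal F^X$-martingale; this is exactly the martingale property used to obtain Eq.~\eqref{VIX general}, and its derivation does not rely on the affine assumption. Hence Eq.~\eqref{VIX general} is available in the present generality, giving
$$\operatorname{VIX}_t^2=\frac{\bar M}{\bar\tau}\,E_t\!\left[U_{V_{t+\bar\tau}}-U_{V_t}\right],\qquad \bar M=2\bigl(\Psi(-\mathrm i)-E[J_1]\bigr)=-2E[L_1].$$
So it suffices to prove the expansion $E_t[U_{V_{t+\bar\tau}}-U_{V_t}]=u_{V_t}v_t\,\bar\tau+O(\bar\tau^2)$ as $\bar\tau\to0$, after which dividing by $\bar\tau$ yields the claim.

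Next I would peel off the two clocks by iterated conditioning. Write $U_{V_{t+\bar\tau}}-U_{V_t}=\int_{V_t}^{V_{t+\bar\tau}}u_s\,\mathrm ds$ and condition first on the entire path $\{V_r\}_{r\ge0}$. Using the independence of $u$ and $v$ and the time-homogeneous Markov property of $u$ (as in the derivation of Eq.~\eqref{g for u}),
$$E\!\left[U_{V_{t+\bar\tau}}-U_{V_t}\,\middle|\,\mathcal F^X_t,\{V_r\}_{r\ge0}\right]=\int_{V_t}^{V_{t+\bar\tau}}E\!\left[u_s\mid u_{V_t}\right]\mathrm ds.$$
Applying Assumption~\ref{assump:generator} with $f=I$, $I(x)=x$, gives $E[u_s\mid u_{V_t}]=u_{V_t}+\mathcal A^uI(u_{V_t})(s-V_t)+o(s-V_t)$, so integrating over the random interval of length $\Delta_{V_t}(\bar\tau)=V_{t+\bar\tau}-V_t$ produces $u_{V_t}\,\Delta_{V_t}(\bar\tau)+O\!\bigl(\Delta_{V_t}(\bar\tau)^2\bigr)$, the quadratic term being controlled by the continuity of $\mathcal A^uI$. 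Taking $E_t[\cdot]$ and using Assumption~\ref{assump:generator} for $v$, namely $E_t[v_r]=v_t+\mathcal A^vI(v_t)(r-t)+o(r-t)$, I get $E_t[\Delta_{V_t}(\bar\tau)]=\int_t^{t+\bar\tau}E_t[v_r]\,\mathrm dr=v_t\bar\tau+O(\bar\tau^2)$, while $\Delta_{V_t}(\bar\tau)=\int_t^{t+\bar\tau}v_r\,\mathrm dr$ being of order $\bar\tau$ forces $E_t[\Delta_{V_t}(\bar\tau)^2]=O(\bar\tau^2)$. Assembling the pieces delivers the desired expansion.

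The delicate point, which I expect to be the main obstacle, is upgrading the purely pointwise, first-order generator expansions—whose remainders are only $o(\cdot)$ and whose implicit constants depend on the random base points $u_{V_t}$ and $v_t$—into integrated $O(\bar\tau^2)$ bounds strong enough to produce an $O(\bar\tau)$ (not merely $o(1)$) error in $\operatorname{VIX}^2$. This is aggravated by the inner integral running over the random interval $[V_t,V_{t+\bar\tau}]$, so the two Fubini interchanges (pulling $E_t[\cdot]$ through the $\mathrm ds$- and $\mathrm dr$-integrals) and the passage from an $o(s-V_t)$ integrand to an $o\!\bigl(\Delta_{V_t}(\bar\tau)^2\bigr)$ remainder must be justified uniformly. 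I expect this to be exactly where the uniform-integrability hypothesis of Assumption~\ref{assump:regularity}, together with the stochastic continuity of the Feller processes $u$ and $v$, is needed: it legitimizes the interchanges and guarantees that the error terms, once integrated against an $O(\bar\tau)$-length time-change increment and conditioned, remain $O(\bar\tau^2)$. The proof then concludes by combining the two remainder estimates and dividing by $\bar\tau$.
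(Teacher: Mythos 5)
Your proposal is correct and follows essentially the same route as the paper's proof: reduction to $\operatorname{VIX}_t^2=\frac{\bar M}{\bar\tau}E_t[U_{V_{t+\bar\tau}}-U_{V_t}]$ via the martingale property of $X_t-E[L_1]U_{V_t}$, iterated conditioning on $\{V_r\}_{r\ge0}$ with the generator expansion of $u$, and the first- and second-moment expansions $E_t[\Delta_{V_t}(\bar\tau)]=v_t\bar\tau+O(\bar\tau^2)$, $E_t[\Delta_{V_t}(\bar\tau)^2]=v_t^2\bar\tau^2+o(\bar\tau^2)$ obtained from Assumptions~\ref{assump:regularity} and~\ref{assump:generator}. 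The ``delicate point'' you flag is indeed exactly where the paper invokes the uniform integrability of Assumption~\ref{assump:regularity} and the continuity of $\mathcal A^uI$, $\mathcal A^vI$, so your identification of the needed hypotheses matches the paper's argument.
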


\begin{proof}{Proof}
By Assumption \ref{assump:generator}, we first compute the first conditional moment of $V_{t+\varepsilon} - V_t$:
\begin{align}
E_t[V_{t+\varepsilon} - V_t] &= \int_0^{\varepsilon} E_t[v_{t+s}]\mathrm{d}s \notag \\
&= v_t\varepsilon + \int_0^\varepsilon E_t[v_{t+s} - v_t]\mathrm{d}s \notag \\
&= v_t \varepsilon + \frac{\varepsilon^2}{2}\mathcal{A}^v I(v_t) + o(\varepsilon^2). \label{eq:V_first_moment}
\end{align}

For the second conditional moment, using Assumption \ref{assump:regularity}, we have:
\begin{align}
E_t[(V_{t+\varepsilon} - V_t)^2] &= 2\int_0^\varepsilon \int_0^s E_t[v_{t+s}v_{t+r}]\mathrm{d}r\mathrm{d}s \notag \\
&= 2\int_0^\varepsilon \int_0^s \left[E_t[v_{t+r}(v_{t+s} - v_{t+r})] + E_t[v_{t+r}^2]\right]\mathrm{d}r\mathrm{d}s \notag \\
&= 2\int_0^\varepsilon \int_0^s \left[E_t[v_{t+r}\mathcal{A}^v I(v_{t+r})](s-r) + o(1) + v_t^2 + o(1)\right]\mathrm{d}r\mathrm{d}s \notag \\
&= v_t^2 \varepsilon^2 + o(\varepsilon^2), \label{eq:V_second_moment}
\end{align}
where the third equality follows from Assumption \ref{assump:generator} (continuity of $\mathcal{A}^v I$) and continuity of $f(x)=x^2$.

By the independence of $u$, $v$ and Assumption \ref{assump:generator} w.r.t. $I(x)$, we condition on $\{V_r\}_{r\ge 0}$ and $\mathcal{F}_t^X$:
\begin{align}
E_t\left[U_{V_{t+\bar{\tau}}} - U_{V_t} \mid \mathcal{F}_t^X, \{V_r\}_{r\ge 0}\right] &= u_{V_t}(V_{t+\bar{\tau}} - V_t) + \frac{1}{2}\mathcal{A}^u I(u_{V_t})(V_{t+\bar{\tau}} - V_t)^2 \notag \\
&\quad + o((V_{t+\bar{\tau}} - V_t)^2). \label{eq:U_expansion}
\end{align}

Moreover, since $E[U_{V_t}] < \infty$ for every $t\ge 0$, $X_t - E[L_1] U_{V_t}$ is a zero-mean martingale (see the argument in proof of Proposition \ref{affine spec}). Substitute this into the VIX formula:
\begin{align}
\operatorname{VIX}_t^2 &= -\frac{2}{\bar{\tau}} E_t[X_{t+ \bar\tau} - X_t] \notag\\
&= -\frac{2}{\bar{\tau}} \mathbb{E}[L_1] E_t[U_{V_{t+\bar{\tau}}} - U_{V_t}] \notag \\
&= -\frac{2\mathbb{E}[L_1]}{\bar{\tau}} E_t\left[E_t\left[U_{V_{t+\bar{\tau}}} - U_{V_t} \mid \mathcal{F}_t^X, \{V_r\}_{r\ge 0}\right]\right] \notag \\
&= -\frac{2\mathbb{E}[L_1]}{\bar{\tau}} E_t\left[ u_{V_t}(V_{t+\bar{\tau}} - V_t) + \frac{1}{2}\mathcal{A}^u I(u_{V_t})(V_{t+\bar{\tau}} - V_t)^2 + o((V_{t+\bar{\tau}} - V_t)^2)\right] \notag \\
&= -\frac{2\mathbb{E}[L_1]}{\bar{\tau}} \left[ u_{V_t}E_t[V_{t+\bar{\tau}} - V_t] + \frac{1}{2}\mathcal{A}^u I(u_{V_t})E_t[(V_{t+\bar{\tau}} - V_t)^2] + o(\bar{\tau}^2)\right]. \label{eq:VIX_intermediate}
\end{align}
Substitute \eqref{eq:V_first_moment} and \eqref{eq:V_second_moment} (with $\varepsilon = \bar{\tau}$) into \eqref{eq:VIX_intermediate}:
\begin{align}
\operatorname{VIX}_t^2 &= -\frac{2\mathbb{E}[L_1]}{\bar{\tau}} \left[ u_{V_t}\left(v_t \bar{\tau} + \frac{\bar{\tau}^2}{2}\mathcal{A}^v I(v_t) + o(\bar{\tau}^2)\right) + \frac{1}{2}\mathcal{A}^u I(u_{V_t})\left(v_t^2 \bar{\tau}^2 + o(\bar{\tau}^2)\right) + o(\bar{\tau}^2)\right] \notag \\
&= -2\mathbb{E}[L_1] u_{V_t} v_t - \mathbb{E}[L_1] \bar{\tau}\left(u_{V_t}\mathcal{A}^v I(v_t) + v_t^2 \mathcal{A}^u I(u_{V_t})\right) + o(\bar{\tau}) \label{detailed VIX} \\
&= -2\mathbb{E}[L_1] u_{V_t} v_t + O(\bar{\tau}).\notag
\end{align}
\end{proof}

\begin{proposition}\label{prop:VVIX}
Under Assumptions \ref{assump:UV}-\ref{assump:generator}, viewing $\operatorname{VVIX}$ as a function of $\bar \tau$, the following approximation holds as $\bar{\tau} \to 0$:
$$
\operatorname{VVIX}_t^2 = 2r - \mathcal{A}^v \ln(v_t) - \mathcal{A}^u \ln(u_{V_t}) v_t + o(1).
$$
\end{proposition}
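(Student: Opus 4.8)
The plan is to reduce everything to the $\bar\tau$-asymptotics of the conditional increment of $\ln\operatorname{VIX}$. Starting from the definition and separating the discount term,
$$\operatorname{VVIX}_t^2 = -\frac{2}{\bar\tau}E_t\left[\ln\frac{e^{-r\bar\tau}\operatorname{VIX}_{t+\bar\tau}}{\operatorname{VIX}_t}\right] = 2r - \frac{2}{\bar\tau}\left(E_t[\ln\operatorname{VIX}_{t+\bar\tau}] - \ln\operatorname{VIX}_t\right),$$
so it remains to expand $E_t[\ln\operatorname{VIX}_{t+\bar\tau}] - \ln\operatorname{VIX}_t$ to order $\bar\tau$ (both $\operatorname{VIX}$'s carry the same window $\bar\tau$, consistent with viewing $\operatorname{VVIX}$ as a function of $\bar\tau$). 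The first step is to feed in Proposition \ref{prop:VIX}. Writing its remainder explicitly, $\operatorname{VIX}_s^2 = -2E[L_1]u_{V_s}v_s + \bar\tau\, r_s + o(\bar\tau)$ with $r_s$ the continuous function of the states $u_{V_s},v_s$ displayed in \eqref{detailed VIX}, a Taylor expansion of $\ln(1+x)$ gives
$$\ln\operatorname{VIX}_s = \tfrac12\ln(-2E[L_1]) + \tfrac12\ln u_{V_s} + \tfrac12\ln v_s + \tfrac{\bar\tau}{2}\tilde R_s + o(\bar\tau),\qquad \tilde R_s := \frac{r_s}{-2E[L_1]u_{V_s}v_s},$$
using $\ln(-2E[L_1]u_{V_s}v_s) = \ln(-2E[L_1]) + \ln u_{V_s} + \ln v_s$. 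I would substitute this into the increment, where the constant $\tfrac12\ln(-2E[L_1])$ cancels.

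The second step is to evaluate the leading contribution $\tfrac12 E_t[\ln u_{V_{t+\bar\tau}} - \ln u_{V_t}] + \tfrac12 E_t[\ln v_{t+\bar\tau} - \ln v_t]$. The $v$-term is immediate from Assumption \ref{assump:generator} with $f = \ln$, giving $\tfrac12\mathcal{A}^v\ln(v_t)\bar\tau + o(\bar\tau)$. The $u$-term is the delicate one, since $u$ is read along the random clock $V$. I would condition on $\{V_r\}_{r\ge 0}$ and $\mathcal{F}_t^X$, apply the Markov/generator property of $u$ over the (now deterministic) increment $V_{t+\bar\tau}-V_t$ to obtain $\mathcal{A}^u\ln(u_{V_t})(V_{t+\bar\tau}-V_t) + o(V_{t+\bar\tau}-V_t)$, and then take $E_t[\cdot]$. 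Because $\mathcal{A}^u\ln(u_{V_t})$ is $\mathcal{F}_t^X$-measurable and $E_t[V_{t+\bar\tau}-V_t] = v_t\bar\tau + O(\bar\tau^2)$ by \eqref{eq:V_first_moment}, the $u$-term collapses to $\tfrac12\mathcal{A}^u\ln(u_{V_t})v_t\bar\tau + o(\bar\tau)$.

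The main obstacle is controlling the correction $\tfrac{\bar\tau}{2}\big(E_t[\tilde R_{t+\bar\tau}] - \tilde R_t\big)$ inherited from the $O(\bar\tau)$ remainder of the VIX expansion: at face value this remainder is $O(\bar\tau)$ and, after the division by $\bar\tau$ in the VVIX formula, threatens a spurious $O(1)$ term. The resolution is that the remainder enters only through the \emph{difference} $E_t[\tilde R_{t+\bar\tau}] - \tilde R_t$, which vanishes as $\bar\tau\to 0$ by the stochastic continuity of the Feller processes $u,v$ together with the uniform integrability of Assumption \ref{assump:regularity}; hence it is $o(1)$, and the extra factor $\bar\tau$ renders the whole correction $o(\bar\tau)$. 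Collecting the two leading contributions yields $E_t[\ln\operatorname{VIX}_{t+\bar\tau}] - \ln\operatorname{VIX}_t = \tfrac12\big(\mathcal{A}^u\ln(u_{V_t})v_t + \mathcal{A}^v\ln(v_t)\big)\bar\tau + o(\bar\tau)$, and multiplying by $-2/\bar\tau$ and adding $2r$ gives the claim. As a secondary point requiring care, I would upgrade the $o$-terms of the form $E_t[o(V_{t+\bar\tau}-V_t)]$ to $o(\bar\tau)$ by invoking the concentration of $V_{t+\bar\tau}-V_t$ around $v_t\bar\tau$ (its first two moments from \eqref{eq:V_first_moment}--\eqref{eq:V_second_moment}) under Assumption \ref{assump:regularity}.
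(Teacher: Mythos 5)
Your proposal is correct and follows essentially the same route as the paper's proof: it feeds in Proposition \ref{prop:VIX} with its explicit $O(\bar\tau)$ remainder from \eqref{detailed VIX}, kills that remainder's contribution by observing it enters only through a conditional-expectation difference that is $o(1)$ by stochastic continuity and Assumption \ref{assump:regularity} (the paper's Eq. \eqref{eq:o1}), and reduces the leading term to $E_t[\ln(u_{V_{t+\bar\tau}}v_{t+\bar\tau}/(u_{V_t}v_t))]$, handled by applying $\mathcal{A}^v\ln$ directly and $\mathcal{A}^u\ln$ after conditioning on $\{V_r\}$. The only cosmetic difference is that you decompose $\ln\operatorname{VIX}$ additively while the paper expands $\ln(1+x)$ of the ratio of $\operatorname{VIX}^2$'s; the two are equivalent.
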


\begin{proof}{Proof}
Note from the continuity of $\mathcal{A}^u I$ and $\mathcal{A}^v I$ that \begin{equation}
\label{eq:o1}
E_t[u_{V_{t+\bar\tau}}\mathcal{A}^vI(v_{t+\bar\tau}) + v_{t+\bar\tau}^2 \mathcal{A}^uI(u_{V_{t+\bar\tau}}) - u_{V_t}\mathcal{A}^vI(v_t) - v_t^2 \mathcal{A}^uI(u_{V_t})] = o(1).
\end{equation}
Then from Eq. \eqref{detailed VIX} and Eq. \eqref{eq:o1}:
$$
E_t[\operatorname{VIX}_{t+\bar{\tau}}^2 - \operatorname{VIX}_t^2] = -2\mathbb{E}[L_1] E_t\left[ u_{V_{t+\bar{\tau}}}v_{t+\bar{\tau}} - u_{V_t}v_t\right] + o(\bar{\tau}).
$$

Thus, \begin{align}
\label{ln VIX}
E_t\left[\ln \frac{\operatorname{VIX}_{t+\bar \tau}^2}{\operatorname{VIX}_t^2}\right] &= E_t\left[\ln \left(1 + \frac{\operatorname{VIX}_{t+\bar\tau}^2 - \operatorname{VIX}_t^2}{\operatorname{VIX}_t^2}\right)\right] \notag\\
&=E_t\left[\ln\left(1 + \frac{u_{V_{t+\tau}}v_{t+\bar\tau} - u_{V_t}v_t + o(\bar\tau)}{\operatorname{VIX}_t^2}\right)\right]\notag\\
&= E_t\left[\ln\left(1 + \frac{u_{V_{t+\tau}}v_{t+\bar\tau} - u_{V_t}v_t}{u_{V_t}v_t + O(\bar\tau)}\right)\right] + o(\bar\tau)\notag\\
& = E_t\left[\ln\left(1 + \frac{u_{V_{t+\tau}}v_{t+\bar\tau} - u_{V_t}v_t}{u_{V_t}v_t}\right)\right] - O(\bar\tau)E_t\left[\frac{u_{V_{t+\bar\tau}}v_{t+\bar\tau} - u_{V_t}v_t}{(u_{V_t}v_t)^2}\right] + o(\bar\tau)\notag\\
&= E_t\left[\ln\left(\frac{u_{V_{t+\tau}}v_{t+\bar\tau} }{u_{V_t}v_t}\right)\right] + o(\bar\tau).\end{align}

Substitute \eqref{ln VIX} into the VVIX formula and apply Assumption \ref{assump:generator} w.r.t. $f(x) = \ln (x)$:
\begin{align}
\operatorname{VVIX}_t^2 &= -\frac{2}{\bar{\tau}} E_t\left[  -r\bar{\tau} + \frac{1}{2}\ln\frac{\operatorname{VIX}_{t+\bar\tau}^2}{\operatorname{VIX}_{t}^2}\right] \notag\\
&= -\frac{2}{\bar{\tau}} E_t\left[ -r\bar{\tau} + \frac{1}{2}\left(\ln \frac{v_{t+\bar{\tau}}}{v_t} + \ln \frac{u_{V_{t+\bar{\tau}}}}{u_{V_t}}\right)\right]  + o(1) \notag \\
&= -\frac{2}{\bar{\tau}} \left[ -r\bar{\tau} + \frac{1}{2}\left(\mathcal{A}^v \ln(v_t) \bar{\tau} + \mathcal{A}^u \ln(u_{V_t}) v_t \bar{\tau} + o(\bar{\tau})\right) \right] + o(1)\notag \\
&= 2r - \mathcal{A}^v \ln(v_t) - \mathcal{A}^u \ln(u_{V_t}) v_t + o(1).\notag
\end{align}
\end{proof}

\section{A Rigorous Formulation of the CTC Model}
\label{CTC argument3}
In the following, we show how the CTC model Eq. \eqref{CTC} is formulated. Given a filtered probability space $(\Omega, \mathcal{F}, \{\mathcal{F}^X_t\}_{t\ge 0}, \mathbb{P})$ with $\{\mathcal{F}^X_t\}_{t \ge 0}$ satisfying usual conditions (i.e., $\{\mathcal{F}^X_t\}_{t \ge 0}$ is a right-continuous standard filtration), we suppose that $v$ is a $\mathcal{F}^X$-adapted Markov process in some state space $D_v\subset \mathbb{R}_+$ with $v_0 > 0$ solving the SDE:
\begin{equation}
\label{SDE v}\mathrm{d}v_t = \alpha^v(v_t)\mathrm{d}t + \beta^v(v_t)\sqrt{v_t} \mathrm{d}\bar B_t + \gamma^v(v_{t-})\mathrm{d}\bar J^v_t,
\end{equation}
where $\bar B$ is an $\mathcal{F}^X$-standard Brownian motion and $\bar J^v$ is an $\mathcal{F}^X$-adapted time-changed L\'{e}vy subordinator with activity rate $v_t$, i.e., the local characteristics of $\bar{J}^v$ are given by $(0, 0, v_t \nu(\mathrm{d}x))$ with L\'{e}vy measure $\nu(\mathrm{d}x)$ on $(0, +\infty)$, and $\gamma^v(\cdot) \ge 0$ is the multiplier of jump size. To be precise, the infinitesimal generator of $v$ is given by
$$\begin{aligned}\mathcal{A}^vf(x) &= \alpha^v(x) f^\prime(x) + \frac{1}{2}(\beta^v(x))^2x f^{\prime\prime}(x)\\
& \quad \quad + x\int_{\mathbb{R}_+} \left(f(x+ \gamma^v(x)y) - f(x) - \gamma^v(x) yf^\prime(x) 1_{\{y\le 1\}}\right)\nu(\mathrm{d}y)\\
&= (\alpha^v(x) - a^v(x)x)f^\prime(x) + \frac{1}{2}(\beta^v(x))^2x f^{\prime\prime}(x) \\
&\quad \quad + x\int_{\mathbb{R}_+} \left(f(x+ y) - f(x) - yf^\prime(x) 1_{\{y\le 1\}}\right)N_\nu(x,\mathrm{d}y),\end{aligned}$$
where 
$N_\nu(x, \mathrm{d}y):= \nu(\frac{\mathrm{d}y}{\gamma^v(x)})$ ($N_\nu(x, \mathrm{d}y) = 0$ if $\gamma^v(x) = 0$) and $$a^v(x) = \int_1^{\gamma^v(x)} y N_\nu(x,\mathrm{d}y).$$
We then define $V_t := \int_0^t v_s\mathrm{d}s$ and $\hat{V}_t := \inf\{s: V_s >t\}$. Since $V$ is absolutely continuous and strictly increasing, $V_{\hat{V}_t} = t$ and $\hat{V}_t = \int_0^t v_{\hat{V_s}}^{-1}\mathrm{d}s$. And we have, for every $s \ge 0$,
$$\{\hat{V}_t \le s\} = \{V_s \ge t \}\in \mathcal{F}^X_s,$$
from which $\hat{V}$ is a $\mathcal{F}^X$-time change. 

Next, we assume that $u$ is an $\mathcal{F}_{\hat{V}}^X$-adapted Markov process in some state process $D_u \subset \mathbb{R}_+$ with $u_0 > 0$ solving the SDE:
\begin{equation}
\label{SDE u}
\mathrm{d}u_t = \alpha^u(u_t)\mathrm{d}t + \beta^u(u_t)\sqrt{u_t} \mathrm{d}\tilde Z_t + \gamma^u(u_{t-})\mathrm{d}\tilde J^u_t,
\end{equation}
where $\tilde Z$ is an $\mathcal{F}^X_{\hat{V}}$-standard Brownian motion and $\tilde J^u$ is an $\mathcal{F}^X_{\hat{V}}$-adapted time-changed L\'{e}vy subordinator with activity rate $u_t$, i.e., the local characteristics of $\tilde{J}^u$ are given by $(0, 0, u_t \mu(\mathrm{d}x))$ with L\'{e}vy measure $\mu(\mathrm{d}x)$ on $(0, +\infty)$, and $\gamma^u(\cdot) \ge 0$ is the multiplier of jump size. The infinitesimal generator of $u$ is given by
$$\begin{aligned}\mathcal{A}^uf(x) &= \alpha^u(x) f^\prime(x) + \frac{1}{2}(\beta^u(x))^2x f^{\prime\prime}(x) \\
& \quad \quad + x\int_{\mathbb{R}_+} \left(f(x+ \gamma^u(x)y) - f(x) - \gamma^u(x) f^\prime(x)y 1_{\{y\le 1\}}\right)\mu(\mathrm{d}y)\\
&= (\alpha^u(x) - a^u(x)x)f^\prime(x) + \frac{1}{2}(\beta^u(x))^2x f^{\prime\prime}(x) \\
&\quad \quad + x\int_{\mathbb{R}_+} \left(f(x + y) - f(x) - yf^\prime(x) 1_{\{y\le 1\}}\right)N_\mu(x,\mathrm{d}y),\end{aligned}$$
where 
$N_\mu(x, \mathrm{d}y):= \mu(\frac{\mathrm{d}y}{\gamma^u(x)})$ ($N_\mu(x, \mathrm{d}y) = 0$ if $\gamma^u(x) = 0$) and $$a^u(x) = \int_1^{\gamma^u(x)} y N_\mu(x,\mathrm{d}y).$$

\begin{proposition}
\label{prop:TC SDE}
    If there exists a unique, strong, and strictly positive solution for both Eq. \eqref{SDE v} and Eq. \eqref{SDE u}, then there exists a filtration $\{\mathcal{F}_t\}_{t\ge 0}$ satisfying usual conditions such that $U$ is a $\mathcal{F}$-time change, $V$ is a $\mathcal{F}_U$-time change, and $u$, $v$ satisfy the following SDEs:
    $$
\begin{aligned}
& \mathrm{d} u_t = \alpha^u(u_{t})\mathrm{d}t + \beta^u(u_{t})\mathrm{d} Z(U_t) + \gamma^u(u_{t-}) \mathrm{d} J^u(U_t), \\
& \mathrm{d} v_t = \alpha^v(v_{t})\mathrm{d}t + \beta^v(v_{t})\mathrm{d}\tilde B(V_t) + \gamma^v(v_{t-}) \mathrm{d}\tilde J^v(V_t),
\end{aligned}
$$
where $Z$, $J^u$ are $\mathcal{F}$-adapted Brownian motion and subordinator, respectively, and $\tilde B$, $\tilde J^v$ are $\mathcal{F}_{U}$-adapted Brownian motion and subordinator, respectively.
\end{proposition}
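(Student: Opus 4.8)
The plan is to build the filtration $\{\mathcal{F}_t\}$ by peeling off the two time changes one at a time, working outward from the observable filtration $\mathcal{F}^X$. First I would record the intermediate filtration $\mathcal{G}_t := \mathcal{F}^X_{\hat V_t}$, which should turn out to be $\mathcal{F}_U$, and then set $\mathcal{F}_t := \mathcal{G}_{\hat U_t}$, where $\hat U_t := \inf\{s : U_s > t\}$ is the inverse of $U_\cdot = \int_0^\cdot u_s\,\mathrm{d}s$. This inverse is well-defined, continuous and strictly increasing because the hypothesis furnishes a strictly positive solution $u$ (with $U_t \uparrow \infty$ under the standing integrability assumptions), exactly as $\hat V$ is built from $v$. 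Since $\hat V$ and $\hat U$ are continuous, hence right-continuous, increasing families of stopping times, the standard inheritance result for time-changed filtrations shows that $\mathcal{G}$ and $\mathcal{F}$ both satisfy the usual conditions. The consistency identities $\mathcal{F}_{U_\tau} = \mathcal{G}_{\hat U_{U_\tau}} = \mathcal{G}_\tau$ and $\mathcal{G}_{V_t} = \mathcal{F}^X_{\hat V_{V_t}} = \mathcal{F}^X_t$ then follow from $\hat U \circ U = \mathrm{id}$ and $\hat V \circ V = \mathrm{id}$, so the filtrations nest as $\mathcal{F} \to \mathcal{F}_U = \mathcal{G} \to \mathcal{G}_V = \mathcal{F}^X$ as required.

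Next I would verify that $U$ is an $\mathcal{F}$-time change and $V$ an $\mathcal{F}_U$-time change. Monotonicity, right-continuity, and $U_0 = V_0 = 0$ are immediate, so the content is the stopping-time property. Here I would use the inverse relations $\{U_\tau \le t\} = \{\hat U_t \ge \tau\}$ and $\{V_t \le s\} = \{\hat V_s \ge t\}$, valid since all four processes are continuous and strictly increasing, together with the elementary fact that for any stopping time $\sigma$ and constant $c$ one has $\{\sigma \ge c\} \in \mathcal{F}_\sigma$. Applying this with $\sigma = \hat U_t$ (a $\mathcal{G}$-stopping time) gives $\{U_\tau \le t\} \in \mathcal{G}_{\hat U_t} = \mathcal{F}_t$, and with $\sigma = \hat V_s$ (an $\mathcal{F}^X$-stopping time) gives $\{V_t \le s\} \in \mathcal{F}^X_{\hat V_s} = \mathcal{G}_s = (\mathcal{F}_U)_s$. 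This is the same computation, in the continuous-clock case, as the one carried out for $U_{V_r}$ in the proof of Proposition \ref{prop:adapt}.

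The substantive step is to rewrite the drivers of \eqref{SDE v} and \eqref{SDE u} in the time-changed form \eqref{CTC}. For the diffusion parts I would invoke the Dambis--Dubins--Schwarz theorem: the continuous $\mathcal{F}^X$-local martingale $N_t := \int_0^t \sqrt{v_s}\,\mathrm{d}\bar B_s$ has quadratic variation $[N]_t = V_t$, so $\tilde B_s := N_{\hat V_s}$ is an $\mathcal{F}^X_{\hat V} = \mathcal{F}_U$-Brownian motion with $N_t = \tilde B_{V_t}$, whence $\beta^v(v_t)\sqrt{v_t}\,\mathrm{d}\bar B_t = \beta^v(v_t)\,\mathrm{d}\tilde B(V_t)$. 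The same argument one level up, applied to $\int_0^\cdot \sqrt{u_s}\,\mathrm{d}\tilde Z_s$ with clock $U$, produces the $\mathcal{F}$-Brownian motion $Z$ with $\beta^u(u_t)\sqrt{u_t}\,\mathrm{d}\tilde Z_t = \beta^u(u_t)\,\mathrm{d}Z(U_t)$. For the jump parts I would set $\tilde J^v_s := \bar J^v_{\hat V_s}$ and $J^u_s := \tilde J^u_{\hat U_s}$ and argue that these un-time-changed subordinators are Lévy with respect to $\mathcal{F}_U$ and $\mathcal{F}$, with Lévy measures $\nu$ and $\mu$; this is the converse of the Type-3 time-change statement \eqref{TC characteristics}, justified by the preservation of local characteristics under the continuous strictly increasing inverse clocks, as in Theorem 10.17 of \cite{jacod1979calcul} and \cite{kuchler2006exponential}.

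I expect the jump part of this last step to be the main obstacle. The Dambis--Dubins--Schwarz identification of $\tilde B$ and $Z$ as genuine Brownian motions on the target filtrations is classical, but the analogous claim for the subordinators --- that $\bar J^v$ and $\tilde J^u$, carrying local characteristics $v_t\nu(\mathrm{d}x)$ and $u_t\mu(\mathrm{d}x)$ in calendar time, become honest $\mathcal{F}_U$- and $\mathcal{F}$-Lévy subordinators after composition with $\hat V$ and $\hat U$ --- requires checking both adaptedness (handled by the filtration bookkeeping of the first two steps) and the independence and stationarity of increments in the new clock, which rests on the synchronization afforded by the absolute continuity and strict positivity of the time changes. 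Once the drivers are re-expressed, the CTC-form SDEs hold pathwise, and the uniqueness of the strong solutions assumed in the hypothesis guarantees that the processes so obtained coincide with the originally given $u$ and $v$.
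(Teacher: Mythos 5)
Your proposal is correct and follows essentially the same route as the paper's proof: construct $\mathcal{F}_U = \mathcal{F}^X_{\hat V}$ and $\mathcal{F} = (\mathcal{F}^X_{\hat V})_{\hat U}$ via the inverse clocks, verify the usual conditions and the stopping-time properties through the inverse relations $\{V_t\le s\}=\{\hat V_s\ge t\}$, identify the time-changed continuous drivers as Brownian motions by L\'{e}vy's characterization (your Dambis--Dubins--Schwarz invocation is the same computation), and recover the subordinators from the preservation of local characteristics under the inverse time change. The only cosmetic difference is that the paper cites Proposition 4.14 of \cite{eberlein2019mathematical} for the jump step you flag as the main obstacle, but the substance is identical.
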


\begin{proof}{Proof}
    Since $\hat{V}$ is a $\mathcal{F}^X$-time change with $V_{\hat{V}_t} \equiv t$, we define
    $$\tilde B_t := \int_0^{\hat{V}_t} \sqrt{v_s}\mathrm{d}\bar B_s, \quad \tilde J^v_t := \bar J^v(\hat{V}_t),$$
then $\tilde B$ is $\mathcal{F}^X_{\hat{V}}$-adapted and its quadratic variation $$\langle \tilde B\rangle_t = \int_0^t v_s \mathrm{d}\langle \bar B\rangle_{\hat{V}_s} = \int_0^t v_s \mathrm{d}\hat{V}_s = t.$$ 
It follows from L\'{e}vy's characterization theorem that $\tilde B$ is an $\mathcal{F}^X_{\hat{V}}$-standard Brownian motion. Moreover, it follows (e.g., from \cite{eberlein2019mathematical} Proposition 4.14) that the local characteristics of $\tilde J^v$ are given by $(0, 0, \nu(\mathrm{d}x))$, which implies that $\tilde J^v$ is an $\mathcal{F}^X_{\hat{V}}$-subordinator.

And following the arguments regarding $v$, we define $U$ and $\hat{U}$ likewise. Then $\hat{U}$ is a time change with respect to $\mathcal{F}_{\hat{V}}^X$. Similarly, denoting by $\mathcal{F}_t = (\mathcal{F}_{\hat{V}}^X)_{\hat{U}_t}$ the filtration of $\mathcal{F}_{\hat{V}}^X$ time-changed by $\hat{U}$, $Z_t := \int_0^{\hat{U}_t} \sqrt{u_s} \mathrm{d}\tilde Z_s$ is an $\mathcal{F}_t$-standard Brownian motion and $J^u_t := \tilde J^u(\hat{U}_t)$ is an $\mathcal{F}$-adapted subordinator. 

Let $t_n \downarrow t$, $A \in \bigcap_{n}\mathcal{F}^X_{\hat{V}_{t_n}}$, and $r_m \downarrow r$, then, by the continuity of $\hat{V}$ and the right-continuity of $\mathcal{F}^X$,
$$A \cap \{\hat{V}_{t} \le r\} = \bigcap_m \bigcup_n \left(A \cap \{\hat{V}_{t_n} \le r_m\}\right)\in \bigcap_m \mathcal{F}_{r_m}^X = \mathcal{F}^X_r.$$
Consequently, $\mathcal{F}^X_{\hat{V}_t} = \bigcap_n \mathcal{F}_{\hat{V}_{t_n}}^X$. By applying the same argument to the filtration $\mathcal{F}_{\hat{V}}$ and its time change $\hat U$, $\mathcal{F}$ is shown to be a right-continuous standard filtration. 

Moreover, since 
$$\{V_t \le s\} = \{\hat{V}_s \ge t\}\in \mathcal{F}_{\hat{V}_s}^X, \quad \forall s \ge 0,$$
$V_t$ is a stopping time with respect to $\mathcal{F}_{\hat{V}}^X \equiv \mathcal{F}_{U}$ for every $t\ge 0$. Thus, $V$ is an $\mathcal{F}_U$-time change. Similarly, since
$$\{U_t \le s\} = \{\hat{U}_s \ge t\}\in \mathcal{F}_s, \quad \forall s\ge 0,$$
$U_t$ is a stopping time with respect to $\mathcal{F}_t$ for every $t\ge 0$, from which $U$ is an $\mathcal{F}$-time change. 

Consequently, we recover the model setup in the beginning of Section \ref{CTC model} and the dynamics of $u$, $v$ in a time-changed form:
$$
\begin{aligned}
& \mathrm{d} u_t = \alpha^u(u_{t})\mathrm{d}t + \beta^u(u_{t})\mathrm{d} Z(U_t) + \gamma^u(u_{t-}) \mathrm{d} J^u(U_t), \\
& \mathrm{d} v_t = \alpha^v(v_{t})\mathrm{d}t + \beta^v(v_{t})\mathrm{d}\tilde B(V_t) + \gamma^v(v_{t-}) \mathrm{d}\tilde J^v(V_t).
\end{aligned}
$$
\end{proof}

\begin{corollary}
\label{coro:TC SDE}
    Under the condition of Corollary \ref{affine spec}, where $\alpha^u(u_t) = \kappa_u(\theta_u - u_t)$, $\alpha^v(v_t) = \kappa_v(\theta_v - v_t)$ with $\kappa_i \theta_i \ge 0$, $\beta^i(\cdot) \equiv \sigma_i \ge 0$, and $\gamma^i(\cdot) \equiv \eta_i \ge 0$, $i = u,v$, and, in addition, $2\kappa_i\theta_i > \sigma_i$, $i = u, v$, the assumption in Proposition~\ref{prop:TC SDE} holds.
\end{corollary}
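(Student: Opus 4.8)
The two SDEs \eqref{SDE v} and \eqref{SDE u} share an identical affine structure under the specification of Corollary \ref{affine spec}, so it suffices to treat one of them; I would work with the $v$-equation, which under $\alpha^v(x)=\kappa_v(\theta_v-x)$, $\beta^v\equiv\sigma_v$, $\gamma^v\equiv\eta_v$ reads
\begin{equation*}
\mathrm{d}v_t = \kappa_v(\theta_v - v_t)\,\mathrm{d}t + \sigma_v\sqrt{v_t}\,\mathrm{d}\bar B_t + \eta_v\,\mathrm{d}\bar J^v_t,
\end{equation*}
with $\bar J^v$ a subordinator of local characteristics $(0,0,v_t\nu(\mathrm{d}x))$. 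The plan is to recognize this as an affine jump-diffusion (a CBI-type process) on $\mathbb{R}_+$: a linear mean-reverting drift, a square-root diffusion, and positive jumps whose intensity is affine in the current state. The admissibility conditions for such a process---nonnegative immigration $\kappa_v\theta_v\ge 0$, $\sigma_v\ge 0$, $\eta_v\ge 0$, and a L\'evy measure supported on $(0,\infty)$---are exactly those imposed here.

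For existence and uniqueness I would separate the two standard ingredients. Pathwise uniqueness follows from a Yamada--Watanabe-type argument: the drift $\kappa_v(\theta_v-\cdot)$ is globally Lipschitz, the diffusion coefficient $\sigma_v\sqrt{\cdot}$ is $\tfrac12$-H\"older (the borderline case covered by the Yamada--Watanabe modulus, since $\int_{0}^{\varepsilon}u^{-1}\,\mathrm{d}u=\infty$), and the jump coefficient $\eta_v$ is constant with an affine, hence locally Lipschitz, compensator. Weak existence follows from the affine-process machinery already invoked in Corollary \ref{affine spec} (cf. \cite{filipovic2001general}): the admissible parameters generate a unique conservative affine semigroup on $\mathbb{R}_+$ whose realizations solve the martingale problem for the generator $\mathcal{A}^v$ displayed above. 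Combining weak existence with pathwise uniqueness gives, by Yamada--Watanabe, a unique strong solution.

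Strict positivity is where the Feller-type condition $2\kappa_v\theta_v>\sigma_v^2$ enters. The decisive structural facts are that the jump term is a subordinator, so it contributes only nonnegative increments, and that its intensity $v_t\nu(\cdot)$ vanishes as $v_t\downarrow 0$. Hence the process can approach the origin only through its continuous part, and the nonnegative jumps can never produce a downcrossing of zero. I would therefore reduce the boundary analysis to the continuous CIR dynamics near zero and apply a Feller test, or equivalently a scale-and-speed-measure classification: under $2\kappa_v\theta_v\ge\sigma_v^2$ the origin is an entrance (inaccessible) boundary, so a solution started at $v_0>0$ stays strictly positive for all $t$ almost surely.

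The main obstacle is making this boundary argument rigorous when the driving subordinator has infinite activity, in which case one cannot simply evolve the path between successive jumps and patch CIR excursions together. The clean route is to argue at the level of the generator $\mathcal{A}^v$: the solution is a continuous-state branching process with immigration, whose nonnegativity is automatic and whose accessibility of the origin is governed by an explicit integral (Feller) test on the continuous branching and immigration coefficients; because the jump intensity vanishes at the boundary, this test reduces to the classical CIR criterion and is met under the assumed strict inequality. An equivalent route is an It\^o--Tanaka local-time argument showing that the local time of $v$ at the origin vanishes identically. The identical reasoning applied to \eqref{SDE u} with $(\kappa_u,\theta_u,\sigma_u,\eta_u,\mu)$ produces a unique, strong, strictly positive $u$, so the hypotheses of Proposition \ref{prop:TC SDE} are verified and the corollary follows.
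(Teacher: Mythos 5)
Your proposal is correct and follows essentially the same route as the paper: identify each activity-rate SDE as a CBI/affine jump-diffusion, obtain a unique strong positive solution from the standard theory (the paper simply cites the converse part of Theorem 1.1 of \cite{kawazu1971branching}, where you reconstruct strong existence via weak existence from the affine machinery plus a Yamada--Watanabe pathwise-uniqueness argument), and invoke Feller's condition for strict positivity, correctly observing that the nonnegative, state-vanishing jump intensity leaves the boundary behavior to the continuous CIR part. One minor point: you state the Feller condition in its standard form $2\kappa_i\theta_i>\sigma_i^2$, whereas the corollary as written has $2\kappa_i\theta_i>\sigma_i$, which appears to be a typo in the paper rather than a gap in your argument.
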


\begin{proof}{Proof}
    Under the affine specifications of $u$ and $v$ in Corollary \ref{affine spec}, there exists unique strong solutions of SDEs \eqref{SDE v} and \eqref{SDE u} due to \cite{kawazu1971branching} (the converse part of Theorem 1.1). Since $2\kappa_i\theta_i > \sigma_i$, $i = u, v$ also holds, $u$ and $v$ are almost surely positive by Feller's condition. Thus, the condition in Proposition~\ref{prop:TC SDE} is met.
\end{proof}

\section{Numerical Accuracy of CTC-COS Method}
\label{accuracy of COS}
Table \ref{Cos Accuracy} validates the accuracy of CTC-COS method and compares the numerical call option prices with the simulated values. Simulations are performed with path number $5 \times 10 ^5$ and discretization degree $\delta = 5 \times 10 ^ {-6}$. The relative error is defined as $$
\text{Rel. Err.} = \frac{|C_{\text{CTC-COS}} (K, \tau) - C_{\text{Simu}}(K, \tau)|}{C_{\text{Simu}}(K,\tau)},$$
and the ratio in the table is defined as 
$$\text{Ratio} =  \frac{|C_{\text{CTC-COS}} (K, \tau) - C_{\text{Simu}}(K,\tau)|}{\text{Std.Err.}(K,\tau)},$$
where $\text{Std.Err.}(K,\tau)$ is the standard error of the simulated option prices for time to maturity $\tau$ and strike price $K$. The relative error in Table \ref{Cos Accuracy2} is defined likewise by substituting the option prices by the corresponding implied volatilities.

As shown in Table \ref{Cos Accuracy} and Table \ref{Cos Accuracy2}, the numerical method exhibits high accuracy with respect to both option prices and implied volatility.
\begin{table}[H]
\caption{Validation of numerical accuracy of the CTC-COS method for the call option prices under Composite Heston model\label{Cos Accuracy}}
\begin{center}
 \scalebox{0.6}{
\begin{tabular}{ccccccccccccc}  
\hline ($K / F$, $\tau$) & CTC-COS & Simu. & Std. Err. ($10^{-2}$) & Rel. Err. (\%)& Ratio\\  
\hline  
(1.0, 0.02) & 0.9293 & 0.9292 & 0.177 & 0.016 & 0.082 \\  
(0.9, 0.05) & 10.1159 & 10.1078 & 0.569 & 0.081 & 1.433 \\  
(1.0, 0.05) & 1.5805 & 1.5808 & 0.309 & 0.019 & -0.096 \\  
(1.1, 0.05) & 0.0173 & 0.0170 & 0.039 & 6.686 & -0.382 \\  
(0.8, 0.1) & 20.0877 & 20.0671 & 0.940 & 0.103 & 2.249 \\  
(0.9, 0.1) & 10.4737 & 10.4590 & 0.815 & 0.141 & 2.235 \\  
(1.0, 0.1) & 2.4958 & 2.4956 & 0.497 & 0.007 & 0.035 \\  
(1.1, 0.1) & 0.1521 & 0.1529 & 0.154 & 0.503 & -0.500 \\  
(0.8, 0.15) & 20.2457 & 20.2386 & 1.192 & 0.035 & 0.933 \\  
(0.9, 0.15) & 10.9084 & 10.9086 & 1.015 & 0.001 & -0.013 \\  
(1.0, 0.15) & 3.3141 & 3.3252 & 0.667 & 0.333 & -1.658 \\  
(1.1, 0.15) & 0.4108 & 0.4136 & 0.285 & 0.672 & -0.631 \\  
(0.7, 0.2) & 30.1369 & 30.1289 & 1.511 & 0.026 & 0.526 \\  
(0.8, 0.2) & 20.4502 & 20.4468 & 1.405 & 0.016 & 0.239 \\  
(0.9, 0.2) & 11.3681 & 11.3719 & 1.190 & 0.034 & -0.322 \\  
(1.0, 0.2) & 4.0634 & 4.0764 & 0.822 & 0.317 & -1.572 \\  
(1.1, 0.2) & 0.7634 & 0.7683 & 0.417 & 0.633 & -1.165 \\  
(1.2, 0.2) & 0.1411 & 0.1423 & 0.200 & 1.017 & -0.319 \\  
(0.6, 0.3) & 40.1203 & 40.1167 & 2.007 & 0.009 & 0.181 \\  
(0.7, 0.3) & 30.3522 & 30.3502 & 1.925 & 0.007 & 0.103 \\  
(0.8, 0.3) & 20.9276 & 20.9294 & 1.769 & 0.008 & -0.100 \\  
(0.9, 0.3) & 12.2883 & 12.2981 & 1.503 & 0.080 & -0.657 \\  
(1.0, 0.3) & 5.4028 & 5.4252 & 1.110 & 0.412 & -2.015 \\  
(1.1, 0.3) & 1.6247 & 1.6444 & 0.683 & 1.193 & -2.871 \\  
(1.2, 0.3) & 0.4321 & 0.4412 & 0.391 & 3.428 & -1.941 \\  
(0.6, 0.5) & 40.3760 & 40.3794 & 2.712 & 0.008 & -0.124 \\  
(0.7, 0.5) & 30.8987 & 30.9100 & 2.572 & 0.037 & -0.439 \\  
(0.8, 0.5) & 21.9617 & 21.9840 & 2.348 & 0.102 & -0.952 \\  
(0.9, 0.5) & 13.9954 & 14.0244 & 2.021 & 0.206 & -1.432 \\  
(1.0, 0.5) & 7.6360 & 7.6664 & 1.603 & 0.396 & -1.895 \\  
(1.1, 0.5) & 3.4802 & 3.5084 & 1.158 & 0.802 & -2.430 \\  
(1.2, 0.5) & 1.4046 & 1.4184 & 0.784 & 0.970 & -1.755 \\  
(0.6, 0.7) & 40.6905 & 40.6779 & 3.280 & 0.031 & 0.385 \\  
(0.7, 0.7) & 31.5027 & 31.4981 & 3.097 & 0.015 & 0.151 \\  
(0.8, 0.7) & 22.9855 & 22.9916 & 2.827 & 0.026 & -0.214 \\  
(0.9, 0.7) & 15.5063 & 15.5194 & 2.467 & 0.084 & -0.531 \\  
(1.0, 0.7) & 9.4844 & 9.5094 & 2.035 & 0.262 & -1.226 \\  
(1.1, 0.7) & 5.2188 & 5.2536 & 1.583 & 0.663 & -2.200 \\  
(1.2, 0.7) & 2.6356 & 2.6677 & 1.174 & 1.201 & -2.730 \\  
(0.6, 0.9) & 41.0381 & 41.0180 & 3.765 & 0.049 & 0.535 \\  
(0.7, 0.9) & 32.1250 & 32.1155 & 3.550 & 0.030 & 0.267 \\  
(0.8, 0.9) & 23.9659 & 23.9684 & 3.249 & 0.010 & -0.076 \\  
(0.9, 0.9) & 16.8586 & 16.8730 & 2.868 & 0.085 & -0.501 \\  
(1.0, 0.9) & 11.0831 & 11.1079 & 2.430 & 0.224 & -1.023 \\  
(1.1, 0.9) & 6.7949 & 6.8278 & 1.975 & 0.482 & -1.666 \\  
(1.2, 0.9) & 3.9225 & 3.9574 & 1.551 & 0.883 & -2.252 \\  
(1.3, 0.9) & 2.1762 & 2.2072 & 1.192 & 1.405 & -2.602 \\  
\hline  
\end{tabular}
}
\end{center}
{\textit{Note.} This table validates the numerical accuracy of the CTC-COS method for the call option prices under Composite Heston model. The Std. Simu refers to the standard error of the simulated option price. The parameters are set as $S_0 = 100$, $\kappa_u = 6.0$, $\theta_u = 0.08$, $\sigma_u = 1.5$, $\kappa_v = 3.0$, $\theta_v = 1.5$, $\sigma_v = 0.5$, $\rho_u = -0.5$, $u_0 = 0.02$, $v_0 = 1.3$. Simulations are performed with path number $5 \times 10 ^5$ and discretization degree $\delta = 5 \times 10 ^ {-6}$. Std.Err. denotes the standard error of the simulated option prices. Rel. Err. is the $L^1$ relative error defined as $|C_{\text{CTC-COS}} (K,\tau) - C_{\text{Simu}}(K,\tau)| / C_{\text{Simu}}(K,\tau)$. And the ratio in the table is defined as $|C_{\text{CTC-COS}} (K,\tau) - C_{\text{Simu}}(K,\tau)| / {\text{Std.Err.}(K,\tau)}$.} 
\end{table}

\begin{table}[H]
\caption{Validation of numerical accuracy of the CTC-COS method for the implied volatility of Composite Heston model
\label{Cos Accuracy2}}
\begin{center}
 \scalebox{0.7}{
\begin{tabular}{ccccccccccccc}
\hline ($K/ F$, $\tau$) & CTC-COS & Simu. & Rel. Err. (\%)& \\
\hline 
(1.0, 0.02) & 0.1647 & 0.1647 & 0.0157 \\  
(0.9, 0.05) & 0.2813 & 0.2784 & 1.031 \\  
(1.0, 0.05) & 0.1772 & 0.1772 & 0.0188 \\  
(1.1, 0.05) & 0.1874 & 0.1869 & 0.244 \\  
(0.8, 0.1) & 0.3548 & 0.3606 & 1.593 \\  
(0.9, 0.1) & 0.2788 & 0.2805 & 0.575 \\  
(1.0, 0.1) & 0.1979 & 0.1978 & 0.007 \\  
(1.1, 0.1) & 0.1902 & 0.1904 & 0.107 \\  
(0.8, 0.15) & 0.3482 & 0.3461 & 0.594 \\  
(0.9, 0.15) & 0.2805 & 0.2805 & 0.005 \\  
(1.0, 0.15) & 0.2146 & 0.2153 & 0.333 \\  
(1.1, 0.15) & 0.1977 & 0.1981 & 0.188 \\  
(0.7, 0.2) & 0.4039 & 0.4001 & 0.957 \\  
(0.8, 0.2) & 0.3440 & 0.3434 & 0.176 \\  
(0.9, 0.2) & 0.2836 & 0.2839 & 0.113 \\  
(1.0, 0.2) & 0.2279 & 0.2286 & 0.317 \\  
(1.1, 0.2) & 0.2070 & 0.2074 & 0.213 \\  
(1.2, 0.2) & 0.2219 & 0.2222 & 0.148 \\  
(0.6, 0.3) & 0.4437 & 0.4417 & 0.451 \\  
(0.7, 0.3) & 0.3904 & 0.3900 & 0.113 \\  
(0.8, 0.3) & 0.3392 & 0.3394 & 0.055 \\  
(0.9, 0.3) & 0.2901 & 0.2907 & 0.204 \\  
(1.0, 0.3) & 0.2474 & 0.2485 & 0.413 \\  
(1.1, 0.3) & 0.2249 & 0.2261 & 0.511 \\  
(1.2, 0.3) & 0.2272 & 0.2284 & 0.486 \\  
(0.6, 0.5) & 0.4156 & 0.4163 & 0.169 \\  
(0.7, 0.5) & 0.3737 & 0.3749 & 0.320 \\  
(0.8, 0.5) & 0.3351 & 0.3365 & 0.412 \\  
(0.9, 0.5) & 0.3002 & 0.3015 & 0.408 \\  
(1.0, 0.5) & 0.2711 & 0.2722 & 0.398 \\  
(1.1, 0.5) & 0.2515 & 0.2526 & 0.436 \\  
(1.2, 0.5) & 0.2438 & 0.2446 & 0.319 \\  
(0.6, 0.7) & 0.3978 & 0.3962 & 0.407 \\  
(0.7, 0.7) & 0.3639 & 0.3635 & 0.093 \\  
(0.8, 0.7) & 0.3336 & 0.3338 & 0.084 \\  
(0.9, 0.7) & 0.3069 & 0.3074 & 0.147 \\  
(1.0, 0.7) & 0.2848 & 0.2856 & 0.264 \\  
(1.1, 0.7) & 0.2685 & 0.2696 & 0.406 \\  
(1.2, 0.7) & 0.2588 & 0.2600 & 0.482 \\  
(0.6, 0.9) & 0.3859 & 0.3840 & 0.483 \\  
(0.7, 0.9) & 0.3577 & 0.3572 & 0.148 \\  
(0.8, 0.9) & 0.3330 & 0.3331 & 0.028 \\  
(0.9, 0.9) & 0.3116 & 0.3120 & 0.138 \\  
(1.0, 0.9) & 0.2938 & 0.2945 & 0.225 \\  
(1.1, 0.9) & 0.2800 & 0.2809 & 0.318 \\  
(1.2, 0.9) & 0.2703 & 0.2714 & 0.403 \\  
(1.3, 0.9) & 0.2646 & 0.2659 & 0.469 \\  
\hline
\end{tabular}}
\end{center}
{\textit{Note.} The parameters are set as $S_0 = 100$, $\kappa_u = 6.0$, $\theta_u = 0.08$, $\sigma_u = 1.5$, $\kappa_v = 3.0$, $\theta_v = 1.5$, $\sigma_v = 0.5$, $\rho_u = -0.5$, $u_0 = 0.02$, $v_0 = 1.3$. Simulation is performed with path number $5 \times 10 ^5$ and discretization $\delta = 5 \times 10 ^ {-6}$. Rel. Err. is the $L^1$ relative error defined as $|\sigma_{\text{CTC-COS}} (k,\tau) - \sigma_{\text{Simu}}(k,\tau)| / \sigma_{\text{Simu}}(k,\tau)$.}
\end{table}  

\section{Calibration Details}
Spot index prices such as SPX and VIX are not used in the calibration because they are not directly traded in the market, see also \cite{lian2013pricing}. They are recovered from the market according to the put-call parity as discounted futures price. Meanwhile, such implied spot prices contain the term structure of future dividend expectations.
\begin{itemize}
    \item SPX options with AM settlement are specially treated with 1 day less maturity
    \item The risk-free rate is quoted from daily U.S. treasury bond rates with Spline interpolation
    \item Implied volatility is a function of log moneyness $k_t = \log (\frac{K}{F_t})$ (and no additional rates) because, for $\tau = \bar T - t$, Black-Scholes formula yields
    $$
\begin{aligned}
e^{-r\tau} E_t[S_{\bar T} - K]_+ &  = C_t(K, \bar T) \equiv C^{B S}\left(k_t, K, \tau, I V_t\right) \\
& =K e^{-r \tau}\left[e^{k_t} \Phi\left(d_1\right)-\Phi\left(d_2\right)\right],
\end{aligned}
$$
which is reduced to
$$e^{k_t} \Phi\left(d_1\left(k_t, I V_t, \tau\right)\right)-\Phi\left(d_2\left(k_t, I V_t, \tau\right)\right)=E_t\left[e^{k_{\bar T}}-1\right]_{+}.$$

To price and compute $\text{IV}$, it's enough to obtain the moneyness. By put-call parity,
$$C - P = e^{-r\tau}(F - K),$$
if $r$ is known, then 
$$k_t = \ln\frac{K}{K + (C_t-P_t)e^{r\tau}}.$$
\end{itemize}

\section{Supplementary Analysis of Calibration Results}
\subsection{Out-of-sample Calibration Performance}

Tables \ref{out1}, \ref{out2}, and \ref{out3} present the out-of-sample joint calibration performance of the four models across three consecutive sub-periods (P1, P2, P3), evaluated via multi-error metrics (RMSRE, RMSE, MAE) and pairwise $t$-statistics for statistical significance.

In period 1, composite models consistently outperform their ordinary counterparts in P1. For SPX options, the Composite Heston reduces RMSE by 28.2\% (from 0.0383 to 0.0275), while the Composite JH cuts it by 9.1\% (from 0.0231 to 0.0210). The improvement is far more pronounced in VIX markets: the Composite Heston’s VIX RMSE drops by 14.3\% (from 0.1789 to 0.1534), and the Composite JH achieves a 40.9\% reduction (from 0.1303 to 0.0769). Aggregate metrics confirm this superiority—Composite JH’s RMSRE (0.0837) and MAE (0.0384) are the lowest among all models. Panel B’s $t$-statistics (e.g., Composite JH vs. JH: -6.65; Composite Heston vs. Heston: -9.23) indicate all improvements are statistically significant.

Peiod 2 sees the most dramatic performance gains from composite models. The Composite Heston slashes RMSRE by 45.8\% (from 0.3685 to 0.1999) and aggregate RMSE by 14.8\% (from 0.1881 to 0.1603). The Composite JH outperforms even more substantially: its SPX RMSE drops by 45.4\% (from 0.0268 to 0.0145), VIX RMSE by 29.1\% (from 0.1450 to 0.1028), and MAE by 34.7\% (from 0.0684 to 0.0447) versus JH. Pairwise $t$-statistics (e.g., Composite JH vs. Heston: -23.74; Composite Heston vs. Heston: -15.33) confirm the outperformance is highly significant, highlighting composite models’ robustness in volatile market conditions.

The superiority of composite models persists in period 3, with minor exceptions. The Composite Heston underperforms Heston slightly in SPX RMSE (0.0421 vs. 0.0289) but still improves VIX RMSE by 28.0\% (from 0.2670 to 0.1925) and aggregate RMSRE by 28.5\% (from 0.2491 to 0.1780). The Composite JH remains the top performer: it reduces JH’s VIX RMSE by 49.9\% (from 0.2218 to 0.1111), aggregate RMSE by 43.6\% (from 0.1184 to 0.0669), and MAE by 52.6\% (from 0.1046 to 0.0496). Panel B’s $t$-statistics (e.g., Composite JH vs. JH: -9.99; Composite JH vs. Composite Heston: -22.96) validate the statistical significance of these gains, with the exception that the Composite Heston and JH show negligible performance differences ($t$-statistic = -0.94).

Overall, composite models deliver robust out-of-sample improvements across all three periods, with the most notable gains in VIX markets. The Composite JH consistently achieves the lowest error metrics and strongest statistical significance, confirming the value of integrating composite time changes with jump dynamics for joint calibration.

\begin{table}[H]
\caption{Comparison of the out-of-sample joint calibration performance (P1)\label{out1}}
\begin{center}
    \begin{tabular}{ccccc}
       \hline & Heston & \makecell[c]{Composite\\ Heston} & JH & 
       \makecell[c]{Composite\\ JH}\\\hline
       \multicolumn{5}{c}{Panel A. Summary of Statistics of Performance}\\\hline
        \makecell[c]{RMSRE\\ \quad} & \makecell[c]{0.1784\\ (0.0375)} & \makecell[c]{0.1297\\ (0.0285)} & \makecell[c]{0.1114\\ (0.0239)} & \makecell[c]{0.0837\\
       (0.0197)}\\
       \makecell[c]{RMSE (SPX)\\ \quad} & \makecell[c]{0.0383\\ (0.0104)}& \makecell[c]{0.0275 \\(0.0085)} & \makecell[c]{0.0231 \\(0.0052)} & \makecell[c]{0.0210 \\(0.0052)}\\
       \makecell[c]{RMSE (VIX)\\ \quad} & \makecell[c]{0.1789\\ (0.0569)} & \makecell[c]{0.1534 \\(0.0533)} & \makecell[c]{0.1303 \\(0.0383)} & \makecell[c]{0.0769 \\(0.0256)}\\
       \makecell[c]{RMSE (Aggregate)\\ \quad} & \makecell[c]{0.1086 \\(0.0295)} & \makecell[c]{0.0905\\ (0.0277)} & \makecell[c]{0.0767\\ (0.0194)} & \makecell[c]{0.0489 \\(0.0135)}\\
       \makecell[c]{MAE\\ \quad} & \makecell[c]{0.0823 \\(0.0220)} & \makecell[c]{0.0717\\ (0.0203)} & \makecell[c]{0.0638\\ (0.0164)} & \makecell[c]{0.0384 \\ (0.0105)}\\\hline
                \multicolumn{5}{c}{Panel B. Pairwise $t$-statistics}\\\hline
        Heston & 0 & 9.23 & 13.25 & 18.07\\
        Composite Heston & -9.23 & 0 & 3.95 & 12.72\\
        JH & -13.25 & -3.95 & 0 & 6.65\\
        Composite JH & -18.07 & -12.72 & -6.65 & 0\\\hline
    \end{tabular}
    \end{center}
    {\textit{Note.} Panel A compares the out-of-sample (P1) joint calibration performance of four models (Heston, Composite Heston, JH, Composite JH) using error metrics including RMSE, RMSRE and MAE. All values are the daily average of errors, with the standard deviation of the errors shown in parentheses. Panel B reports pairwise $t$-statistics (defined in Equation \eqref{t}), where the $i$-th row corresponds to Model $i$ and the $j$-th column to Model $j$. A significantly negative statistic indicates Model $i$ outperforms Model $j$, while a positive statistic indicates the opposite.}
\end{table}

\begin{table}[H]
\caption{Comparison of the out-of-sample joint calibration performance (P2)\label{out2}}
\begin{center}
    \begin{tabular}{ccccc}
       \hline & Heston & \makecell[c]{Composite\\ Heston} & JH & 
       \makecell[c]{Composite\\ JH}\\\hline
       \multicolumn{5}{c}{Panel A. Summary of Statistics of Performance}\\\hline
        \makecell[c]{RMSRE\\ \quad} & \makecell[c]{0.3685\\ (0.0688)} & \makecell[c]{0.1999\\ (0.0312)} & \makecell[c]{0.1441\\ (0.0445)} & \makecell[c]{0.0894\\
       (0.0184)}\\
       \makecell[c]{RMSE (SPX)\\ \quad} & \makecell[c]{0.0593\\ (0.0080)}& \makecell[c]{0.0371 \\(0.0067)} & \makecell[c]{0.0268 \\(0.0062)} & \makecell[c]{0.0145 \\(0.0036)}\\
       \makecell[c]{RMSE (VIX)\\ \quad} & \makecell[c]{0.3170\\ (0.0817)} & \makecell[c]{0.2834 \\(0.0836)} & \makecell[c]{0.1450 \\(0.0728)} & \makecell[c]{0.1028 \\(0.0435)}\\
       \makecell[c]{RMSE (Aggregate)\\ \quad} & \makecell[c]{0.1881 \\(0.0428)} & \makecell[c]{0.1603\\ (0.0424)} & \makecell[c]{0.0859 \\ (0.0378)} & \makecell[c]{0.0587 \\(0.0219)}\\
       \makecell[c]{MAE\\ \quad} & \makecell[c]{0.1351 \\(0.0367)} & \makecell[c]{0.1174\\ (0.0304)} & \makecell[c]{0.0684\\ (0.0326)} & \makecell[c]{0.0447 \\ (0.0141)}\\\hline
              \multicolumn{5}{c}{Panel B. Pairwise $t$-statistics}\\\hline
        Heston & 0 & 15.33 & 34.51 & 23.74\\
        Composite Heston & -15.33 & 0 & 8.54 & 26.83\\
        JH & -34.51 & -8.54 & 0 & 7.82\\
        Composite JH & -23.74 & -26.83 & -7.82 & 0\\\hline
    \end{tabular}
    \end{center}
    {\textit{Note.} Panel A compares the out-of-sample (P2) joint calibration performance of four models (Heston, Composite Heston, JH, Composite JH) using error metrics including RMSE, RMSRE and MAE. All values are the daily average of errors, with the standard deviation of the errors shown in parentheses. Panel B reports pairwise $t$-statistics (defined in Equation \eqref{t}), where the $i$-th row corresponds to Model $i$ and the $j$-th column to Model $j$. A significantly negative statistic indicates Model $i$ outperforms Model $j$, while a positive statistic indicates the opposite.}
\end{table}

\begin{table}[H]
\caption{Comparison of the out-of-sample joint calibration performance (P3)\label{out3}}
\begin{center}
    \begin{tabular}{ccccc}
       \hline & Heston & \makecell[c]{Composite\\ Heston} & JH & 
       \makecell[c]{Composite\\ JH}\\\hline
       \multicolumn{5}{c}{Panel A. Summary of Statistics of Performance}\\\hline
        \makecell[c]{RMSRE\\ \quad} & \makecell[c]{0.2491\\ (0.0672)} & \makecell[c]{0.1780\\ (0.0317)} & \makecell[c]{0.1723\\ (0.0501)} & \makecell[c]{0.1000\\
       (0.0307)}\\
       \makecell[c]{RMSE (SPX)\\ \quad} & \makecell[c]{0.0289\\ (0.0054)}& \makecell[c]{0.0421 \\(0.0115)} & \makecell[c]{0.0151 \\(0.0037)} & \makecell[c]{0.0226 \\(0.0142)}\\
       \makecell[c]{RMSE (VIX)\\ \quad} & \makecell[c]{0.2670\\ (0.1011)} & \makecell[c]{0.1925 \\(0.0470)} & \makecell[c]{0.2218 \\(0.0930)} & \makecell[c]{0.1111 \\(0.0455)}\\
       \makecell[c]{RMSE (Aggregate)\\ \quad} & \makecell[c]{0.1479 \\(0.0518)} & \makecell[c]{0.1173\\ (0.0256)} & \makecell[c]{0.1184\\ (0.0476)} & \makecell[c]{0.0669 \\(0.0251)}\\
       \makecell[c]{MAE\\ \quad} & \makecell[c]{0.1162 \\(0.0435)} & \makecell[c]{0.0935\\ (0.0216)} & \makecell[c]{0.1046\\ (0.0438)} & \makecell[c]{0.0496 \\ (0.0175)}\\\hline
              \multicolumn{5}{c}{Panel B. Pairwise $t$-statistics}\\\hline
        Heston & 0 & 7.45 & 17.54 & 13.90\\
        Composite Heston & -7.45 & 0 & 0.94 & 22.96\\
        JH & -17.54 & -0.94 & 0 & 9.99\\
        Composite JH & -13.90 & -22.96 & -9.99 & 0\\\hline
    \end{tabular}
    \end{center}
    {\textit{Note.} Panel A compares the out-of-sample (P3) joint calibration performance of four models (Heston, Composite Heston, JH, Composite JH) using error metrics including RMSE, RMSRE and MAE. All values are the daily average of errors, with the standard deviation of the errors shown in parentheses. Panel B reports pairwise $t$-statistics (defined in Equation \eqref{t}), where the $i$-th row corresponds to Model $i$ and the $j$-th column to Model $j$. A significantly negative statistic indicates Model $i$ outperforms Model $j$, while a positive statistic indicates the opposite.}
\end{table}
\subsection{ATM Characteristics}
We complement the analysis for Heston and Composite Heston models by examining the comparison of both the short-term 
(Figures~\ref{H IV SPX}, \ref{H IV SPX Skew} for SPX options and Figure~\ref{H IV VIX}, \ref{H IV VIX Skew} for VIX options) 
and the long-term
(Figures~\ref{H IV SPX long term}, \ref{H IV SPX skew long term} for SPX options and Figure~\ref{H IV VIX long term}, \ref{H IV VIX skew long term} for VIX options) calibration error for ATM implied volatility and ATM volatility skew.

\begin{figure}[H]
\caption{The short-term ATM implied volatility of Heston and Composite Heston in SPX markets}
 \begin{minipage}{\linewidth}
 \subfloat[][Short-term ATM IV]{
 \includegraphics[width=\linewidth]{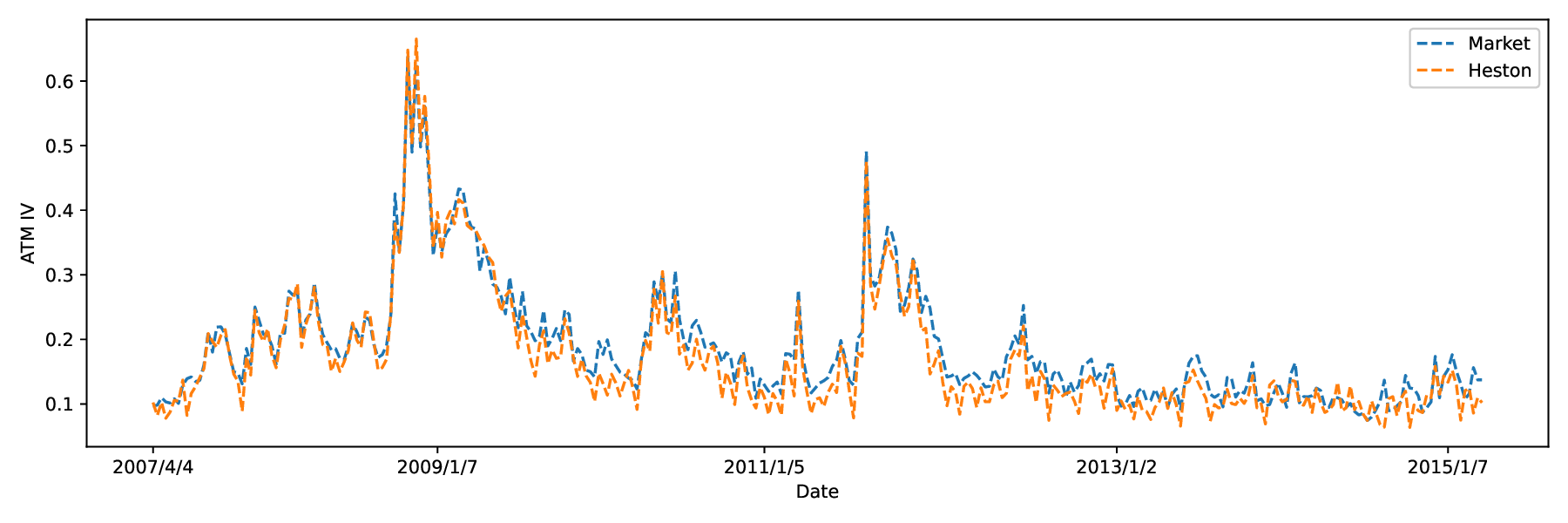}
 }
 \end{minipage} \vspace{-1em}
  \begin{minipage}{\linewidth}
 \subfloat[][Short-term ATM IV]{
 \includegraphics[width=\linewidth]{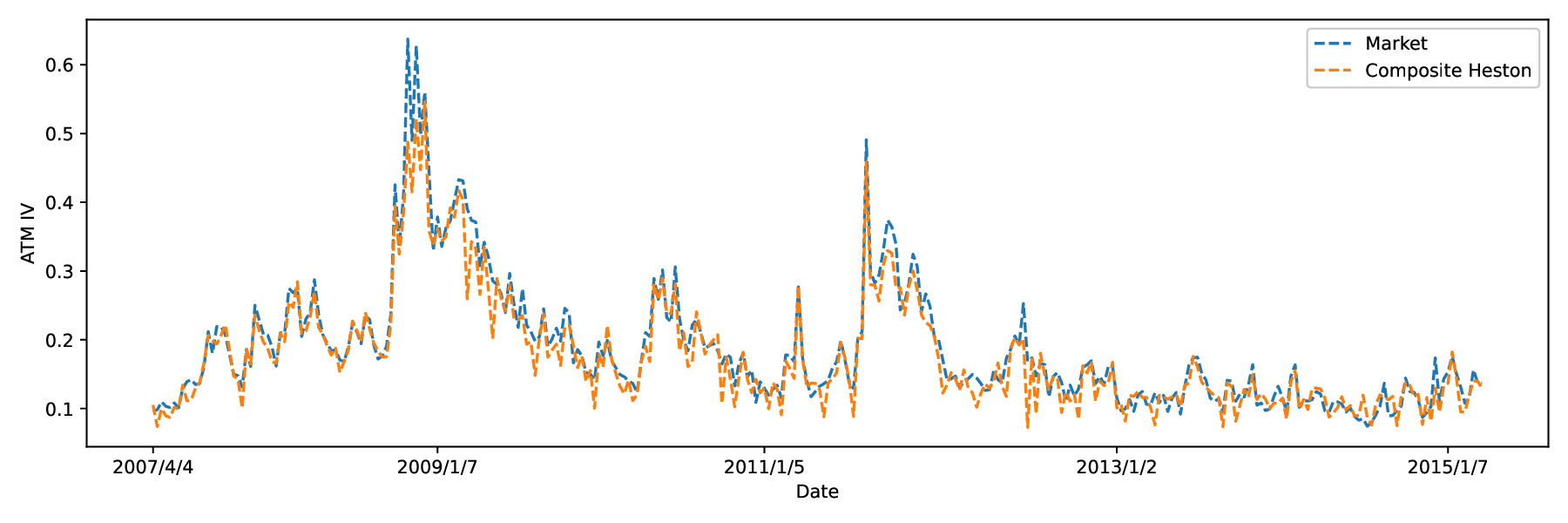}
}
 \end{minipage} \vspace{-1em}
 \begin{minipage}{\textwidth}
\subfloat[][ATM IV difference]{
\includegraphics[width=\textwidth]{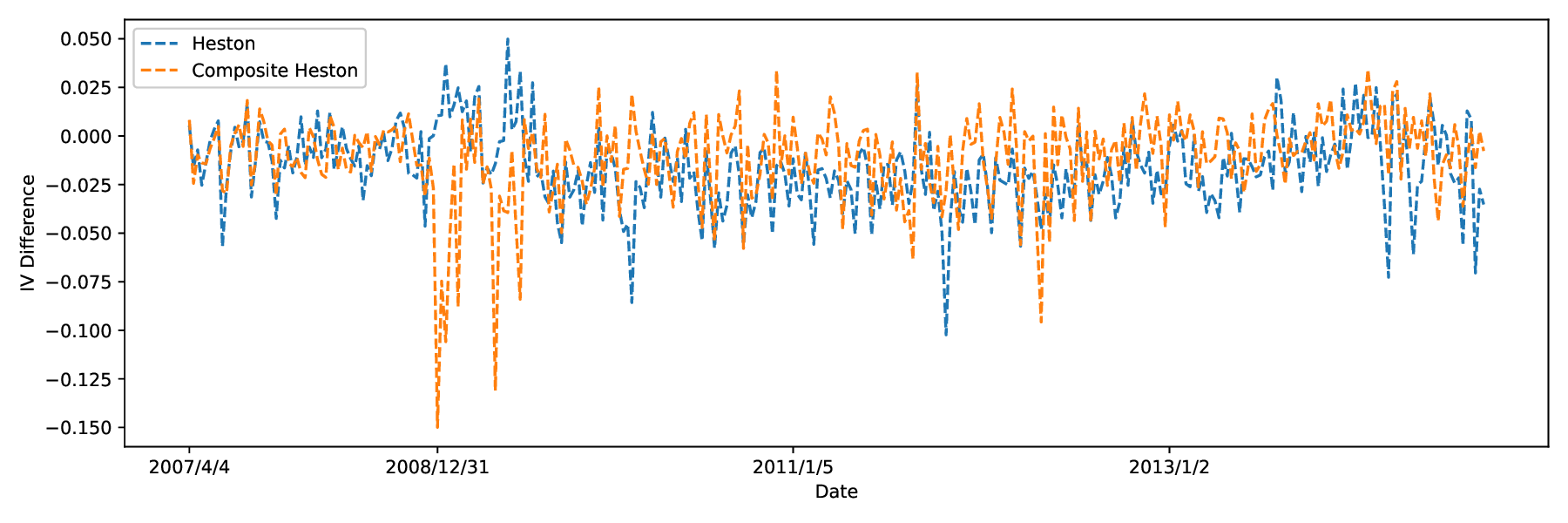}
}
\end{minipage}
\vspace{-2pt}
\footnotesize \textit{Note. } The plots compare the short-term ATM IV, defined by Eq. \eqref{IV}, between the models (Heston and Composite Heston) and the market. The first row plots the ATM IV of Heston model, the second row plots that of Composite Heston model, and the last row compares the corresponding difference (model - market) between models and the market. Maturities of SPX options smaller than 30 days are considered short-term.\label{H IV SPX}
\end{figure}

\begin{figure}[H]
\caption{The short-term volatility skew of Heston and Composite Heston in SPX markets}
 \begin{minipage}{\linewidth}
 \subfloat[][Short-term volatility skew]{
 \includegraphics[width=\linewidth]{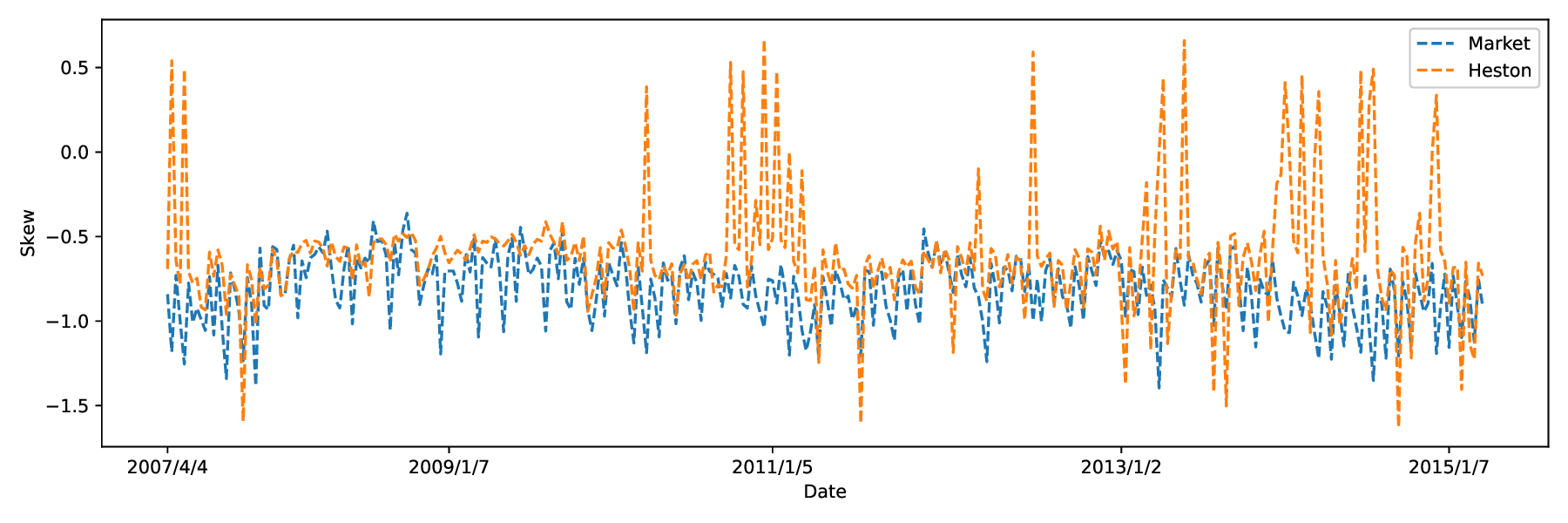}
 }
 \end{minipage} \vspace{-1em}
  \begin{minipage}{\linewidth}
 \subfloat[][Short-term volatility skew]{
 \includegraphics[width=\linewidth]{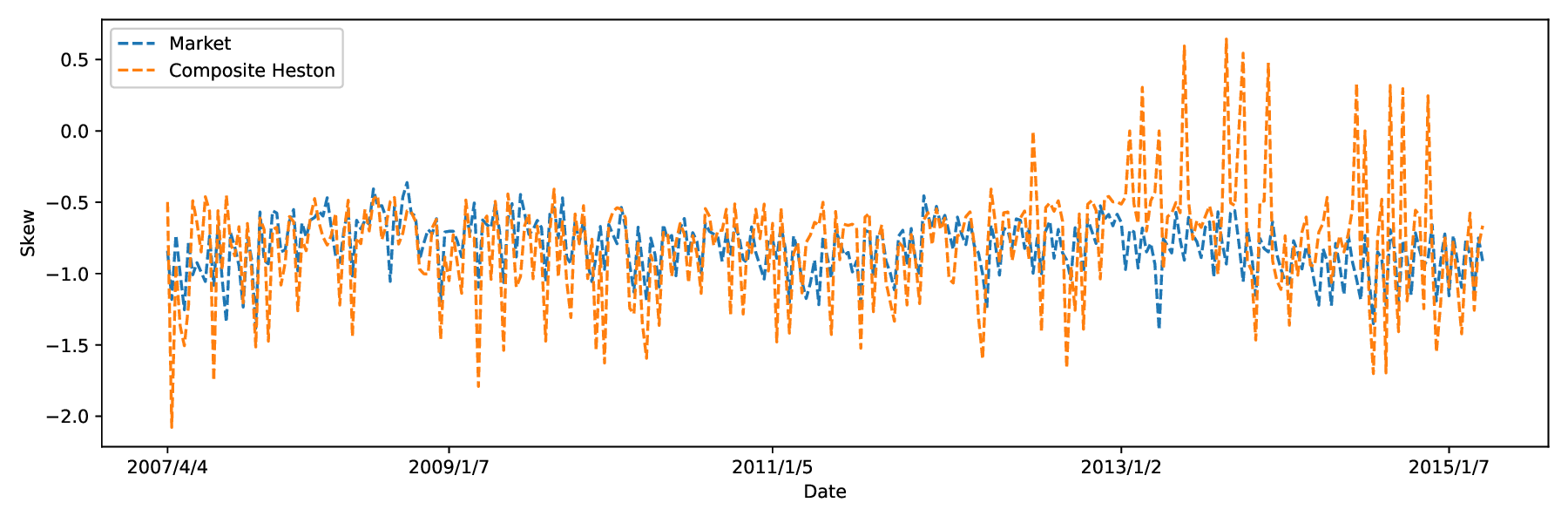}
}
 \end{minipage} \vspace{-1em}
 \begin{minipage}{\textwidth}
\subfloat[][Volatility skew difference]{
\includegraphics[width=\textwidth]{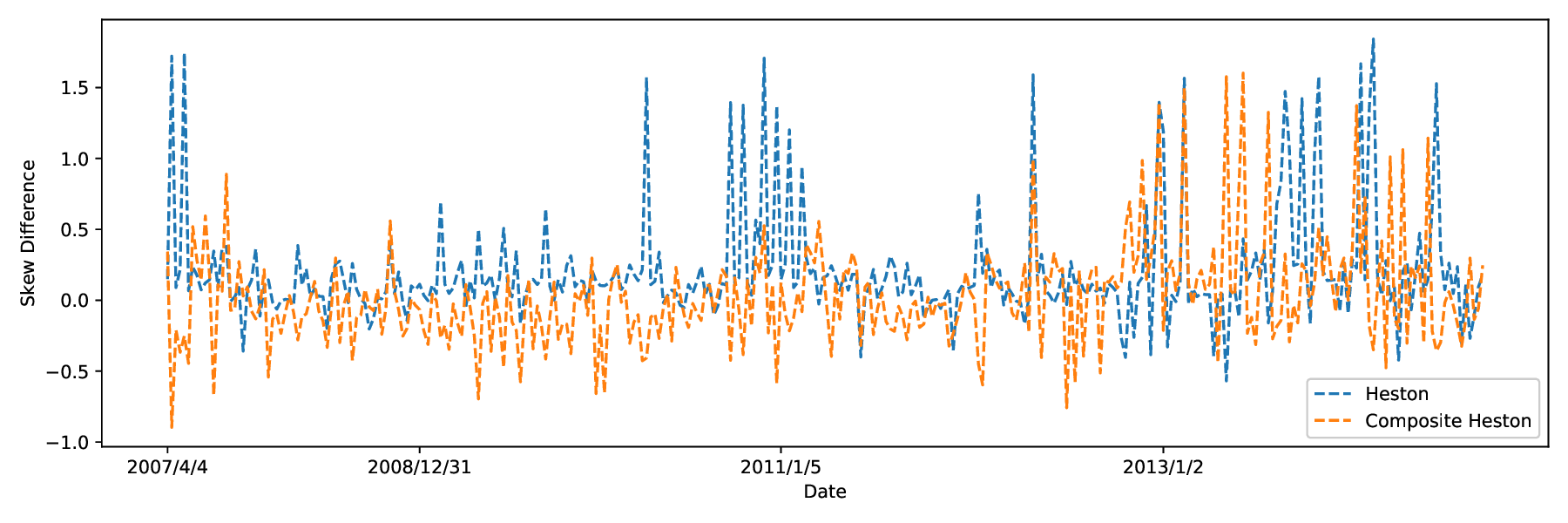}
}
\end{minipage}
\vspace{-2pt}
\footnotesize \textit{Note. } The plots compare the short-term volatility skew, defined by Eq. \eqref{Skew}, between the models (Heston and Composite Heston) and the market. The first row plots the volatility skew of Heston model, the second row plots that of Composite Heston model, and the last row compares the corresponding difference (model - market) between models and the market. Maturities of SPX options smaller than 30 days are considered short-term.\label{H IV SPX Skew}
\end{figure}

\begin{figure}[H]
\caption{The long-term ATM implied volatility of Heston and Composite Heston in SPX markets}
 \begin{minipage}{\linewidth}
 \subfloat[][Long-term ATM IV]{
 \includegraphics[width=\linewidth]{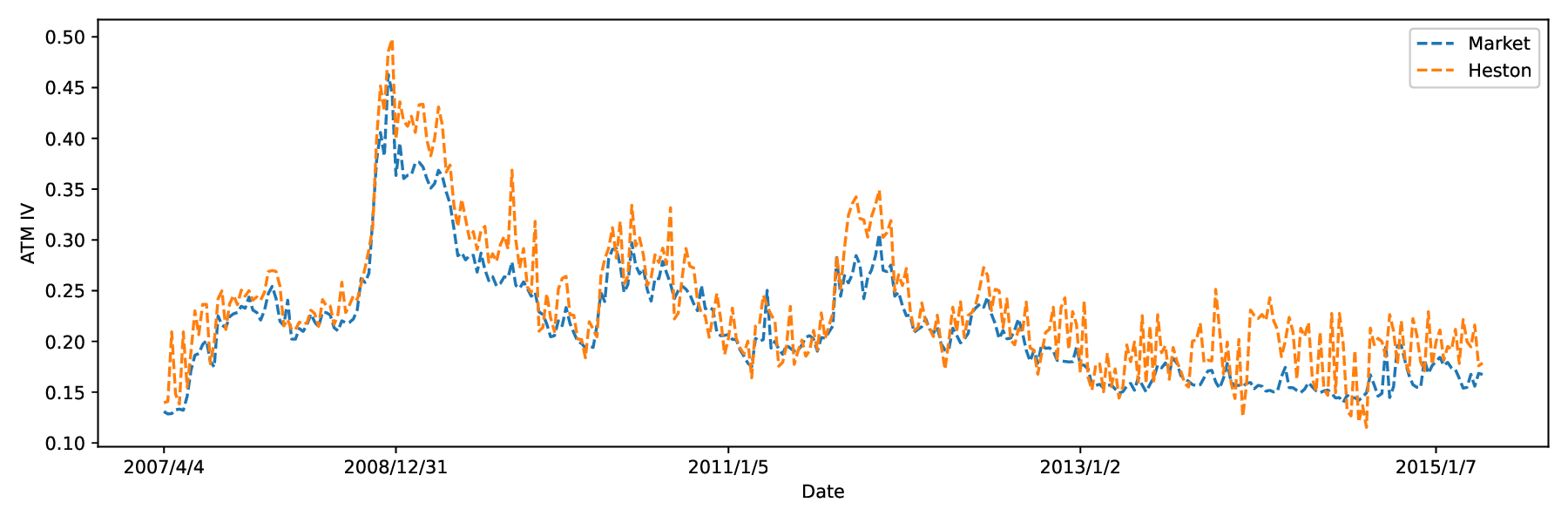}
 }
 \end{minipage} \vspace{-1em}
  \begin{minipage}{\linewidth}
 \subfloat[][Long-term ATM IV]{
 \includegraphics[width=\linewidth]{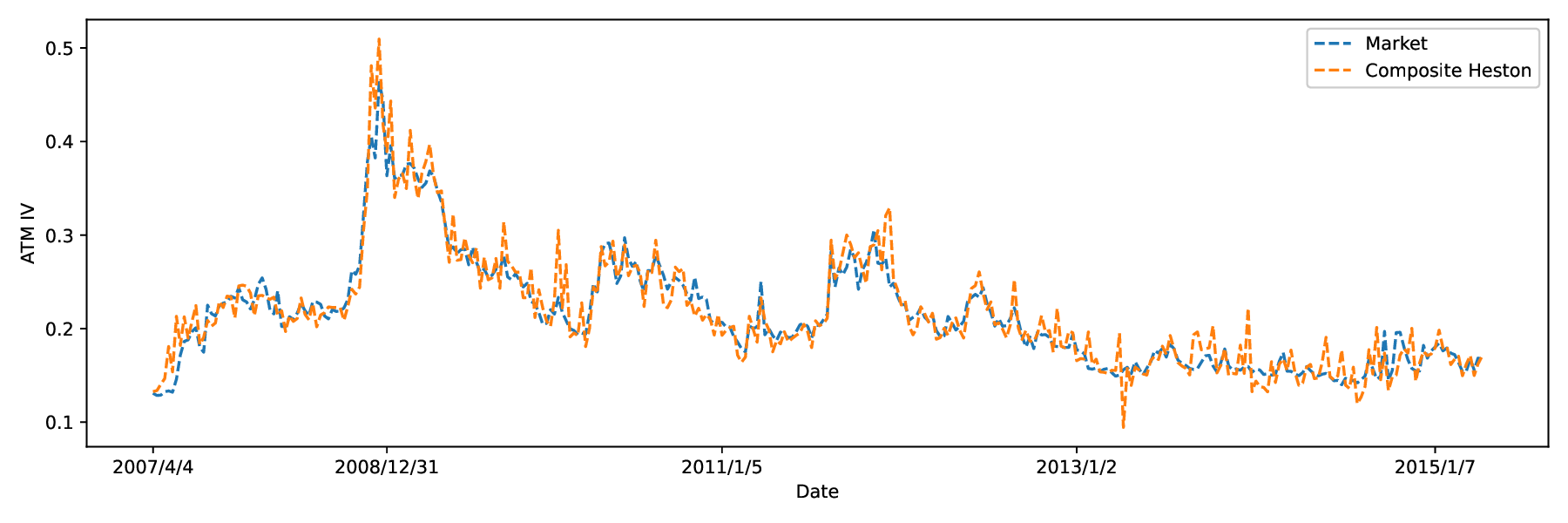}
}
 \end{minipage} \vspace{-1em}
 \begin{minipage}{\textwidth}
\subfloat[][ATM IV difference]{
\includegraphics[width=\textwidth]{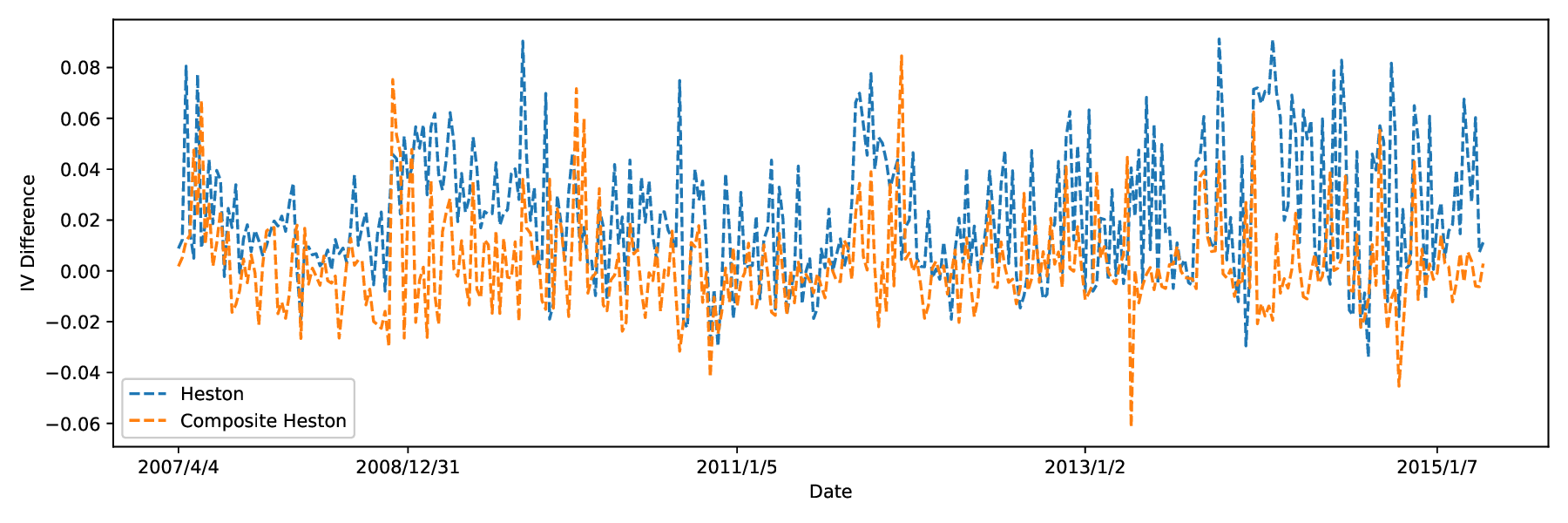}
}
\end{minipage}
\vspace{-2pt}
\footnotesize \textit{Note. } The plots compare the long-term ATM IV, defined by Eq. \eqref{IV}, between the models (Heston and Composite Heston) and the market. The first row plots the ATM IV of Heston model, the second row plots that of Composite Heston model, and the last row compares the corresponding difference (model - market) between models and the market. Maturities of SPX options smaller than 270 days are considered short-term.\label{H IV SPX long term}
\end{figure}

\begin{figure}[H]
\caption{The long-term volatility skew of Heston and Composite Heston in SPX markets}
 \begin{minipage}{\linewidth}
 \subfloat[][Long-term volatility skew]{
 \includegraphics[width=\linewidth]{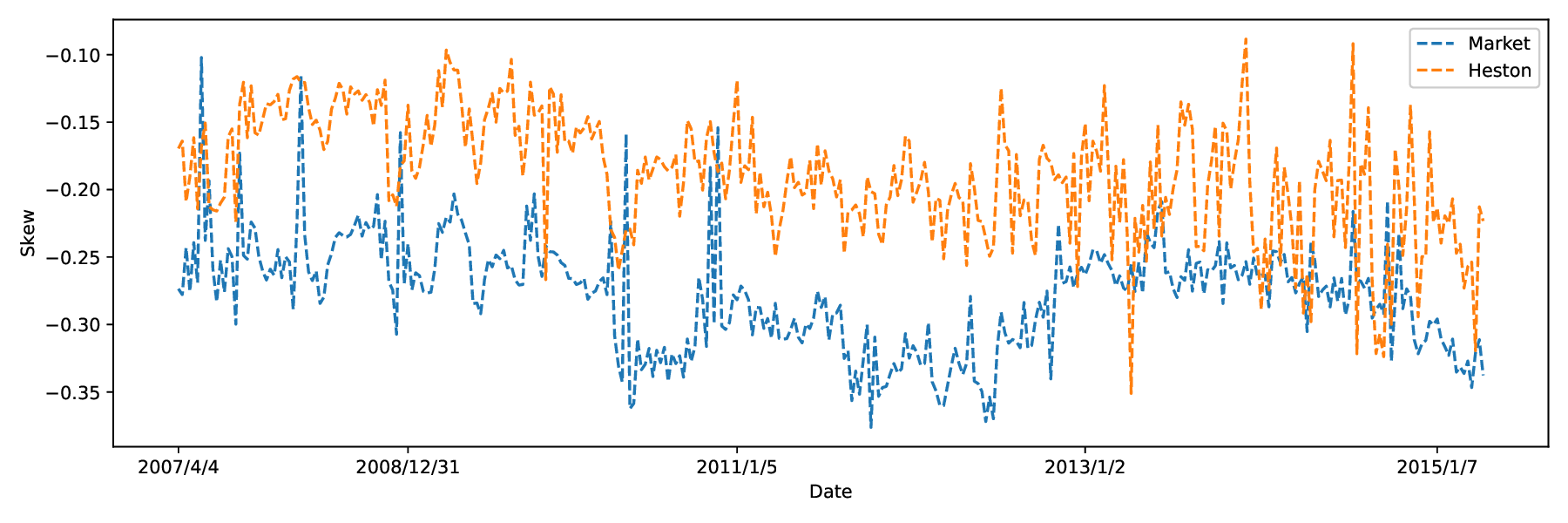}
 }
 \end{minipage} \vspace{-1em}
  \begin{minipage}{\linewidth}
 \subfloat[][Long-term volatility skew]{
 \includegraphics[width=\linewidth]{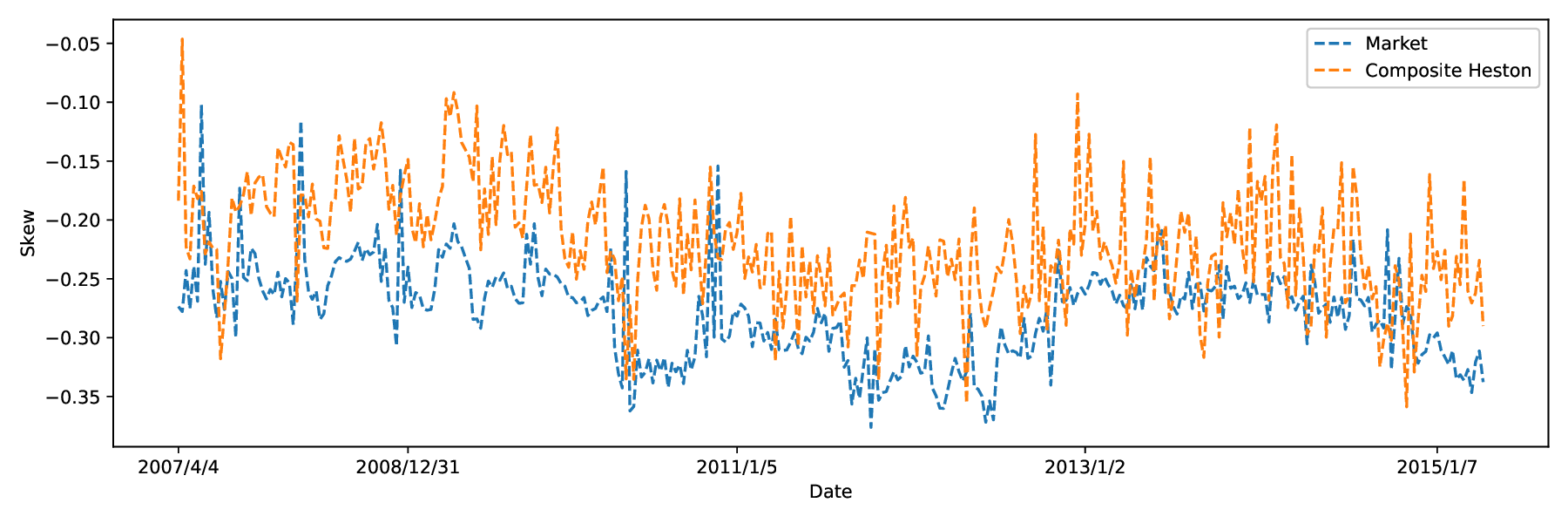}
}
 \end{minipage} \vspace{-1em}
 \begin{minipage}{\textwidth}
\subfloat[][Volatility skew difference]{
\includegraphics[width=\textwidth]{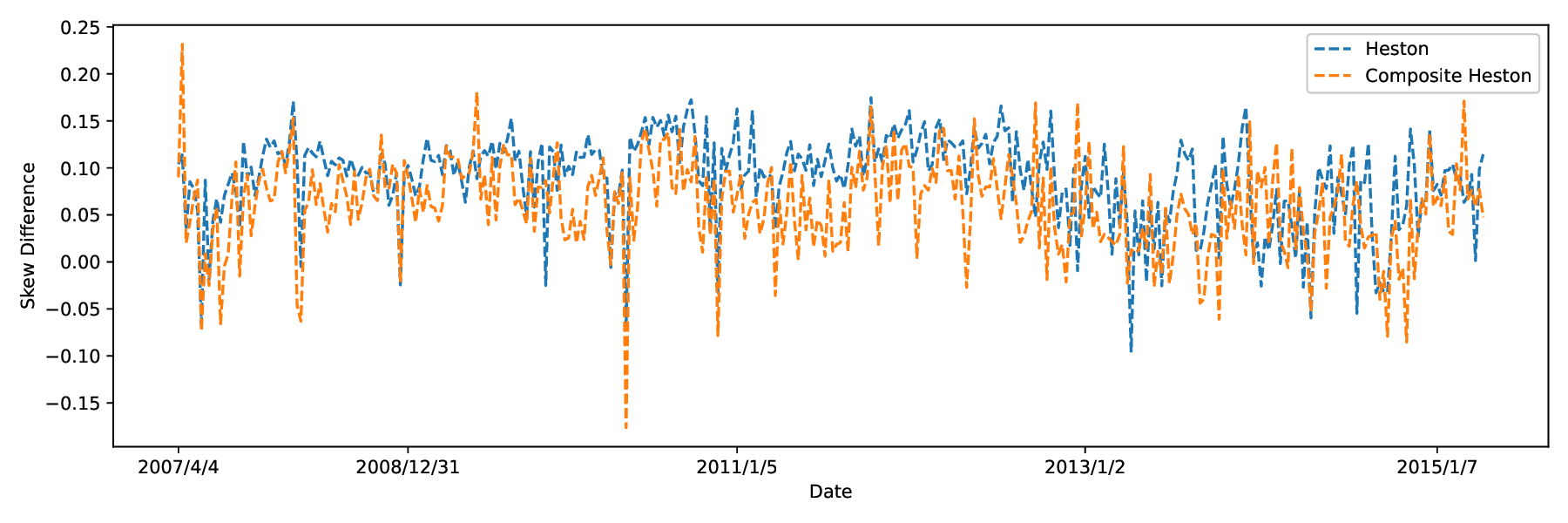}
}
\end{minipage}
\vspace{-2pt}
\footnotesize \textit{Note. } The plots compare the long-term volatility skew, defined by Eq. \eqref{Skew}, between the models (Heston and Composite Heston) and the market. The first row plots the volatility skew of Heston model, the second row plots that of Composite Heston model, and the last row compares the corresponding difference (model - market) between models and the market. Maturities of SPX options smaller than 270 days are considered short-term.\label{H IV SPX skew long term}
\end{figure}

\begin{figure}[H]
\caption{The short-term ATM implied volatility of Heston and Composite Heston in VIX markets}
 \begin{minipage}{\linewidth}
 \subfloat[][Short-term ATM IV]{
 \includegraphics[width=\linewidth]{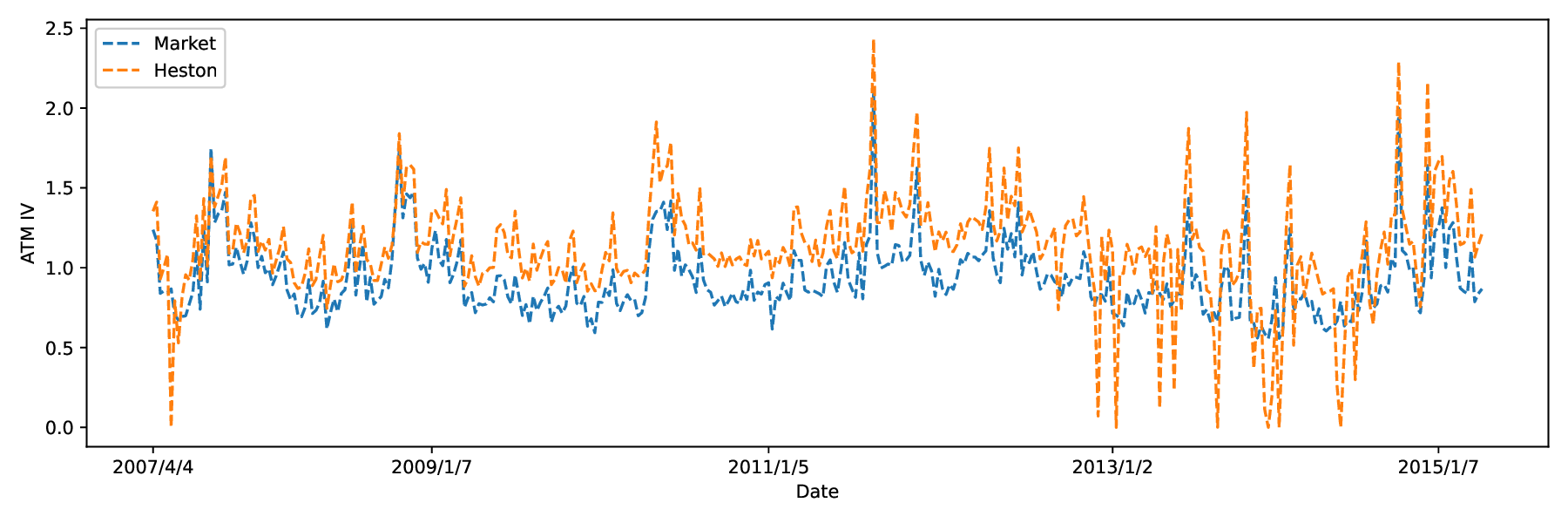}
 }
 \end{minipage} \vspace{-1em}
  \begin{minipage}{\linewidth}
 \subfloat[][Short-term ATM IV]{
 \includegraphics[width=\linewidth]{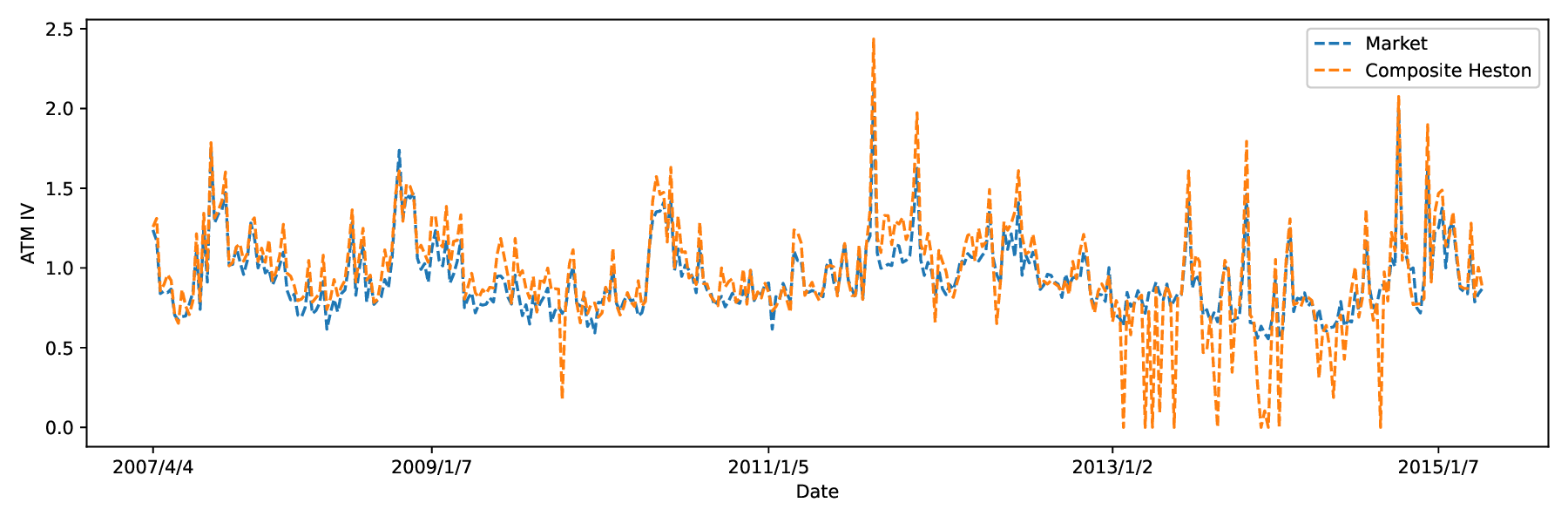}
}
 \end{minipage} \vspace{-1em}
 \begin{minipage}{\textwidth}
\subfloat[][ATM IV difference]{
\includegraphics[width=\textwidth]{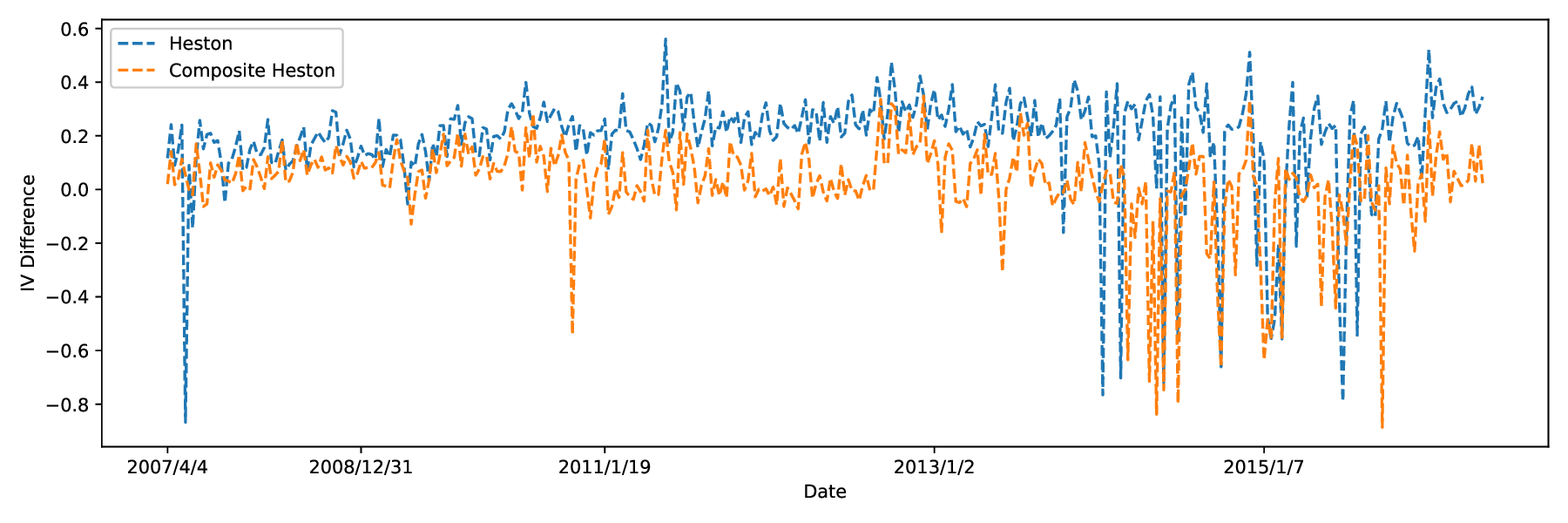}
}
\end{minipage}
\vspace{-2pt}
\footnotesize \textit{Note. } The plots compare the short-term ATM IV, defined by Eq. \eqref{IV}, between the models (Heston and Composite Heston) and the market. The first row plots the ATM IV of Heston model, the second row plots that of Composite Heston model, and the last row compares the corresponding difference (model - market) between models and the market. Maturities of SPX options smaller than 30 days are considered short-term.\label{H IV VIX}
\end{figure}

\begin{figure}[H]
\caption{The short-term volatility skew of Heston and Composite Heston in VIX markets}
 \begin{minipage}{\linewidth}
 \subfloat[][Short-term volatility skew]{
 \includegraphics[width=\linewidth]{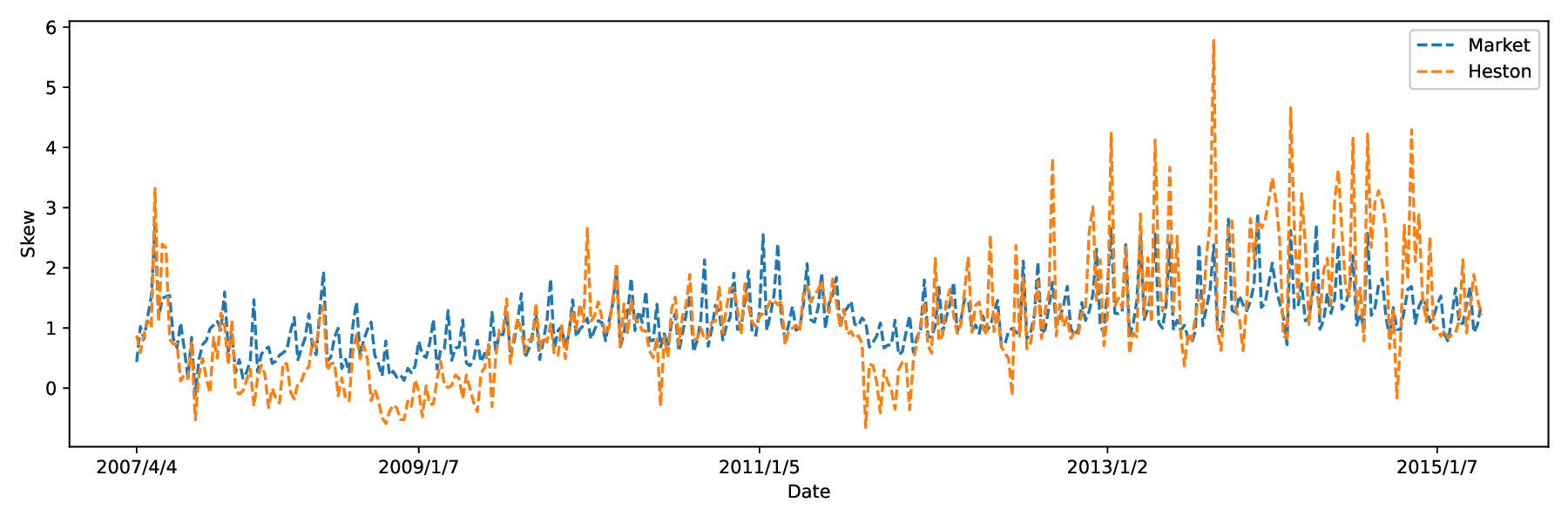}
 }
 \end{minipage} \vspace{-1em}
  \begin{minipage}{\linewidth}
 \subfloat[][Short-term volatility skew]{
 \includegraphics[width=\linewidth]{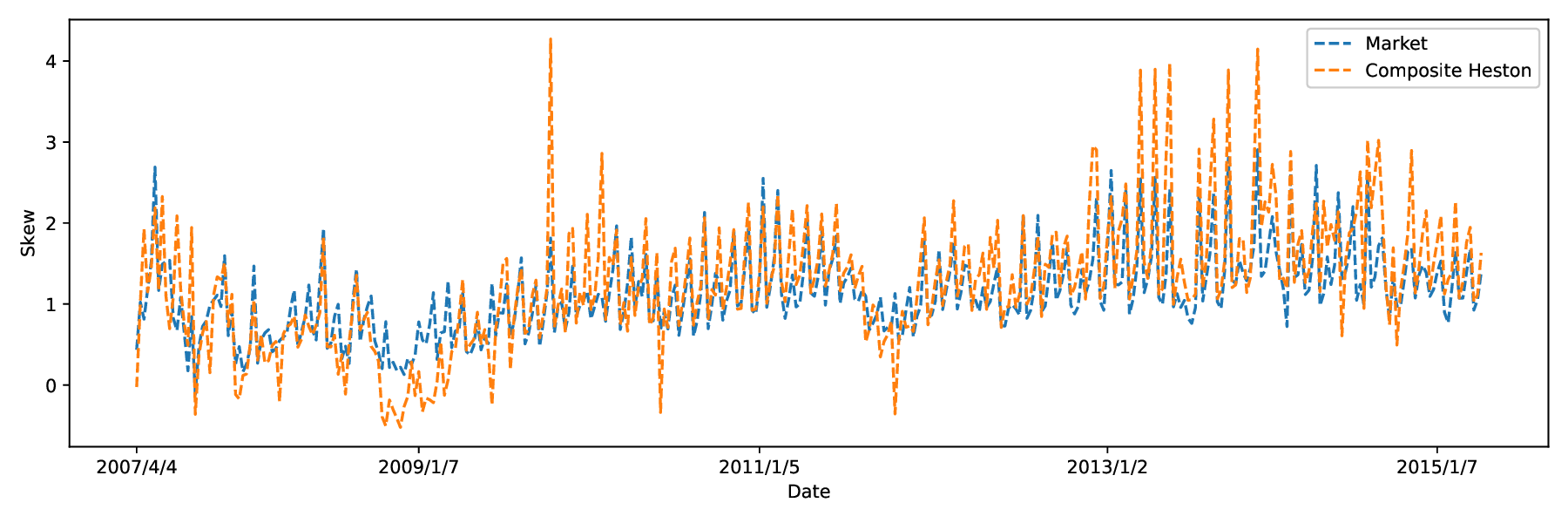}
}
 \end{minipage} \vspace{-1em}
 \begin{minipage}{\textwidth}
\subfloat[][Volatility skew difference]{
\includegraphics[width=\textwidth]{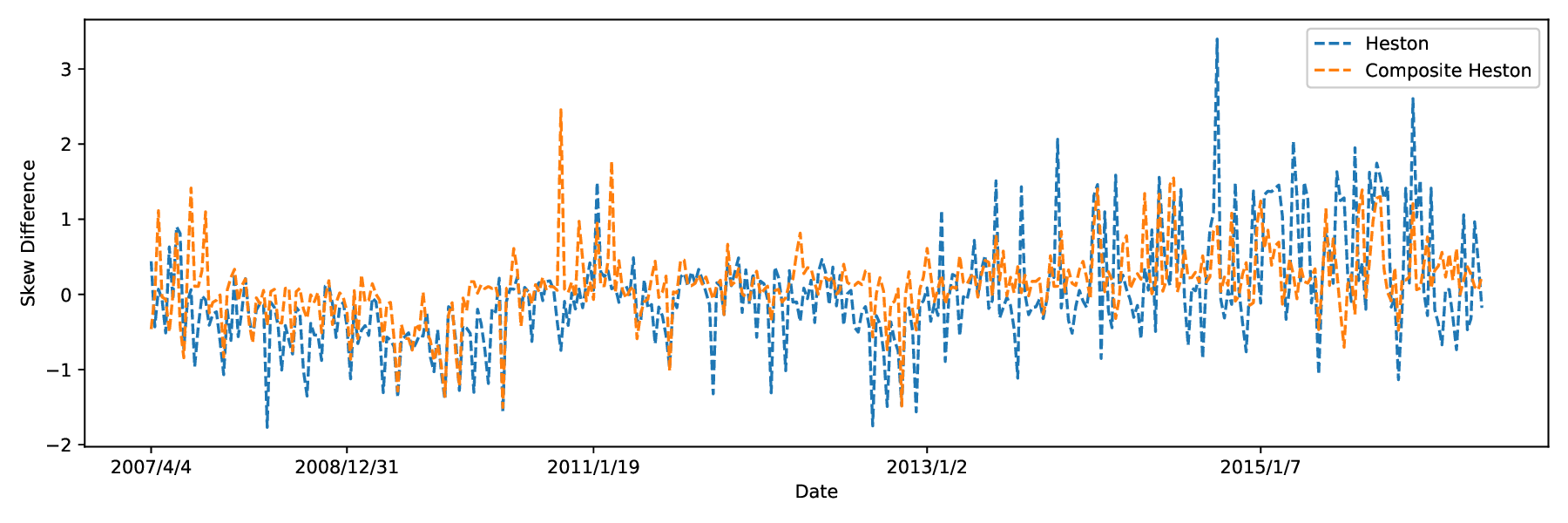}
}
\end{minipage}
\vspace{-2pt}
\footnotesize \textit{Note. } The plots compare the short-term volatility skew, defined by Eq. \eqref{Skew}, between the models (Heston and Composite Heston) and the market. The first row plots the volatility skew of Heston model, the second row plots that of Composite Heston model, and the last row compares the corresponding difference (model - market) between models and the market. Maturities of SPX options smaller than 30 days are considered short-term.\label{H IV VIX Skew}
\end{figure}

\begin{figure}[H]
\caption{The long-term ATM implied volatility of Heston and Composite Heston in VIX markets}
 \begin{minipage}{\linewidth}
 \subfloat[][Long-term ATM IV]{
 \includegraphics[width=\linewidth]{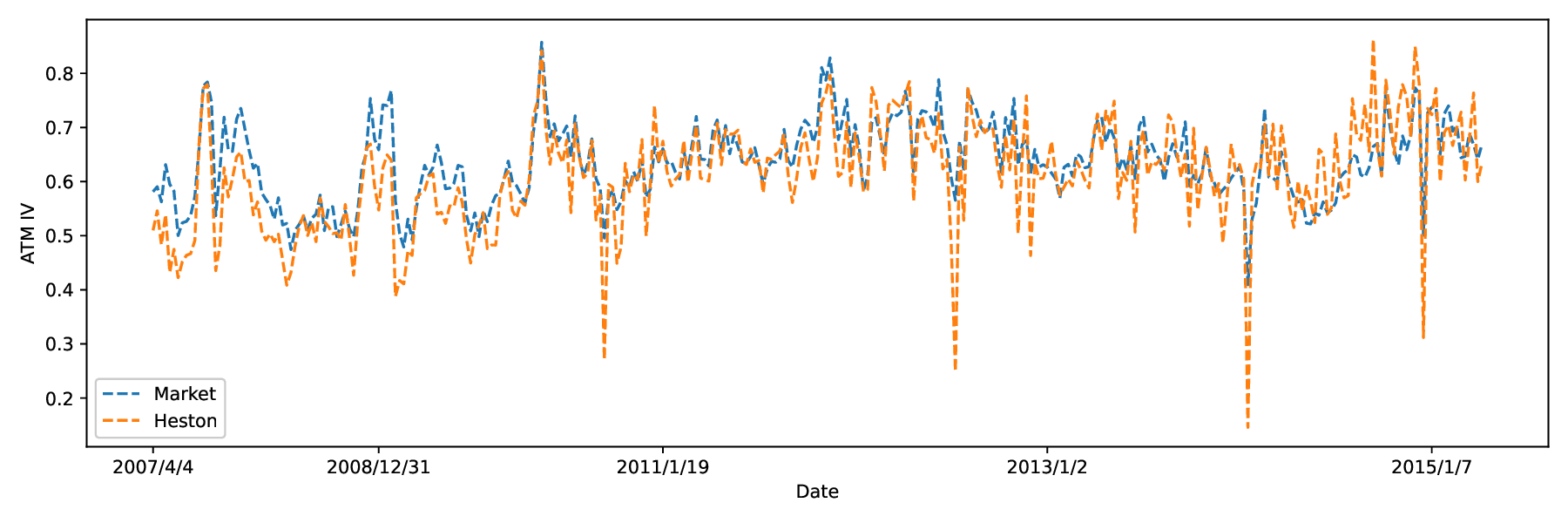}
 }
 \end{minipage} \vspace{-1em}
  \begin{minipage}{\linewidth}
 \subfloat[][Long-term ATM IV]{
 \includegraphics[width=\linewidth]{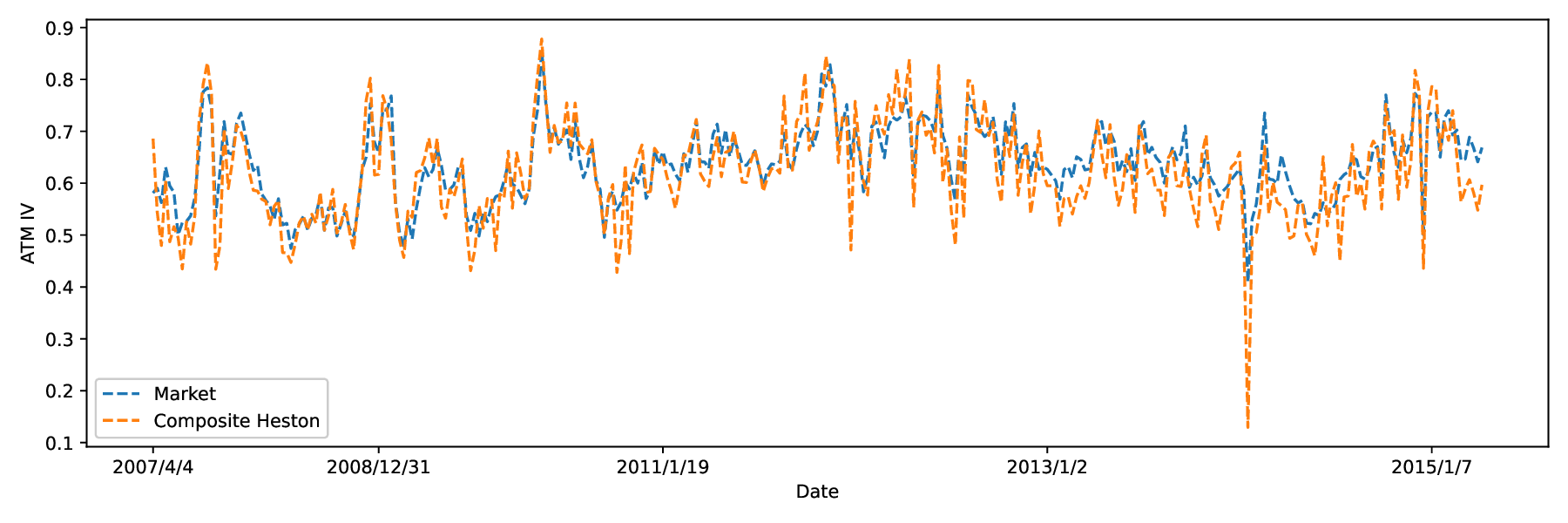}
}
 \end{minipage} \vspace{-1em}
 \begin{minipage}{\textwidth}
\subfloat[][ATM IV difference]{
\includegraphics[width=\textwidth]{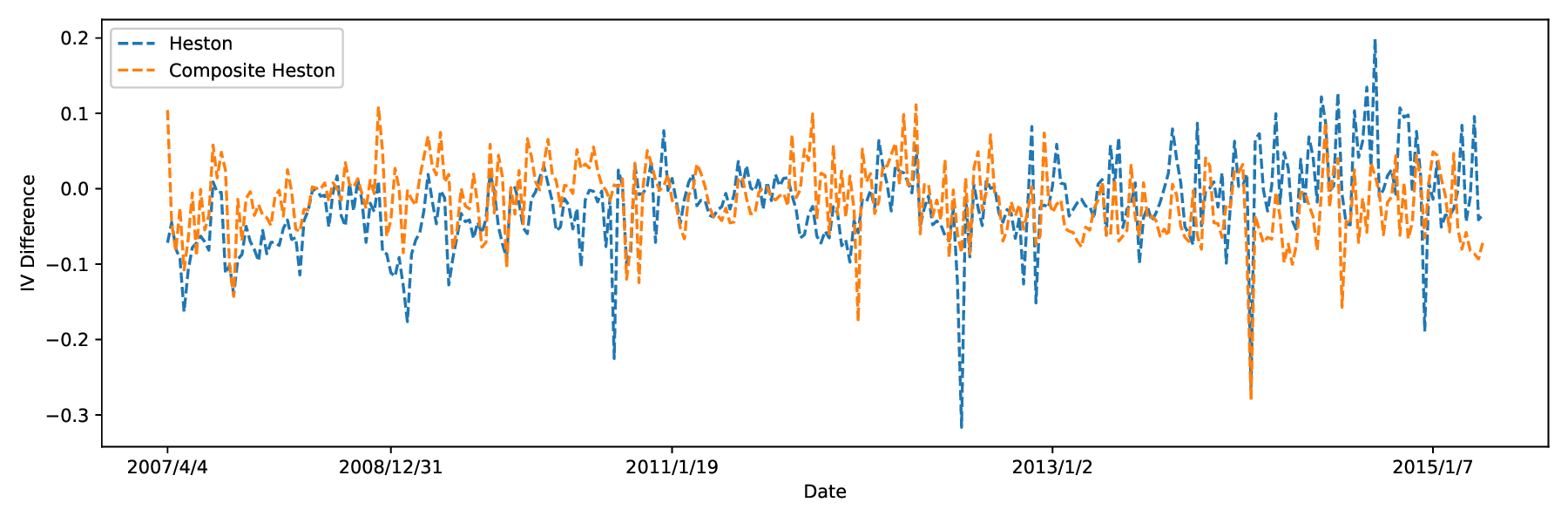}
}
\end{minipage}
\vspace{-2pt}
\footnotesize \textit{Note. } The plots compare the long-term ATM IV, defined by Eq. \eqref{IV}, between the models (Heston and Composite Heston) and the market. The first row plots the ATM IV of Heston model, the second row plots that of Composite Heston model, and the last row compares the corresponding difference (model - market) between models and the market. Maturities of SPX options smaller than 270 days are considered short-term.\label{H IV VIX long term}
\end{figure}

\begin{figure}[H]
\caption{The long-term volatility skew of Heston and Composite Heston in VIX markets}
 \begin{minipage}{\linewidth}
 \subfloat[][Long-term volatility skew]{
 \includegraphics[width=\linewidth]{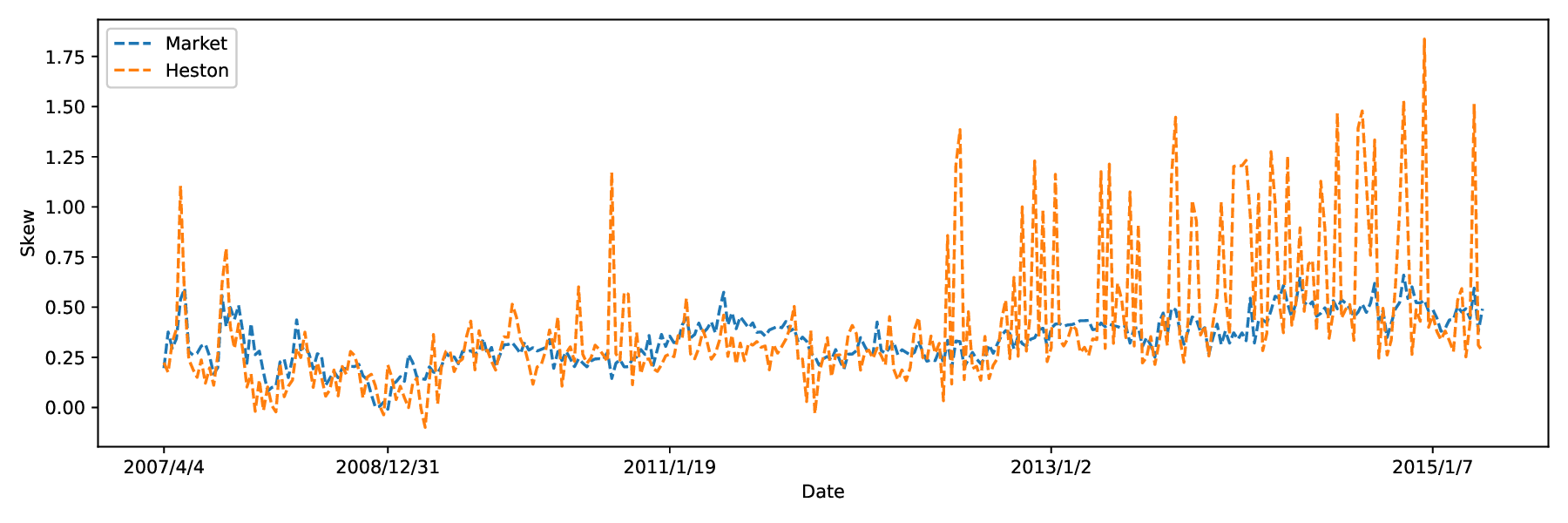}
 }
 \end{minipage} \vspace{-1em}
  \begin{minipage}{\linewidth}
 \subfloat[][Long-term volatility skew]{
 \includegraphics[width=\linewidth]{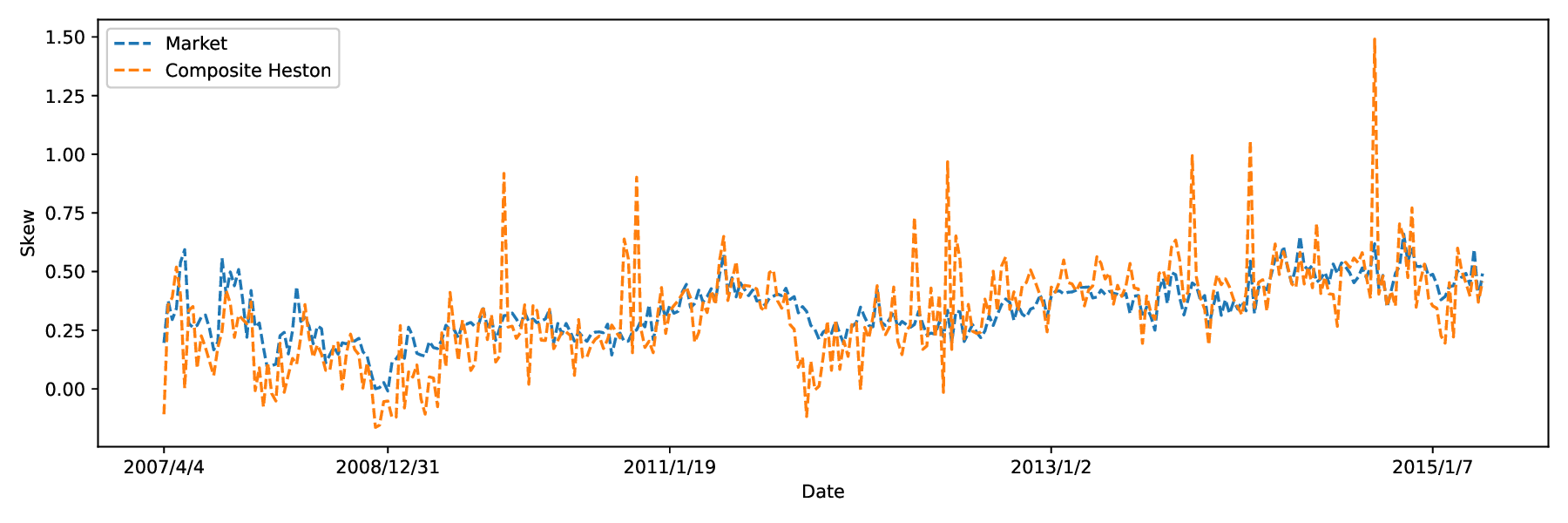}
}
 \end{minipage} \vspace{-1em}
 \begin{minipage}{\textwidth}
\subfloat[][Volatility skew difference]{
\includegraphics[width=\textwidth]{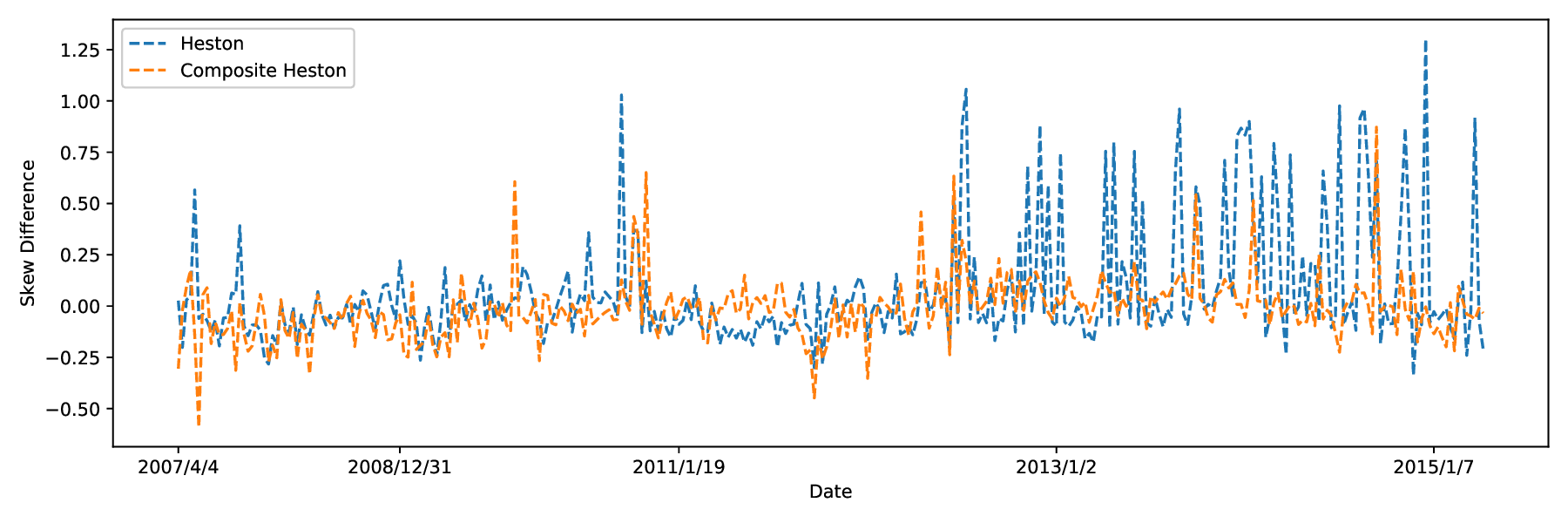}
}
\end{minipage}
\vspace{-2pt}
\footnotesize \textit{Note. } The plots compare the long-term volatility skew, defined by Eq. \eqref{Skew}, between the models (Heston and Composite Heston) and the market. The first row plots the volatility skew of Heston model, the second row plots that of Composite Heston model, and the last row compares the corresponding difference (model - market) between models and the market. Maturities of SPX options smaller than 270 days are considered short-term.\label{H IV VIX skew long term}
\end{figure}

\subsection{Robustness Test of Skew}
\label{robustness}

To assess the robustness of the joint calibration to implied volatility characteristics, 
we compute the volatility skew using alternative moneyness ranges compared to those in Table~\ref{characteristics}. 
In Table~\ref{ro 1}, the skew is derived from implied volatilities with moneyness closest to 
\text{0.8} (0.7) and \text{1.1} (1.2) for the SPX (VIX) options market.
And in Table~\ref{ro 2}, the corresponding moneyness ranges are \text{0.6} (0.7) and \text{1.3} (1.8).

\begin{table}[H]
\caption{Short-term and long-term calibration accuracy of implied volatility characteristics\label{ro 1} (robust test I)}
\begin{center}
    \begin{tabular}{ccccc}
         & Heston & Composite Heston & JH & Composite JH \\\hline
  \multicolumn{5}{c}{Panel A: S\&P 500 options}\\
   \multicolumn{5}{l}{Panel A.1: Daily RMSE}\\\hline
    Short-term Skew & 0.4157 & 0.3446 &  0.1882 & 0.1237\\
    Long-term Skew & 0.1104 & 0.0828 & 0.0532 & 0.0695\\
     \multicolumn{5}{l}{Panel A.2: Sample RMSE}\\\hline
     Short-term Skew & 0.4114 & 0.3482 &  0.1841 & 0.1266\\
     Long-term Skew &  0.1087 &  0.0818 & 0.0523 & 0.0681 \\\hline
    \multicolumn{5}{c}{Panel B: VIX options}
         \\
    \multicolumn{5}{l}{Panel B.1: Daily RMSE}\\\hline
       Short-term Skew & 1.1024 & 0.7245 & 0.3499 & 0.2289\\
        Long-term Skew & 0.2874 & 0.1785 & 0.1083 & 0.0805\\
       \multicolumn{5}{l}{Panel B.2: Sample RMSE}\\\hline
     Short-term Skew & 1.1024 & 0.7245 & 0.3499 & 0.2289\\
     Long-term Skew & 0.3005 & 0.1810 &  0.0985 & 0.0817\\\hline
    \end{tabular}
    \end{center}
    {\textit{Note.} This table evaluates the calibration accuracy of short-term and long-term volatility skew for four models separately for SPX and VIX options. Panel A presents results for SPX options (A.1 = Daily RMSE; A.2 = Sample RMSE), and Panel B presents results for VIX options (B.1 = Daily RMSE; B.2 = Sample RMSE). Volatility skew is defined as the slope of two implied volatilities (Equation \eqref{Skew})—for SPX options, moneyness is closest to 0.8 and 1.1; for VIX options, moneyness is closest to 0.7 and 1.2. Maturities of SPX (VIX) options smaller than 30 (30) days are considered short-term, and larger than 270 (90) days are considered long-term. Daily RMSE is computed by the root mean square of daily mean absolute errors, and sample RMSE takes the root mean square error over all qualified option maturities.}
\end{table}

\begin{table}[H]
\caption{Short-term and long-term calibration accuracy of implied volatility characteristics\label{ro 2}(robust test II)}
   \begin{tabular}{ccccc}
         & Heston & Composite Heston & JH & Composite JH \\\hline
  \multicolumn{5}{c}{Panel A: S\&P 500 options}\\
   \multicolumn{5}{l}{Panel A.1: Daily RMSE}\\\hline
    Short-term Skew & 0.4350 & 0.3618 &  0.1851 & 0.1081\\
    Long-term Skew & 0.0944 & 0.0721 & 0.0397 & 0.0572\\
     \multicolumn{5}{l}{Panel A.2: Sample RMSE}\\\hline
     Short-term Skew & 0.4373 & 0.3636 &  0.1796 & 0.1102\\
     Long-term Skew &  0.0930 &  0.0714 & 0.0392 & 0.0560 \\\hline
    \multicolumn{5}{c}{Panel B: VIX options}
         \\
    \multicolumn{5}{l}{Panel B.1: Daily RMSE}\\\hline
       Short-term Skew & 0.6298 & 0.4135 & 0.2612 & 0.1597\\
        Long-term Skew & 0.2269 & 0.1357 & 0.0961 & 0.0602\\
       \multicolumn{5}{l}{Panel B.2: Sample RMSE}\\\hline
     Short-term Skew & 0.6298 & 0.4135 & 0.2612 & 0.1597\\
     Long-term Skew & 0.2390 & 0.1351 &  0.1044 & 0.0594\\\hline
    \end{tabular}
    
    {\textit{Note.} This table evaluates the calibration accuracy of short-term and long-term volatility skew for four models separately for SPX and VIX options. Panel A presents results for SPX options (A.1 = Daily RMSE; A.2 = Sample RMSE), and Panel B presents results for VIX options (B.1 = Daily RMSE; B.2 = Sample RMSE). Volatility skew is defined as the slope of two implied volatilities (Equation \eqref{Skew})—for SPX options, moneyness is closest to 0.6 and 1.3; for VIX options, moneyness is closest to 0.7 and 1.8. Maturities of SPX (VIX) options smaller than 30 (30) days are considered short-term, and larger than 270 (90) days are considered long-term. Daily RMSE is computed by the root mean square of daily mean absolute errors, and sample RMSE takes the root mean square error over all qualified option maturities.}
\end{table}

\bibliography{references}
\bibliographystyle{plainnat}
\end{document}